%% file: main.tex
\documentclass[11pt,letterpaper]{article}
\usepackage{shortcuts/style}
\usepackage{shortcuts/shortcuts}
\RenewDocumentCommand{\leanok}{ O{\lastlabelname} }{}

\title{Temporal Conductance and Bounds on the Voter Model for Dynamic Networks}
\author{%
Tatiana Rocha Avila\footnote{Goethe University Frankfurt, Germany}
\and Holger Dell\footnote{IT University of Copenhagen, Denmark}
\and John Lapinskas\footnote{University of Bristol, UK}
}

\begin{document}
\maketitle
\input{sections/0_abstract}
\input{sections/1_introduction}
\input{sections/2_preliminaries}
\input{sections/3_main_result}
\input{sections/4_lower_bound}

\bibliographystyle{plainurl}
\bibliography{main.bib}

\end{document}

%% file: sections/0_abstract.tex
\begin{abstract}
 The voter model is a classical stochastic process that models how opinions might spread through a network: at each step, every node lazily adopts the opinion of a random neighbour; eventually all nodes share the same opinion (\emph{consensus}).
  Stronger connectivity should yield faster consensus.
  Berenbrink, Giakkoupis, Kermarrec, and Mallmann-Trenn (ICALP 2016) make this precise via the network's conductance: if the network has $m$ edges, minimum degree $d_{\min}$, and conductance at least $\phi$, then the voter model reaches consensus in expected $O(m/(d_{\min}\phi))$ steps. Their results extend to dynamic networks with fixed vertex degrees by considering the network's conductance at each time step.

  We introduce \emph{temporal conductance}~$\Phi$, a more general connectivity measure for dynamic networks.
  Unlike static conductance, which collapses to~$0$ whenever some snapshot is disconnected, $\Phi$ captures connectivity through edges that appear at different times.
  We generalise the results of Berenbrink et al.\ from static conductance to temporal conductance, showing that the expected consensus time of the standard voter model is at most $O(m/(d_{\min}\Phi))$. Moreover, we prove that this bound is tight up to constant factors.
  We expect temporal conductance to be a useful primitive for analysing other dynamics on temporal networks, and potentially time-inhomogeneous Markov chains more generally.
\end{abstract}

%% file: sections/1_introduction.tex
\section{Introduction}

\begin{figure}[t]
  \centering
  \begin{tikzpicture}[
      x=1cm,
      y=1cm,
      >=Latex,
      node/.style={circle, draw=black, thick, minimum size=4.8mm, inner sep=0pt},
      opinA/.style={node, fill=blue!65!cyan},
      opinB/.style={node, fill=orange!85!yellow},
      undir/.style={black!30, line width=0.6pt},
      choice/.style={->, thick, black},
      lbl/.style={font=\scriptsize}
    ]

    \begin{scope}[shift={(-4.15,0)}]

      \begin{scope}[shift={(0,0)}]
        \node[opinB] (a) at (-1.0,  0.55) {};
        \node[opinA] (b) at ( 0.0,  0.95) {};
        \node[opinB] (c) at ( 1.0,  0.55) {};
        \node[opinA] (d) at ( 1.0, -0.55) {};
        \node[opinB] (e) at ( 0.0, -0.95) {};
        \node[opinA] (f) at (-1.0, -0.55) {};

        \draw[undir] (a)--(b)--(c)--(d)--(e)--(f)--(a);
        \draw[undir] (b)--(e);

        \draw[choice] (a) -- (f);
        \draw[choice] (b) -- (a);
        \draw[choice] (c) -- (b);
        \draw[choice] (d) -- (c);
        \draw[choice] (e) -- (f);
        \draw[choice] (f) to[out=160,in=200,looseness=8] (f);

        \node[lbl] at (0,-1.5) {chose neighbor to copy};
      \end{scope}

      \draw[->, very thick] (2.05,0) -- (3.15,0);
      \node[lbl] at (2.6,0.23) {update};

      \begin{scope}[shift={(5.2,0)}]
        \node[opinA] (a2) at (-1.0,  0.55) {};
        \node[opinB] (b2) at ( 0.0,  0.95) {};
        \node[opinA] (c2) at ( 1.0,  0.55) {};
        \node[opinB] (d2) at ( 1.0, -0.55) {};
        \node[opinA] (e2) at ( 0.0, -0.95) {};
        \node[opinA] (f2) at (-1.0, -0.55) {};

        \draw[undir] (a2)--(b2)--(c2)--(d2)--(e2)--(f2)--(a2);
        \draw[undir] (b2)--(e2);

        \node[lbl] at (0,-1.5) {after update};
      \end{scope}

    \end{scope}

  \end{tikzpicture}
  \caption{\label{fig:lazy-synchronous-voter}An update step in the standard voter model with two opinions (that is, all vertices act independently and synchronously). In the left panel, each vertex points to a neighbour that it chooses randomly; a self-loop means that the vertex is ``lazy'' and keeps its current opinion; this happens with probability~$\tfrac12$. In the right panel, every vertex has adopted the opinion of the vertex that it pointed to.}
\end{figure}

The \emph{voter model}~\cite{CliffordS73,HolleyL75} is a model of how opinions spread through a social network, see \cref{fig:lazy-synchronous-voter}.
It is most interesting not because it is realistic---it is too simplistic for that---but because it resists rigorous mathematical analysis despite its simplicity.
If we cannot analyze the voter model, then we are unlikely to be able to analyze realistic models of opinion dynamics, which are more complex.
The voter model serves as a testbed for methods in network dynamics more generally, and has been the subject of a large body of work in the mathematical physics~\cite{CastellanoFL09} and probability~\cite{Liggett1999} communities.

\paragraph{Consensus time.}
One of the most fundamental questions about the voter model is this: How long does it take for the process to reach consensus, that is, for all vertices to have the same opinion?
Results span a wide range of graph families: on the complete graph~$K_n$ the expected consensus time is $\Theta(n)$~\cite{Liggett1999}; on random regular graphs it is also $\Theta(n)$~\cite{DBLP:journals/siamdm/CooperFR09}; on heterogeneous (configuration-model) graphs it depends on the degree distribution~\cite{SoodRedner2005}; and on subcritical scale-free graphs the exponent varies in a rich phase diagram~\cite{FernleyOrtgiese2023}.

An important result by Berenbrink et al.~\cite{DBLP:conf/icalp/BerenbrinkGKM16} shows that the expected consensus time of the voter model is captured by the \emph{conductance} of the underlying graph, which is a measure of how well-connected the graph is. In particular, they show that the consensus time is at most $O(m/(d_{\min}\phi))$, where~$m$ is the number of edges in the graph, $d_{\min}$ is the minimum degree of any vertex, and~$\phi$ is the conductance of the graph. They also show that there are graphs where this upper bound is asymptotically tight.

\paragraph{Voter model in temporal graphs.}
When it comes to real-world social networks, the underlying graph is not static; it changes over time. For example, people make new friends, they interact with different subsets of people at different times, and during pandemics, lockdowns may be imposed or lifted. This motivates the study of the voter model as a test-bed for network dynamics on \emph{temporal graphs $\calG$}, which are sequences of graphs that change over time: $\calG=(G_t\colon t\in\N)$. In the context of spreading processes like the voter model, this is particularly useful in settings where the graph structure is independent from the spreading process being studied and can be observed in detail, such as epidemic modelling in livestock markets~\cite{ansari2021temporal} or schools~\cite{st2024nonlinear}.

The results of~\cite{DBLP:conf/icalp/BerenbrinkGKM16} are stated for the temporal setting as well, in the setting where vertex degrees are fixed with respect to time, by considering the conductance $\phi_t$ of the graph at each time step $t$. They show that if $\sum_{t=0}^T \phi_t \ge Cm/d_{\min}$ for an appropriate constant $C$, then with probability at least $1/2$ the absorption time is at most $T$; in particular, if $\phi_t \ge \phi$ for all $t \ge 0$ then this generalises their result in the static case. However, both bounds are proved in an adaptive setting, where an adversary can choose each graph $G_t$ with full knowledge of the state of the voter model at time $t-1$, rather than the non-adaptive paradigm described above where the graph and voter model evolve independently. While their upper bound carries over to this setting immediately, their lower bound does not.

As a simple example, consider an $n$-vertex temporal graph consisting of one uniformly-random perfect matching per time step, drawn independently. It is not hard to show that such a graph has $O(\log n)$ expected absorption time in the voter model, but it is disconnected at every time step so $\sum_{t=0}^T \phi_t = 0$ for all $T$ --- that is, the results of~\cite{DBLP:conf/icalp/BerenbrinkGKM16} do not give any upper bound on the expected absorption time.

\paragraph{Our contribution.}
In this paper, we study the voter model on temporal graphs with fixed vertex degrees in the non-adaptive setting. We define a novel parameter~$\Phi(\calG)$ of \emph{temporal conductance} which measures the connectivity of such graphs.
We generalize the results of~\cite{DBLP:conf/icalp/BerenbrinkGKM16} and show that with probability at least $1/2$ the consensus time is at most $O(m/(d_{\min}\Phi(\calG)))$, and that there are temporal graphs for which this upper bound is asymptotically tight. Further, we generalise the stepwise bound of~\cite{DBLP:conf/icalp/BerenbrinkGKM16} to an interval-wise bound --- if time interval $I_i$ contains at least one time $t$ such that $G_t$ has conductance at least $\phi_i$, and $\sum_{i=0}^I \phi_i \ge Cm/d_{\min}$ for some suitable constant $C$, then we prove absorption with probability $1/2$ after $I$ intervals have passed. We prove this bound is asymptotically tight even if instead of requiring \emph{maximum} conductance at least $\phi_i$ over each interval $I_i$, we require \emph{average} conductance at least $\phi_i$.

In order to prove our results, we perform a drift analysis with a potential function inspired by~\cite{DBLP:conf/icalp/BerenbrinkGKM16}.
Dealing with temporal conductance turns out to pose significant challenges, which we overcome by splitting time into \emph{stable intervals} (during which the volume of each opinion class remains roughly constant, allowing conductance to be exploited) and
\emph{unstable intervals} (during which the volume of the minority opinion changes significantly, causing rapid decline of the potential even if conductance is poor).

\subsection{Related work}

\paragraph{Conductance in static graphs.} The \emph{conductance} of a Markov chain was defined by Jerrum and Sinclair~\cite{DBLP:conf/stoc/JerrumS88} in 1988 to prove their seminal result: counting the number of perfect matchings in bipartite graphs admits a randomized polynomial-time approximation scheme.
In order to do this, they proved that the conductance is closely tied to the Markov chain's \emph{mixing time}, and used this to show that randomly switching between perfect and near-perfect matchings in bipartite graphs is enough to approximately uniformly sample perfect matchings in polynomial time.
Sinclair and Jerrum~\cite{DBLP:journals/iandc/SinclairJ89} then developed this connection into a general framework for approximate counting and uniform generation, now central to the analysis of MCMC algorithms~\cite{levin2017markov}.
Conductance is also equivalent (up to squaring) to the spectral gap of the random walk matrix---a discrete Cheeger inequality established by Alon and Milman~\cite{DBLP:journals/jct/AlonM85} and Mihail~\cite{DBLP:conf/focs/Mihail89}---linking it to expander graph theory, spectral graph partitioning~\cite{DBLP:journals/jacm/AroraRV09}, and the mixing of reversible Markov chains more broadly~\cite{lawler1988bounds}.

\paragraph{Spreading and random walks on dynamic graphs.}
Random walks and spreading processes on dynamic graphs have been studied in various models.
Clementi et al.~\cite{DBLP:journals/siamdm/ClementiMMPS10} studied flooding time in the \emph{edge-Markovian} model---where each edge appears and disappears independently as a two-state Markov chain---and proved near-tight bounds.
Giakkoupis, Sauerwald, and Stauffer~\cite{DBLP:conf/icalp/GiakkoupisSS14} studied push-pull rumor spreading under adversarial rewiring, allowing degrees to vary by at most a constant factor, and showed that the protocol broadcasts to all vertices whenever the cumulative conductance satisfies $\sum_i \phi_i = \Omega(\log n)$, where~$\phi_i$ is the conductance at round~$i$.
Avin, Kouck{\'y}, and Lotker~\cite{DBLP:journals/rsa/AvinKL18} showed that an oblivious adversary can force exponential cover time on a simple random walk even when every snapshot is well-connected, but that the lazy random walk achieves polynomial cover and mixing time against any oblivious adversary.
Sauerwald and Zanetti~\cite{DBLP:conf/icalp/SauerwaldZ19} proved $O(n^2)$ mixing and hitting time bounds for random walks on any sequence of $d$-regular connected graphs via a local expansion argument.
Recently, Galanis, Goldberg, and Mifsud~\cite{GalanisGM26} studied random walks on dynamic graphs where edges evolve according to Glauber dynamics for the random-cluster model, and prove $O(\log n)$ mixing time in the subcritical regime, matching that of a static random regular graph.

\paragraph{Absorption time of the voter model.} It is well-known (see e.g.~\cite{liggett1985interacting}) that the voter model on static graphs is dual to a random walk model --- the $n$-opinion case on an $n$-vertex graph is dual to $n$ independent random walks, one starting on each vertex, with distinct walks coalescing into a single walk on collision. The absorption time of the voter model then corresponds to the time at which only one walk remains. As such, upper bounds on absorption time are often proved in terms of random walks and their properties. For example, Hassin and Peleg prove~\cite{hassin2001distributed} that in static graphs the absorption time is $O(t_{\textrm{meet}}\log n)$, where $t_{\textrm{meet}}$ is the expected time for two independent random walks to meet on the base graph; this was recently~\cite{10.1145/3576900} sharpened to $O(t_{\textrm{meet}}(1+\sqrt{t_{\textrm{mix}}/t_{\textrm{meet}}}\log n))$, where $t_{\textrm{mix}}$ is the mixing time of a single random walk. Meanwhile, Cooper et al.~\cite{DBLP:journals/siamdm/CooperEOR13} bound the expected absorption time in terms of the second-largest eigenvalue of the random walk's transition matrix and its stationary distribution. None of these bounds are directly comparable to the $O(m/(d_{\min} \phi))$ bound from~\cite{DBLP:conf/icalp/BerenbrinkGKM16}; see their introduction for a more detailed discussion.

There is one other general upper bound on absorption in terms of conductance. In~\cite{DBLP:conf/icalp/BerenbrinkGKM16}, in addition to their main result of expected absorption time $O(m/(d_{\min\phi}))$ (the one stated in their abstract), Berenbrink et al.\ also prove expected absorption time $O(n\log n/\phi^2)$. This is a strictly weaker result for the important case of regular graphs, but can be stronger for graphs which are far from regular and have high conductance. We do not generalise this result to the non-adaptive setting, and consider it an interesting open question --- in~\cite{DBLP:conf/icalp/BerenbrinkGKM16} the result falls out relatively easily from their main proof, but in our setting this approach breaks.

\paragraph{Other notions of temporal conductance.}
DiTursi, Ghosh, and Bogdanov~\cite{DBLP:conf/icdm/DiTursiGB17} define a temporal conductance for community detection in dynamic networks by aggregating cut and volume over a time interval with a normalisation factor; their definition is motivated by data-mining considerations and differs from ours.
Berenbrink et al.~\cite{DBLP:conf/icalp/BerenbrinkGKM16} and Giakkoupis et al.~\cite{DBLP:conf/icalp/GiakkoupisSS14} use per-round conductances as input parameters, with bounds in terms of their minimum or cumulative sum.
Our~$\Phi(\calG)$ instead optimises the window length automatically, yielding a single intrinsic parameter of the temporal graph.

\subsection{The voter model on temporal graphs}
As discussed earlier, the dynamic networks we consider in this paper are \emph{temporal graphs with fixed degrees}, which are arbitrary sequences of graphs on a common vertex set, with the only restriction being that the degree of each vertex does not depend on time.
The fixed degree assumption makes stochastic processes on the temporal graph much more well-behaved; for example, the lazy random walk becomes \emph{strongly ergodic} and has a unique stationary distribution.
The same assumption was also made by Berenbrink et al.~\cite{DBLP:conf/icalp/BerenbrinkGKM16}, because without it, the expected consensus time can be exponential in the number of vertices~\cite[Observation~1]{DBLP:conf/icalp/BerenbrinkGKM16}.
We give a formal definition next.
\begin{definition}[Temporal graph with fixed degrees]\label{def:temporal-graph}\leanok%
  Let $\calG$ be a sequence $(G_t\colon t\geq0)$ of undirected graphs on a common vertex set $V(\calG)$, where $G_t$ is the graph at time~$t$. We say $\calG$ is a \emph{temporal graph with fixed degrees} if the degree of each vertex $v\in V(\calG)$ in $G_t$ does not depend on~$t$. We denote this common degree by $d(v)$, and the (time-independent) number of edges of~$\calG$ by $m$, so that $m=\tfrac12\sum_{v\in V(\calG)} d(v)$ holds.
  For convenience, we assume $d(v)\geq1$ for all $v\in V(\calG)$, that is, there are no isolated vertices in $\calG$.
\end{definition}
We remark that the degree of each vertex is \emph{fixed} across time, but not necessarily \emph{constant}, that is, the degree of a vertex may depend on the number~$n$ of vertices but not on~$t$.

In the voter model, each vertex holds an opinion and repeatedly copies the opinion of a randomly chosen neighbour in the current graph; the temporal graph thus governs which interactions are available at each time step.

\begin{definition}[Standard voter model]\label{def:voter-model}\leanok%
  Let $\calG=(G_t\colon t\ge0)$ be a temporal graph, let $\kappa$ be a positive integer, and let $\sigma_0\colon V(G)\to\set{0,\dots,\kappa-1}$. The \emph{standard $\kappa$-opinion voter model} on~$\calG$ is a stochastic process $(\xi_t\colon t\ge0)$, where $\xi_t\colon V(\calG)\to\set{0,\dots,\kappa-1}$ is a random function that assigns an opinion to each vertex at time~$t$, and $\xi_0=\sigma_0$ holds deterministically for the initial state.

  At each time step~$t$, we sample $\xi_{t+1}(v)$ as follows: every vertex $v\in V(\calG)$ independently and synchronously chooses a neighbour $w$ in~$G_t$ uniformly at random, and adopts the opinion~$\xi_t(w)$ with probability~$1/2$; otherwise, $v$ keeps its current opinion~$\xi_t(v)$. This process continues until all vertices share the same opinion, which is called \emph{consensus}.
  The \emph{consensus time} is the first time~$t$ such that $\xi_t$ is a constant function.
\end{definition}

Note that $(\xi_t\colon t\ge0)$ is a time-inhomogeneous Markov chain on the state space of all functions $V(\calG)\to\set{0,\dots,\kappa-1}$, and that the absorbing states are precisely the constant functions.

Although our results extend to arbitrary~$\kappa$, the proof reduces to the two-opinion case\label{def:voter-model-two-opinion}\leanok[def:voter-model-two-opinion], so we mostly focus on $\kappa=2$.
In this setting, we write
\(A_t\)
for the set~\(\xi_t^{-1}(0)\) of vertices with opinion~$0$ at time~$t$, and we write
$\overline{A_t}\coloneqq V(\calG)\setminus A_t$ for the complement.
Consensus is reached when $A_t=\emptyset$ or $A_t=V(\calG)$ holds.

\subsection{Temporal conductance}\label{sec:conductance}

Before introducing a temporal analogue of conductance, we begin by recalling the standard definition of conductance for a static graph.
For a graph~$G$ with $m$~edges and a vertex set $S\subseteq V(G)$, we define the \emph{volume} of~$S$ and the \emph{conductance} of~$G$, via
\begin{align}\label{eq:volume}\leanok%
  \Vol(S) \coloneqq \sum_{v\in S} d(v)\,,
  &&
  \phi(G) \coloneqq \min_{\substack{S \subseteq V(G)\\0<\Vol(S)\leq m}} \frac{e_G(S,\overline{S})}{\Vol(S)}\,.
\end{align}
Here $\Vol(S)$ counts each internal edge of~$S$ twice and each edge between $S$ and~$\overline{S}$ once, so it measures the edge mass incident to~$S$; and $e_G(S,\ov{S})$ denotes the number of edges of~$G$ crossing the cut.
We remark that $m=\tfrac12\Vol(V(G))$ holds, so the condition $\Vol(S)\leq m$ is equivalent to $\Vol(S)\leq\tfrac12\Vol(V(G))$.
Among other things, the conductance determines how fast opinions can propagate across the cut in the voter model.

Now let $\calG=(G_t\colon t\ge0)$ be a temporal graph with fixed degrees.
The volume of a set $S\subseteq V(\calG)$ is time-invariant, but the edges crossing a cut are not.
For disjoint subsets $S,N\subseteq V(\calG)$, we write $e_t(S,N)$ for the number of edges of $G_t$ between $S$ and~$N$, and we abbreviate $e_t(v,N)\coloneqq e_t(\{v\},N)$ for a single vertex~$v$.
The \emph{conductance of the set~$S$ at time~$t$} is:
\begin{align}\label{eq:set-conductance-at-time}\leanok%
  \phi^t(S) \coloneqq \frac{e_t(S,\overline{S})}{\Vol(S)}
  \,.
\end{align}
The \emph{conductance of~$\calG$ on an integer interval $[a,b]\subseteq\N$} is defined by
\begin{align}\label{eq:conductance-on-interval}\leanok%
  \phi^{[a,b]}(\calG) \coloneqq \min_{\substack{S \subseteq
  V\\ 0<\Vol(S)\leq m}} \phi^{[a,b]}(S)
  &&\text{where}&&
  \phi^{[a,b]}(S) \coloneqq \max_{a\le t \le b} \phi^t(S)\,.
\end{align}
For $a=b=t$, we have $\phi^{[a,b]}(\calG)=\phi(G_t)$.
What $\phi^{[a,b]}(S)$ captures for $a<b$ is the \emph{best} conductance of $S$ across the interval $[a,b]$:
Indeed $\phi^{[a,b]}(S)\ge\phi$ holds if and only if $S$ has conductance $\phi^t(S)\ge\phi$ for at least one $t\in[a,b]$.
Moreover, $\phi^{[a,b]}(\calG)\ge\phi$ means that every set $S$ with $0<\Vol(S)\leq m$ has conductance $\phi^t(S)\ge\phi$ for at least one $t\in[a,b]$.
Note already here that $\phi^{[a,b]}(\calG)$ can be large even if $\phi(G_t)=0$ holds for all $t\in[a,b]$.

Then for any $\Delta \ge 1$, we define the \emph{$\Delta$-window conductance of~$\calG$} by
\begin{align}\label{eq:delta-window-conductance}\leanok%
  \phi^\Delta(\mathcal G) \coloneqq \min_{t\in\N}\phi^{[t,
  t+\Delta-1]}(\calG).
\end{align}
Now $\phi^\Delta(\mathcal G)\ge\phi$ means that, for every time interval of length $\Delta$, every set $S$ with $0<\Vol(S)\leq m$ has at least one ``good step'', that is, it has conductance at least $\phi$ in at least one graph $G_t$ in that interval.

Since $\phi^\Delta(\mathcal G)\le 1$ holds for all $\Delta$, we have $\phi^\Delta(\mathcal G)/\Delta\to 0$ as $\Delta\to\infty$, so the supremum of $\Delta\mapsto\phi^\Delta(\mathcal G)/\Delta$ is attained at some positive integer $\Delta^\ast$.
We can therefore define the \emph{conductance~$\Phi(\calG)$ of the temporal graph~$\calG$} via
\begin{align}\label{eq:temporal-conductance}\leanok%
  \Phi(\calG) \coloneqq \max_{\Delta \ge 1} \frac{\phi^\Delta(\calG)}{\Delta}\,.
\end{align}
If the temporal graph is static, that is, if we have $G_1=G_2=\cdots=G$, then $\Delta^\ast=1$ and thus $\Phi(\calG)=\phi(G)$, so the definitions coincide. The motivation for dividing by $\Delta$ in~\eqref{eq:temporal-conductance} is that the one ``good step'' in the length-$\Delta$ interval may be all we get before $S$ has been changed beyond recognition, and so we must amortise its effect over the entire interval length. At first this may seem like a very pessimistic assumption --- however, in the next section we provide a matching lower bound even in the case where we have lower bounds on the \emph{average} conductance within an interval (rather than the maximum), and so we argue that it is justified.

\paragraph{Example.} Recall our example from earlier --- suppose $\calG = (M_1,M_2,\dots)$, where each $M_i$ is a perfect matching drawn independently and uniformly at random. Then with high probability over the $M_i$'s, $\Phi(\calG) \ge \phi^3(\calG)/3 = \Omega(1)$, even though $\phi(G_t) = 0$ for all $t$. Effectively, for $\Phi^\Delta(\calG)$ to be high, we require that every set $S$ will be well-cut by \emph{some} graph in a length-$\Delta$ window, but this ``good step'' may differ within the window from one set to the next.

\subsection{Our Model and Results}

We prove bounds on the expected consensus time for the \emph{standard voter model} in a temporal undirected graph whose vertex degrees are fixed by using our notion of temporal conductance.
\begin{restatable}[Upper bound]{mtheorem}{UpperBoundf}\label{thm:total-consensus-time}\leanok[thm:total-consensus-time]
  There exists a constant $b'>0$ such that the following holds. Consider the standard voter model on an $n$-vertex temporal graph $\calG$ with fixed vertex degrees, $m$ edges, and minimum degree $d_{\min}$, with arbitrary initial state. Then with probability at least $1/2$, the consensus time is at most $b'm/(d_{\min}\Phi(\calG))$.
\end{restatable}

\Cref{thm:total-consensus-time} is our simple bound advertised in the introduction.
It is a consequence of the following more general statement, which works with an arbitrary partition of time into windows of lengths $\Delta_0,\Delta_1,\dots$ and the conductance $\phi^{I_j}(\calG)$ attained on each window. One recovers \cref{thm:total-consensus-time} by choosing the windows that realise~$\Phi(\calG)$.
\begin{restatable}[Upper Bound]{theorem}{UpperBound}\label{thm:upper-bound}\leanok[thm:upper-bound]%
  There exists a constant $b>0$ such that the following holds. Consider the standard voter model on a temporal graph $\calG$ with fixed vertex degrees, $m$ edges, and minimum degree $d_{\min}$, with arbitrary initial state. Let $\Delta_0,\Delta_1,\dots \ge 1$ be integers, and let $\phi_0,\phi_1,\dots$ be real numbers in $[0,1]$. For all $j \ge 0$, let $I_j^- = \Delta_0 + \dots + \Delta_{j-1}$, let $I_j^+ = I_j^- + \Delta_j - 1$, and let $I_j = [I_j^-, I_j^+]$. Suppose that for all $j \ge 0$, $\phi^{I_j}(\calG) \ge \phi_j$. Let $J = \min\{j \colon \sum_{\ell=0}^j \phi_\ell \ge bm/d_{\min}\}$ (where we require $J<\infty$). Then with probability at least $1/2$, the consensus time is at most $\Delta_0 + \dots + \Delta_J$.
\end{restatable}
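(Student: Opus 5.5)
We prove \cref{thm:upper-bound} by a drift argument on a concave potential of the volume of the minority opinion, in the spirit of Berenbrink et al. We reduce to two opinions as the paper indicates. Let $X_t\coloneqq\Vol(A_t)$ and take the potential $\Psi_t\coloneqq\sqrt{\min(X_t,2m-X_t)}$, or a smoothed version such as $\sqrt{X_t(2m-X_t)/m}$.

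\emph{Step 1 (martingale and variance).} Because degrees are fixed, $X_t$ is a martingale. Each vertex $v$ copies a neighbour with probability $\tfrac12$, so $\E[X_{t+1}\mid\calF_t]=X_t+\tfrac12\sum_v d(v)\bigl(e_t(v,A_t)/d(v)-\mathbf 1[v\in A_t]\bigr)$. The sum $\sum_v e_t(v,A_t)$ equals $\sum_{v\in A_t}d(v)=X_t$ by double counting, so the drift vanishes. The updates are independent across vertices, and a vertex has a nonzero update exactly when it has a neighbour of the other opinion. Hence the conditional variance satisfies
\[
\mathrm{Var}(X_{t+1}\mid\calF_t)\ \ge\ c\sum_v d(v)^2\,\frac{e_t(v,\overline{A_t}\text{ or }A_t)}{d(v)}\ \ge\ c\,d_{\min}\,e_t(A_t,\overline{A_t}),
\]
where the inner count is the number of neighbours of the opposite opinion. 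Writing $S_t$ for the minority side, the right-hand side is at least $c\,d_{\min}\,\phi^t(S_t)\Vol(S_t)$. Concavity of the square root then gives the expected decrease
\[
\E[\Psi_{t+1}\mid\calF_t]\le\Psi_t-c'\,d_{\min}\phi^t(S_t)\Vol(S_t)/\Psi_t^3,
\]
that is, a decrease of order $d_{\min}\phi^t(S_t)/\sqrt{\Vol(S_t)}$. As in BGKM, this yields a multiplicative drift $\E[\Psi_{t+1}]\le\Psi_t(1-c\,d_{\min}\phi^t/\Vol(S_t))$, and summing $d_{\min}\phi_j/\Psi^2$ over windows until the total exceeds $bm/d_{\min}$ kills $\Psi$. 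Concretely, I would use $\E\sqrt{\cdot}$ with the bound $\sqrt{x+y}\le\sqrt x+y/(2\sqrt x)-y^2/(8x^{3/2})$ on the event that the increment is at most $x$.

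\emph{Step 2 (windows: stable vs.\ unstable).} In window $I_j$ we only know that $S_{I_j^-}$ has conductance at least $\phi_j$ at some time $t^\ast\in I_j$, but $A_t$ moves during the window. Call window $j$ \emph{stable} if $|X_t-X_{I_j^-}|\le\tfrac12\min(X_{I_j^-},2m-X_{I_j^-})$ throughout the window, and \emph{unstable} otherwise. In the stable case, write $A_{t^\ast}=A_{I_j^-}\triangle D$. Then
\[
e_{t^\ast}(A_{t^\ast},\overline{A_{t^\ast}})\ \ge\ e_{t^\ast}(A_{I_j^-},\overline{A_{I_j^-}})-\Vol(D).
\]
The trouble is that $\Vol(D)$ can be large even when $X$ barely moves. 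I would therefore strengthen the notion of stability to use $\Vol(A_t\triangle A_{I_j^-})$, which is bounded by the accumulated absolute changes, i.e.\ by the quadratic variation. Because each step's variance is at least of order the number of changing vertices times $d_{\min}$, a large symmetric difference forces large variance, and large variance also gives potential decrease by the same concavity computation. So in both cases the window yields expected decrease of order $\Psi_{I_j^-}\cdot\min\bigl(1,\,d_{\min}\phi_j/\min(X,2m-X)\bigr)$.

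\emph{Step 3 (conclude).} Combining the two cases gives
\[
\E[\Psi_{I_{j+1}^-}\mid\calF_{I_j^-}]\le\Psi_{I_j^-}\bigl(1-c\,d_{\min}\phi_j/m\bigr),
\]
possibly with an additive correction once $\Psi$ is very small. Iterating over $j=0,\dots,J$ gives $\E\Psi\le\sqrt m\,e^{-cb}$. Because $\Psi\ge\sqrt{d_{\min}}$ whenever consensus has not been reached, Markov's inequality yields failure probability below $1/2$ for large $b$. If the multiplicative form fails for small volumes, I would fall back on the BGKM phase argument: bound separately the number of windows needed to halve the minority volume, sum over dyadic scales, and use the fact that $\sum_k 2^{-k/2}$ converges.

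The main obstacle is Step 2. Since $A_t$ changes within the window, the single good time $t^\ast$ guaranteed for $S_{I_j^-}$ need not help $A_{t^\ast}$. The whole argument rests on quantitatively trading movement of the opinion set, which is measured via $\Vol(A_t\triangle A_{I_j^-})$ rather than $|X_t-X_{I_j^-}|$, against potential decrease. I would first try to prove that the expected decrease of $\Psi$ over a window is at least a constant times $\E[\Vol(A_t\triangle A_{I_j^-})]/\Psi^{3}\cdot d_{\min}$.
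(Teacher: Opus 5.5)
Your Step 2 isolates the right mechanism. One step moves the opinion set by $\E[\Vol(A_t\triangle A_{t+1})\mid\calH_t]=e_t(A_t,\overline{A_t})$ in expectation, which is the same quantity that drives the decrease of $\psi$. This is exactly how the paper exploits the good step (\cref{lem:stepwise-edges-bound} inside \cref{lem:prob-good-event}). However, the case split as you set it up cannot be used.

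\begin{itemize}
\item ``Stable throughout the window'' is a pathwise event decided only at the end of the window. You therefore cannot condition on it when bounding the drift given $\calH_{I_j^-}$.
\item Without deterministic control of $\Vol(S_t)$ inside the window, you cannot compare the per-step decrease $d_{\min}e_t(S_t,\overline{S_t})/\psi(S_t)^3$ with $\psi(S_{I_j^-})$.
\end{itemize}

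The missing device is to cut each window at the stopping time $T_{j+1}$: the first time $\Vol(S_t)$ leaves $[\frac12,\frac32]\Vol(S_{T_j})$, or $I_j^++1$ if that comes first. You then classify $[T_j,T_{j+1})$ by comparing the $\calH_{T_j}$-measurable quantity $\E[|\Vol(S_{T_{j+1}})-\Vol(S_{T_j})|\mid\calH_{T_j}]$ against $\Vol(S_{T_j})/8$.

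In the stable case, Markov's inequality gives $\Pr(T_{j+1}\le t^\ast)\le 1/4$. You also need not build the symmetric difference into the definition of stability. Instead, split on whether $\mu'=\E[\sum_{t<t^\ast\wedge T_{j+1}}e_t(S_t,\overline{S_t})\mid\calH_{T_j}]$ exceeds $\phi_j\Vol(S_{T_j})/8$.

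The unstable case you never actually treat, and it needs its own drift bound. The paper uses $\chi=\log(1+\Vol(S))$, which drops by a constant in expectation. A direct concavity estimate for $\sqrt{\cdot}$ given $\E|\Delta\Vol|\ge\Vol/8$ would also work, since it gives a decrease of order $\psi\ge d_{\min}\phi_j/\psi$. In either case you then need a combined potential that is a supermartingale in both cases.

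Step 3 is wrong as written. From $\E\Psi\le\sqrt m\,e^{-cb}$ and $\Psi\ge\sqrt{d_{\min}}$ before consensus, Markov only bounds the failure probability by $\sqrt{m/d_{\min}}\,e^{-cb}$. Since $m/d_{\min}\ge n/2$, this forces $b=\Omega(\log n)$, not a constant. Replacing the per-window decrease $c\,d_{\min}\phi_j/\psi$ by the uniform multiplicative rate $c\,d_{\min}\phi_j/m$ discards precisely the fact that the decrease grows as $\psi$ shrinks.

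Use the bound additively instead of the dyadic fallback:
\begin{itemize}
\item Stop at the first time the volume exceeds $8\Vol(s_0)$; by optional stopping on the supermartingale $\Vol(S_t)$, this has probability at most $1/8$.
\item Before that time, the stable-case drift of $\psi(S_{T_j})$ is at most $-c\,d_{\min}\phi_j/\sqrt{8\Vol(s_0)}$.
\item Apply optional stopping to $\Psi_j+\sum_{\ell<j}D_\ell$, then Markov's inequality to $\sum_\ell D_\ell$. This gives the bound with constant $b$.
\end{itemize}

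Stating the two-opinion bound relative to $\Vol(s_0)$ rather than $m$ is also what your unproved ``reduce to two opinions'' step requires. The metaphase argument of Berenbrink et al.\ needs an opinion of volume at most $6m/|O_\alpha|$ to die out or take over within conductance budget about $b\bigl(6m/(d_{\min}|O_\alpha|)+\log(1+6m/|O_\alpha|)\bigr)$. Only then do the budgets sum geometrically over the $\Theta(\log\kappa)$ metaphases. With a two-opinion bound stated only in terms of $m$, the reduction loses a factor of $\log\kappa$.
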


These upper bounds are essentially tight: the dependence on~$\Phi(\calG)$ cannot be improved in general, as our matching lower bounds show.
\begin{restatable}[Lower bound]{mtheorem}{LowerBounds}\label{thm:lower-bound-voter-absorption}\leanok[thm:lower-bound-voter-absorption]%
  There exist arbitrarily large $n$-vertex 3-regular temporal graphs~$\calG$, and constant $c$ with the following property. From some initial state, with probability at least $1/2$, the standard voter model running on $\calG$ does not absorb within time $cn/\Phi(\calG)$.
\end{restatable}

The construction behind \cref{thm:lower-bound-voter-absorption} in fact yields a stronger obstruction: it rules out fast consensus even against a weaker, window-averaged notion of conductance, and for every odd degree~$d$.
\begin{restatable}[Lower bound]{theorem}{LowerBoundf}\label{thm:intro-lower-bound-intervals}\leanok[thm:intro-lower-bound-intervals]%
  There exist $c>0$, $\phi\colon\N\to\R$ and $\Delta\colon\N\to\N$ with $\phi(n) = \Theta(1/n)$ and $\Delta(n) = \Theta(n)$ as $n\to\infty$ such that the following holds. For all odd integers $d$, there exist arbitrarily large $n$-vertex $d$-regular temporal graphs $\calG$ such that:
  \begin{enumerate}[(i)]
    \item For all time intervals $I$ of length $\Delta(n)$ and all sets $S \subseteq V(\calG)$ with $0 < \Vol(S) \le m$, we have $\tfrac{1}{\abs{I}}\sum_{t \in I} \phi^t(S) \ge \phi(n)$.
    \item From some initial state, with probability at least $1/2$, the standard voter model on $\calG$ does not absorb within time $cn\Delta(n)/\phi(n)$.
  \end{enumerate}
\end{restatable}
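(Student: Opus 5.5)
The plan is an explicit construction together with a martingale argument for the slow-consensus part. Since the graphs are $d$-regular, the opinion-$0$ fraction $X_t \coloneqq \abs{A_t}/n$ is a martingale: each edge is symmetric and every vertex copies lazily with the same rate. Consensus means $X_t$ travels from about $\tfrac12$ to $0$ or $1$, which needs quadratic variation $\Omega(1)$. Each vertex that currently has a disagreeing neighbour in $G_t$ contributes only $O(1/(dn^2))$ conditional variance per step. So it suffices to build $\calG$ in which, from a suitable initial state, the number of bichromatic edge-steps in any window of length $\Delta(n)=\Theta(n)$ is $O(1)$ on average. Then $\Theta(n^2)$ windows, i.e.\ time $\Theta(n^3)=\Theta(n\Delta(n)/\phi(n))$, are needed. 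Doob's maximal inequality (or optional stopping applied to $X_t^2-\langle X\rangle_t$) would then show that absorption within $cn\Delta/\phi$ steps has probability at most $\tfrac12$ for small $c$.

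The tension is that (i) is a condition on \emph{fixed} sets $S$. For such an $S$ we need $\sum_{t\in I} e_t(S,\overline S)\ge \Vol(S)\cdot\Omega(1/n)\cdot\Delta = \Omega(\Vol(S))$ crossing edges per window, whereas the \emph{random} opinion set $A_t$ must be crossed only $O(1)$ times per window. I would resolve this with a relabelling construction. Fix a $d$-regular base graph $H$ (for odd $d$, say a ring of $k$ gadgets $K_{d+1}$ joined by swapping one internal edge pair between consecutive gadgets, which preserves all degrees), and let $G_t = \pi_t(H)$ for slowly varying vertex permutations $\pi_t$. The permutations are chosen so that the gadget structure migrates one vertex at a time, slowly enough (relative to the lazy rate $\tfrac12$ and the internal mixing of $K_{d+1}$) that the opinion blocks effectively travel with the gadgets. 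Meanwhile any fixed set $S$ sees every one of its vertices placed in many different gadgets during a window of length $\Theta(n)$. Verifying (i) then splits into two cases. If $S$ splits some gadget at time $t$, that gadget alone gives $\phi^t(S)=\Omega(1/\Vol(S))$ per split gadget. If $S$ is a union of gadgets, it remains so only for $O(1)$ steps of each window before the migration splits it. Summing over a window should give average conductance $\Omega(1/n)$ for every $S$ with $\Vol(S)\le m$.

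In order: (1) define $H$, the permutations $\pi_t$, and the initial state (half the ring $0$, half $1$, aligned with gadgets); (2) prove (i) by the case split above, using $\Vol(S)\le m$ to guarantee enough split gadgets; (3) show via a coupling that, outside a small-probability event, the opinion configuration stays a union of (moving) gadgets except in $O(1)$ interface gadgets, so there are $O(1)$ bichromatic edges per window; (4) apply the quadratic-variation bound. Step (3) is the main obstacle, and it is where the construction may need redesign. I am not certain the migration speed can be made compatible both with (i) and with opinions tracking the moving gadgets. If it fails, I would instead try a two-scale ring: $k=\Theta(n)$ tiny gadgets, with each ring edge present only once per window. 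Then I would check directly whether the averaged condition (i) holds with $\phi=\Theta(1/n)$ while the coarse-grained voter model on the ring of gadgets needs $\Theta(k^2)$ windows.
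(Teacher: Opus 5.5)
There is a genuine gap: your main construction cannot satisfy (ii), and your fallback cannot satisfy (i).

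\textbf{The main construction fails (ii).} Your quadratic-variation reduction is a reasonable heuristic, but no tuning of the migration speed can save the relabelling construction. Every snapshot $G_t=\pi_t(H)$ is isomorphic to $H$, and a ring of gadgets joined by swapped edges is connected. Hence every $S$ with $0<\Vol(S)\le m$ has $e_t(S,\overline{S})\ge 1$, so $\phi(G_t)\ge 1/m$ for all $t$. Now apply \cref{thm:upper-bound} with all $\Delta_j=1$ and $\phi_j=1/m$; this is the stepwise bound of Berenbrink et al., which holds non-adaptively. It gives consensus within $O(m^2/d_{\min})=O(n^2d)$ steps with probability at least $1/2$ from any state. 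Restarting once gives probability at least $3/4$ within twice that time. For fixed $d$ this is $o(n^3)$, whereas (ii) demands non-absorption up to time $cn\Delta(n)/\phi(n)=\Theta(n^3)$. Your own criterion already signals the problem: in a connected snapshot every non-consensus opinion set has a bichromatic edge, so each window of length $\Theta(n)$ contains $\Omega(n)$ bichromatic edge-steps rather than $O(1)$.

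\textbf{The fallback fails (i).} Let $S$ be the union of half of the tiny gadgets, with the degree-preserving replacement edges internal to gadgets. Then only its two boundary ring edges cross it, each present $O(1)$ times per window. So $\frac{1}{\abs{I}}\sum_{t\in I}\phi^t(S)=O\bigl(1/(\abs{I}\Vol(S))\bigr)=O(1/n^2)$, violating (i).

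\textbf{The missing idea.} You need snapshots that are \emph{disconnected} and constant over phases of length $\Theta(n)$, alternating between two transverse component structures. The paper places blocks $V_1,\dots,V_z$ of size $k=(d+1)/2$ around a cycle. It alternates every $T=4z$ steps between $G_0$, the disjoint union of the cliques on $V_i\cup V_{i+1}$ for even $i$, and $G_1$, the same for odd $i$.
\begin{itemize}
\item For (i), there are two cases. If every block meets $S$ in at most $k/2$ vertices, then $\phi^t(S)\ge 1/2$ at every step. Otherwise some $\ell$ has $\abs{S\cap V_\ell}\le k/2<\abs{S\cap V_{\ell+1}}$. The clique on $V_\ell\cup V_{\ell+1}$ is then a component throughout a full phase inside any window of length $3T$, cutting $S$ by $\Omega(k^2)$ edges the whole time. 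Together these give $\sum_{t\in I}\phi^t(S)\ge T/(4z)$.
\item For (ii), the voter model reaches consensus inside each clique within $O(k\log z)$ steps. So with good probability, at every phase boundary up to time $\Theta(Tz^2)$, the opinion-$0$ set is a contiguous arc of blocks. The number of blocks in this arc is a martingale with increments at most $2$, and Azuma's inequality shows that $\Omega(z^2)$ phases, i.e.\ $\Omega(n^3)$ time, are needed.
\end{itemize}
The key asymmetry is that a fixed set stays misaligned with the current cliques for a whole phase, while the opinion set realigns within $O(\log n)$ steps. This is exactly what resolves the tension you identified, and it is absent from both of your constructions.
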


These results should be read against those of~\cite{DBLP:conf/icalp/BerenbrinkGKM16}. All their results are proved in the same adversarial model: the adversary is allowed to rewire the entire graph at each time step subject to fixed vertex degrees, and the results apply to the total conductance summed over all time steps. Naturally, their main upper bound of \cite[Theorem~1.1(i)]{DBLP:conf/icalp/BerenbrinkGKM16} transfers immediately to the non-adaptive setting while their main lower bound of \cite[Theorem~1.2]{DBLP:conf/icalp/BerenbrinkGKM16} breaks in the non-adaptive setting.

This important difference aside, \cite[Theorem~1.1(i)]{DBLP:conf/icalp/BerenbrinkGKM16} is precisely the restriction of \cref{thm:upper-bound} to length-1 intervals, while \cref{thm:total-consensus-time} can be read as a restriction of \cref{thm:upper-bound} to $\phi_0=\phi_1=\dots=\phi$ and $\Delta_0 = \Delta_1 = \dots = \Delta$, with optimised choices of $\phi$ and $\Delta$. Meanwhile, \cite[Theorem~1.2]{DBLP:conf/icalp/BerenbrinkGKM16} provides a graph for which the upper bound is tight with interval length $1$, while the interval length of \cref{thm:intro-lower-bound-intervals} is $\Theta(n)$; otherwise, they both provide tight lower bounds in their respective settings. Throughout we retain the fixed-degree assumption of~\cite{DBLP:conf/icalp/BerenbrinkGKM16}.

\input{sections/1b_arxiv_proof_sketch}

%% file: sections/1b_arxiv_proof_sketch.tex
\subsection{Proof sketches}

We now sketch the proofs of our two main results.
\Cref{sec:upper-sketch} treats the upper bound (\cref{thm:total-consensus-time,thm:upper-bound}): a drift analysis of a Lyapunov potential that generalises the argument of Berenbrink et al.~\cite{DBLP:conf/icalp/BerenbrinkGKM16} by splitting time into \emph{stable} and \emph{unstable} intervals.
\Cref{sec:lower-sketch} treats the lower bound (\cref{thm:lower-bound-voter-absorption,thm:intro-lower-bound-intervals}), constructing temporal graphs on which the voter model is slow to reach consensus and explaining why the upper bound remains tight even when the conductance is averaged over each window rather than maximised.

\subsubsection{Upper bound}\label{sec:upper-sketch}

Let $\calG$ be an $n$-vertex $m$-edge temporal graph with fixed vertex degrees and minimum degree $d_{\min}$. For simplicity, fix $\phi$ and $\Delta$, divide time into intervals $I_0,I_1,\dots$ of length $\Delta$, and suppose that $\phi^{I_\ell}(\calG) \ge \phi$ for all $\ell\ge 0$. Then our goal is to prove that with probability at least $1/2$, the standard voter model reaches consensus within time $O(m\Delta/(d_{\min}\phi))$. (\cref{thm:upper-bound} allows for arbitrary interval lengths and conductance bounds, but this does not substantially affect the argument.) Again for simplicity, we take only $\kappa=2$ opinions; with some effort the argument of Berenbrink et al.~\cite{DBLP:conf/icalp/BerenbrinkGKM16} to bootstrap from two opinions to arbitrarily many can be adapted to our setting, so this is the heart of the proof.

We first review the $\Delta=1$ case treated in~\cite{DBLP:conf/icalp/BerenbrinkGKM16}, which is based on a drift analysis of a Lyapunov potential. (This is a standard method --- given a system $(X_t\colon t \ge 0)$ with state space $\Sigma$, filtration $(\calF_t\colon t \ge 0)$, and a potential function $f\colon\Sigma\to\R$, the \emph{drift} at time $t$ refers to $\E(f(X_{t+1})-f(X_t)\mid \calF_t)$. See e.g.~\cite{DBLP:series/ncs/Lengler20} for an overview.) Recall from \cref{def:voter-model} that $(A_t\colon t\ge0)$ is the evolution of the set of vertices with opinion~$0$ in the two-opinion voter model. In analysing the process, it is convenient to instead follow the \emph{minority set $S_t$ at time $t$} defined via
\begin{align*}
  S_t & \coloneqq
  \begin{cases}
    A_t            & \text{if $\Vol(A_t)\leq\Vol(\overline{A_t})$}; \\
    \overline{A_t} & \text{otherwise}.
  \end{cases}
\end{align*}
This trades the two absorbing states of $(A_t\colon t\ge0)$, $\emptyset$ and $V(G)$, for a single absorbing state at $\emptyset$. Note the minority set is chosen based on $\Vol(A_t)$, not $|A_t|$, for reasons we explain shortly.

The potential function used in~\cite{DBLP:conf/icalp/BerenbrinkGKM16} is given by $\psi(S_t) \coloneqq \sqrt{\Vol(S_t)}$. This is motivated by the fact that $(\Vol(A_t)\colon t \ge 0)$ is a martingale absorbing at $0$ and $\Vol(V(\calG))$. Since we define $S_t$ in terms of volume rather than size, it is not hard to show that $(\Vol(S_t)\colon t \ge 0)$ is a supermartingale absorbing at $0$, and this is the reason for the definition. The square root is then a simple concave function which turns a mostly-neutral drift into a negative one; this is quantified in~\cite[Lemma~2.1]{DBLP:conf/icalp/BerenbrinkGKM16}, which implies
\[
    \E[\psi(S_{t+1}) - \psi(s_t) \mid S_t=s_t] \le -\frac{d_{\min}e_t(s_t,\overline{s_t})}{32\psi(s_t)^3} = -\frac{d_{\min}\phi^t(s_t)}{32\psi(s_t)} \le -\frac{d_{\min}\phi}{32\sqrt{m}}.
\]
(Here the last inequality follows since $\Vol(s_t) \le \Vol(V(\calG))/2 = m$.) Since $\psi(S_0) \le \sqrt{m}$ and at each step we have expected drift $-\Omega(d_{\min}\phi/\sqrt{m})$, we should therefore expect absorption in $O(\sqrt{m}/(d_{\min}\phi/\sqrt{m})) = \Theta(m/(d_{\min}\phi))$ time, and it is not hard to make this bound rigorous using standard martingale methods.

The difficulty of extending this proof to our setting lies in a subtle but important point. If we could guarantee that for all $i \ge 0$, there existed $t \in I_i$ such that $\phi^t(S_t) \ge \phi$, then we would be able to use essentially the same argument as in~\cite{DBLP:conf/icalp/BerenbrinkGKM16}: we would accrue $\phi/\sqrt{m}$ negative drift from our ``one good step'' of $t \in I_i$, and at worst accrue no negative drift from the remaining steps in $I_i$. However, this is not the case. Writing $I_i^-$ for the start of interval $I_i$, we can only guarantee that there exists $t \in I_i$ such that $\phi^t(S_{I_i^-}) \ge \phi$; $t$ and $I_i^-$ may be far apart and $S_t$ may be very different from $S_{I_i^-}$, so this is not sufficient. This is not merely a technical difficulty --- we will see in \cref{sec:lower-sketch} that this situation is the reason why our upper bound is tight even given a lower bound on the average conductances $\frac{1}{|I_i|}\sum_{t \in I_i}\phi^t(S)$ rather than simply the maximum conductances $\phi^{I_i}(S)$.

We effectively proceed by a case analysis on each interval $I_i$, conditioned on the state of the process at $I_i^-$. Let $t' \in I_i$ be the ``one good step'' we are guaranteed with $\phi^{t'}(S_{I_i^-}) \ge \phi$. Very informally, our cases are:
\begin{enumerate}[(i)]
    \item Neither $\Vol(S_t)$ nor $e_t(S_t,\overline{S_t})$ is likely to change significantly over $t \in I_i$.
    \item $\Vol(S_t)$ is not likely to change significantly over $t \in I_i$, but $e_t(S_t,\overline{S_t})$ is likely to change significantly over $t \in I_i$.
    \item $\Vol(S_t)$ is likely to change significantly over $t \in I_i$.
\end{enumerate}
In case (i), we have $\phi^{t'}(S_{t'}) \approx \phi^{t'}(S_{I_i^-})$ and so we may apply the same analysis as in~\cite{DBLP:conf/icalp/BerenbrinkGKM16}. In case (ii), informally, we will prove that when volume is held roughly constant, $e(S_t,\overline{S_t})$ can only be likely to change significantly over $t \in I_i$ if $\sum_{t \in I_i} e(S_t,\overline{S_t})$ is likely to be large; since volume is held roughly constant, this implies that $\sum_{t \in I_i}\phi^t(S_t)$ is likely to be large and we again have significant negative drift. In case (iii), while we may or may not have significant negative drift in $\psi(S_t)$, it is at least true that $\Vol(S_t)$ is likely to change significantly; we might expect that absorption should be likely after relatively few such intervals have passed.

In order to make these ideas rigorous, in cases (i) and (ii) we will need deterministic bounds on $\Vol(S_t)$. To obtain these, rather than working with the intervals $I_0,I_1,\dots$ directly, we define stopping times $0=T_0,T_1,\dots$ in \cref{def:stopping-times} and work with the intervals $[T_i,T_{i+1})$. By default each stopping time $T_i$ occurs at the start of interval $I_i$, but it can occur early at time $t>T_{i-1}$ if $\Vol(S_t) \notin [\tfrac12\Vol(S_{T_{i-1}}), \tfrac32\Vol(S_{T_{i-1}})]$. Thus $\Vol(S_t)$ is guaranteed to stay near-constant within each interval $[T_i,T_{i+1})$.

Given this formulation, the distinction between cases (i) and (ii) versus case (iii) for a given interval $[T_i,T_{i+1})$ is whether or not $T_{i+1}$ is likely to occur early (i.e.\ before the start of interval $I_{i+1}$). We set this out in \cref{def:stable-unstable-interval}, where we define a \emph{stable} interval $[T_i,T_{i+1})$ as one with
\[
    \E[|\Vol(S_{T_{i+1}}) - \Vol(S_{T_i})|\mid \calH_{T_i}] < \Vol(S_{T_i})/8,
\]
where $\calH_t$ generates the filtration of the voter model and is formally defined in \cref{sec:martingale-properties}. An \emph{unstable} interval is one for which this does not hold. By Markov's inequality we see that $T_{i+1}$ is unlikely to occur early in a stable interval, as will be required for cases (i) and (ii); for case (iii) in an unstable interval, while the lower bound on expected change in volume is technically weaker than a probabilistic bound, it will turn out to be more natural to work with.

We perform a drift analysis on stable intervals with respect to $\psi$ in \cref{sec:case-1}. Our key result in both cases is \cref{cor:potdec-regular}, which exploits the fact that $\Vol(S_t)$ cannot change significantly over $I_i$ to bound the backward drift of $\psi$ in terms of
\[
    \mu \coloneqq \E\Big[\sum_{t=T_i}^{T_{i+1}-1} e_t(S_t,\overline{S_t}) \mid \calH_{T_i}\Big].
\]
In order for \cref{cor:potdec-regular} to give meaningful negative drift in $\psi$, we will need to bound $\mu$ below. Recall that by our hypothesis, we are guaranteed ``one good step'' $t' \in I_i$ for which $\phi^{t'}(S_{T_i}) \ge \phi$. We bound $\mu$ below in \cref{lem:prob-good-event} via a case distinction on
\[
    \mu'\coloneqq \E\Big[\sum_{t=T_i}^{\min\{T_{i+1},t'\}-1} e_t(S_t,\overline{S_t}) \mid \calH_{T_i}\Big].
\]
If $\mu'$ is small, then via \cref{lem:stepwise-edges-bound} it can be shown that we are in case (i) --- that is, in expectation $e_{t'}(S_{t'},\overline{S_{t'}}) \approx e_{t'}(S_t,\overline{S_t})$ and hence $\phi^{t'}(S_{t'}) \gtrapprox \phi$. (This should match our intuition --- if $e_t(S_t,\overline{S_t})$ is small for some $t$, then we should expect $S_{t+1}$ to be close to $S_t$ and hence $e_{t'}(S_{t+1},\overline{S_{t+1}}) \approx e_{t'}(S_t,\overline{S_t})$.) We can use this to bound $\mu$ below at the $t'$ term. If instead $\mu'$ is large, then we are in case (ii), and we can bound $\mu$ below directly by $\mu'$. Combining \cref{cor:potdec-regular} and \cref{lem:prob-good-event} yields \cref{lem:psi-down-drift}, which implies that $(\psi(S_{T_i})\colon i\geq0)$ is a supermartingale with drift $-\Omega(d_{\min}\phi/\sqrt{m})$ whenever $[T_i,T_{i+1})$ is stable.

Having finished with cases (i) and (ii), we next analyse case (iii) (when the interval $[T_i,T_{i+1})$ is unstable) in \cref{sec:case-2}. Here we forego an analysis of $\psi$ in favour of a different potential function. Intuitively, if every interval $[T_i,T_{i+1})$ were unstable, then on average we should expect each $\Vol(S_{T_{i+1}})$ to differ from $\Vol(S_{T_i})$ by at least a constant factor, giving rise to something like a simple random walk on $\{0,\dots,\log (2m)\}$. We are able to make this idea rigorous by analysing the drift of the potential function $\chi(S_t) \coloneqq \log(1+\Vol(S_t))$. We do so first in a generic setting (\cref{lem:chi-down-drift-generic}) before proving in \cref{cor:chi-down-drift-voter} that $(\chi(S_{T_i})\colon i\geq0)$ is a supermartingale which has drift $-\Omega(1)$ whenever $[T_i,T_{i+1})$ is an unstable interval.

It now remains to combine the stable case with the unstable case, which we do in \cref{sec:combining-cases}. Thankfully, we proved that both $(\psi(S_{T_i})\colon i\geq0)$ and $(\chi(S_{T_i})\colon i \geq0)$ are supermartingales regardless of which intervals are stable or unstable. As such we can simply normalise and add our two potentials to obtain a final supermartingale 
\[
\Psi_i \coloneqq \frac{\sqrt{m}}{d_{\min}}\psi(S_{T_i}) + \chi(S_{T_i}).
\]
When $[T_i, T_{i+1})$ is stable, $\Psi$ has $-\Omega(\phi)$ drift from $\psi$, and when $[T_i, T_{i+1})$ is unstable it has $-\Omega(1)$ drift from $\chi$; thus in both cases $\Psi$ has $-\Omega(\phi)$ drift. Since $\Psi_i = (m/d_{\min})$ for all $i$, applying the optional stopping theorem in the usual way then yields our two-opinion result of \cref{thm:voter-absorb-two-opinion}. We can then bootstrap to our final $\kappa$-opinion result in a similar fashion to~\cite{DBLP:conf/icalp/BerenbrinkGKM16}. (Strictly speaking this requires a bound on absorption time in terms of $S_0$ rather than $m$, but this does not require significant changes to the proof sketch.)

\subsubsection{Lower bound}\label{sec:lower-sketch}

Let $n \ge 4$ be an even integer, and let $T \ge \log^2 n$ be an integer. We sketch the construction of a $1$-regular $n$-vertex temporal graph $\calG$ with $\phi(\calG) = \Omega(1/(nT))$, average (static) conductance $\Omega(1/n)$ on all intervals of length $3T$, and upon which the standard two-opinion voter model may take $\Omega(Tn^2)$ time to absorb. This suffices to prove \cref{thm:intro-lower-bound-intervals} for $d=1$, as well as \cref{thm:lower-bound-voter-absorption} in general. We will then explain how to adapt the construction to other vertex degrees.

Let $V(\calG) = [n]$. For all $i \ge 0$, let $I_i = [iT, (i+1)T-1]$. For all $t \in I_1 \cup I_3 \cup \dots$, let the edge set of $\calG_t$ be $\{\{1,2\}, \{3,4\}, \dots, \{n-1,n\}\}$. For all $t \in I_0 \cup I_2 \cup \dots$, let the edge set of $\calG_t$ be $\{\{2,3\},\{4,5\}\dots,\{n-2,n-1\},\{n,1\}\}$. Thus $\calG$ alternates with period $T$ between the two perfect matchings of an $n$-vertex cycle; in particular, $\calG$ is $1$-regular as required. 

We next bound the conductance of $\calG$ below. Fix $S \subseteq V(\calG)$ with $0 < \Vol(S) \le \tfrac12\Vol(V(\calG)) = n/2$. Observe that since $S \notin \{\emptyset, V(\calG)\}$, $S$ must send out at least one edge throughout either odd or even intervals $I_i$ (or both); suppose without loss of generality that $e_t(S,\overline{S}) \ge 1$ for all $t \in I_i$ with $i$ odd. Then any interval $I$ of length $3T$ must contain at least one such $I_i$, and so
\[
    \frac{1}{3T}\sum_{t \in I}\phi^t(S) \ge \frac{1}{3T}\sum_{t \in I_i}\phi^t(S) \ge \frac{1}{3T}\sum_{t \in I_i}\frac{1}{\Vol(S)} \ge \frac{1}{3T}\cdot \frac{|I_i|}{n/2} = \frac{2}{3n}.
\]
Thus the required lower bound on average static conductance holds. Moreover, by a similar argument, every length-$(T+1)$ interval contains at least one step $t$ with $\phi^t(S) \ge 2/n$, so $\Phi(\calG) = \Omega(1/(nT))$. While not necessary for the proof, it is not hard to show that $\Phi(\calG) = O(1/(nT))$ also by considering sets $S$ of the form $\{1,\dots,k\}$ with $k = \Theta(n)$ even.

Finally, we prove that the standard two-opinion voter model may take $\Omega(Tn^2)$ time to absorb. Suppose the initial opinion-$0$ set is of the form $\{2,\dots,k\}$ with $k=\Theta(n)$ even. Then throughout $I_0$, the only opinions which can change are those of $k$ and $k+1$. Recall that the standard voter model is synchronous and lazy, so at each time step one vertex takes the other's opinion with probability $1/2$; otherwise, either they exchange opinions (with probability $1/4$) or do nothing (with probability $1/4$). Since $|I_0|=T \ge \log^2 n$, with probability $1-n^{-\omega(1)}$, at the end of $I_0$, $k$ and $k+1$ will share the same opinion and the new opinion-$0$ state will be either $\{2,\dots,k-1\}$ or $\{2,\dots,k+1\}$. The process continues to evolve in this manner: with high probability, over the course of each interval $I_i$, the opinion-$0$ set changes from one contiguous interval to another, with each endpoint shifting by at most $1$ in an unbiased fashion. It is not hard to prove (e.g.\ using Azuma's inequality) that with probability at least $1/2$, such a system does not absorb until $\Omega(n^2)$ intervals $I_i$ have passed, i.e.\ $\Omega(Tn^2)$ total time.

For the more general construction (for arbitrary odd vertex degrees), we blow up each vertex $i$ into a set $V_i$ of size $(d-1)/2$ and replace each matching edge $\{i,i+1\}$ (modulo $n$) with a clique on $V_i \cup V_{i+1}$, obtaining an $nd$-vertex graph. This construction is defined as $\calG^{T,k,z}$ in \cref{def:lower-bound-construction}. The bounds on conductance and absorption time follow by analogous arguments to the $d=1$ case in Lemmas~\ref{lem:clique-lower-bound-conductance} and~\ref{lem:lower-bound-absorption} respectively, and the fundamental behaviour of the construction is similar (with the opinion-$0$ state being of the form $V_i \cup V_{i+1} \cup \dots \cup V_j$ with high probability at the start of every interval). 

%% file: sections/2_preliminaries.tex
\section{Preliminaries}
In this section we introduce the notation, basic definitions and well-known probability results relevant to prove our main theorems. Throughout the paper, upper-case letters such as $A_t$ denote random variables taking values in subsets of $V(\calG)$, whereas lower-case letters such as $a_t$ denote fixed subsets of $V(\calG)$. We also use the notation $a\wedge b\coloneqq\min\{a,b\}$ and $a\vee b\coloneqq\max\{a,b\}$. 
We use $\log$ to denote the natural logarithm.

\subsection{General Lemmas}
Below we state two classical results that will be used repeatedly throughout the paper.

\begin{lemma}[Optional stopping theorem]\label{lem:optional-stopping-time}\leanok%
    Let $(X_t\colon t\geq0)$ be a supermartingale with some filtration $(\genericfiltration{t}\colon t\geq0)$. Let $\tau$ be a stopping time for $(X_t\colon t\geq0)$, suppose that $\tau$ is almost surely finite, and let $t \ge 0$. Then $\E(X_\tau \mid \calF_t) \le X_t$.
\end{lemma}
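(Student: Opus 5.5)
The plan is to use the standard route through the stopped process. First I will set $Y_s \coloneqq X_{\tau\wedge s}$ and show that $(Y_s\colon s\ge0)$ is again a supermartingale with respect to $(\calF_s)$. To see this, write
\[
  Y_{s+1}-Y_s=\mathbf 1_{\{\tau>s\}}\,(X_{s+1}-X_s).
\]
The event $\{\tau>s\}$ lies in $\calF_s$ because $\tau$ is a stopping time, so it can be pulled out of the conditional expectation. This gives
\[
  \E(Y_{s+1}-Y_s\mid\calF_s)=\mathbf 1_{\{\tau>s\}}\,\E(X_{s+1}-X_s\mid\calF_s)\le 0 .
\]
Integrability of each $Y_s$ holds because $|Y_s|\le\sum_{r\le s}|X_r|$.

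Next I will iterate this one-step inequality using the tower property. For every $s\ge t$ this yields $\E(X_{\tau\wedge s}\mid\calF_t)\le X_{\tau\wedge t}$. I then pass to the limit $s\to\infty$. Since $\tau<\infty$ almost surely, $X_{\tau\wedge s}\to X_\tau$ almost surely, because the sequence is eventually constant. The conclusion $\E(X_\tau\mid\calF_t)\le X_t$ should then follow from two further facts: a limit theorem for conditional expectations, applied to $X_{\tau\wedge s}$, and the identification of $X_{\tau\wedge t}$ with $X_t$.

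I expect the exchange of limit and conditional expectation to be the main obstacle, and the identification step to need equal care.

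\begin{itemize}
  \item \emph{The limit.} Almost-sure convergence alone does not allow the limit to be moved through $\E(\cdot\mid\calF_t)$. The simple random walk stopped on first hitting $1$ is a martingale with an almost surely finite stopping time, yet $\E X_\tau=1>0=X_0$. So this step must use an integrability property of the processes to which the lemma is applied. The first thing I will try is the conditional Fatou lemma, which applies whenever the $X_s$ are bounded below; this covers the nonnegative potentials used later in the paper. Failing that, I will use uniform integrability of $(X_{\tau\wedge s})_s$, or a bound on $\tau$.
  \item \emph{The identification.} $X_{\tau\wedge t}=X_t$ only on the event $\{\tau\ge t\}$. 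On $\{\tau<t\}$ the left-hand side is simply $X_\tau$, which is $\calF_t$-measurable, and the bound has to be read for the stopped process.
\end{itemize}

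When writing the proof I will therefore state explicitly which of these standing conditions is being invoked at each of the two steps. I will also check that the paper's later applications, which use nonnegative potentials and evaluate at stopping times $T_i\ge t$, meet them.
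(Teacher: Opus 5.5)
The paper gives no proof of this lemma; it quotes it as a classical result, so there is no paper route to compare with. Your argument is the standard proof, and every step you describe is correct: the stopped process $X_{\tau\wedge s}$ is a supermartingale, you iterate with the tower property, and then let $s\to\infty$. Your two caveats are the important part, and they are not technicalities. The lemma as literally stated is false, so what you can prove is a corrected version.

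Your hitting-time example already refutes the statement at $t=0$. The problem on $\{\tau<t\}$ is independent of integrability. The deterministic process $X_0=1$, $X_s=0$ for $s\ge1$ is a nonnegative supermartingale. With the bounded stopping time $\tau\equiv0$ and $t=1$ it gives $\E(X_\tau\mid\calF_1)=1>0=X_1$. So you should commit to the statement $\E(X_\tau\mid\calF_t)\le X_{\tau\wedge t}$ (equivalently, $\le X_t$ on $\{\tau\ge t\}$), under an explicit extra hypothesis.

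For the paper's purposes the cheapest hypothesis is a deterministic bound $\tau\le N$. Taking $s=N$ in your iterated inequality then gives the result immediately, since $X_{\tau\wedge N}=X_\tau$, and no limit exchange is needed. Every application in the paper satisfies this bound together with $\tau\ge t$:
\begin{itemize}
\item The bound $\E(\Vol(S_{\tau'}))\le\Vol(s_0)$ uses $t=0$ and $\tau'\le T_J\le I_J^{+}+1$.
\item The bound $\E[X_{J'}]\le X_0$ in the proof of \cref{thm:voter-absorb-two-opinion} uses $t=0$ and $J'\le J$, which holds because $\tau'\le T_J$.
\item The claim that the embedded volume process is a supermartingale uses $t=t_j$ on the event $\{\calH_{T_j}=H_{t_j}\}$. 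This is an atom of $\sigma(\calH_{t_j})$ on which $t_j\le T_{j+1}\le I_j^{+}+1$.
\end{itemize}
Your conditional-Fatou route would also cover all three, since $\Vol(S_t)$, $\Psi_j$ and the $D_\ell$ are nonnegative. Here, though, it gives nothing beyond the bounded case.
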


\begin{lemma}[Azuma's inequality for
  martingales]\label{lem:azuma-inequality}\leanok%
  Let $(Y_t\colon t\geq0)$ be a martingale w.r.t the filtration
  $(\genericfiltration{t}\colon t\geq0)$.
  Let $n\in\N$ and suppose there exist constants $c_1,\dots,c_{n}\ge
  0$ such that
  \(
    \abs{Y_{i}-Y_{i-1}}\le c_i
  \)
  holds almost surely for every $i\in\{1,\dots,n\}$.
  Then, for every $T\ge 0$, we have:
  \[
    \pr[\abs{Y_n-Y_0}\ge T]\le 2\exp\left(-\frac{T^2}{2\sum_{i=1}^{n} c_i^2}\right).
  \]
\end{lemma}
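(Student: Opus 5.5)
We would prove \cref{lem:azuma-inequality} by the standard exponential-moment (Chernoff) method, applied to the martingale differences $D_i \coloneqq Y_i - Y_{i-1}$ for $i\in\{1,\dots,n\}$. By the martingale property, $\E(D_i \mid \genericfiltration{i-1}) = 0$. By hypothesis, $\abs{D_i}\le c_i$ almost surely. Write $s \coloneqq \sum_{i=1}^n c_i^2$.

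First we dispose of degenerate cases. If $s=0$, then every $D_i=0$ almost surely, so $Y_n=Y_0$ and the probability on the left is $0$ for $T>0$. If $T=0$, the right-hand side is $2$ and the bound is trivial. So we assume $s>0$ and $T>0$.

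The key step, and the only non-routine one, is a conditional Hoeffding lemma: for every real $\lambda$ and every $i$,
\[
  \E\bigl(e^{\lambda D_i} \mid \genericfiltration{i-1}\bigr) \le e^{\lambda^2 c_i^2/2}.
\]
We would prove it as follows. For $c_i>0$ and $x\in[-c_i,c_i]$, convexity of $x\mapsto e^{\lambda x}$ gives
\[
  e^{\lambda x}\le \frac{c_i-x}{2c_i}e^{-\lambda c_i}+\frac{c_i+x}{2c_i}e^{\lambda c_i}.
\]
Substituting $x=D_i$ and taking conditional expectations, the terms linear in $D_i$ vanish because $\E(D_i\mid\genericfiltration{i-1})=0$, leaving the bound $\cosh(\lambda c_i)$. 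Finally, comparing Taylor series termwise using $(2k)!\ge 2^k k!$ gives $\cosh(y)\le e^{y^2/2}$. (When $c_i=0$, the claim holds trivially.)

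Next we iterate with the tower property. Since $e^{\lambda (Y_{n-1}-Y_0)}$ is $\genericfiltration{n-1}$-measurable,
\[
  \E\, e^{\lambda(Y_n-Y_0)} = \E\Bigl(e^{\lambda(Y_{n-1}-Y_0)}\,\E\bigl(e^{\lambda D_n}\mid\genericfiltration{n-1}\bigr)\Bigr) \le e^{\lambda^2c_n^2/2}\,\E\, e^{\lambda(Y_{n-1}-Y_0)}.
\]
Inducting on $n$ yields $\E\, e^{\lambda(Y_n-Y_0)}\le e^{\lambda^2 s/2}$ for every real $\lambda$.

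Then, for $\lambda>0$, Markov's inequality applied to $e^{\lambda(Y_n-Y_0)}$ gives
\[
  \pr[Y_n-Y_0\ge T]\le e^{-\lambda T+\lambda^2 s/2}.
\]
Choosing $\lambda=T/s$ makes this $e^{-T^2/(2s)}$.

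Applying the same argument to the martingale $(-Y_t\colon t\ge0)$, which has the same bounds $c_i$, controls $\pr[Y_n-Y_0\le -T]$ by the same quantity. A union bound over the two tails gives the factor~$2$ and completes the proof.

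We expect the conditional Hoeffding lemma to be the main obstacle. The point needing care is using the conditional mean-zero property inside the convexity bound, rather than an unconditional version of it. Everything after that step is routine.
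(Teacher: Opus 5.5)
Your proof is correct: it is the standard textbook argument, namely a conditional Hoeffding lemma (convexity on $[-c_i,c_i]$ together with $\cosh y\le e^{y^2/2}$), iterated via the tower property, followed by an exponential Markov bound with $\lambda=T/s$ and a union bound over the two tails. The paper gives no proof of this lemma; it simply quotes Azuma's inequality as a classical result, so your argument supplies exactly what it implicitly relies on. One cosmetic point: when $s=0$ and $T=0$ the right-hand side is the indeterminate expression $2e^{-0/0}$, so your remark that it equals $2$ for $T=0$ needs $s>0$ or a convention, which is harmless.
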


\subsection{Martingale properties of the volume process}\label{sec:martingale-properties}

We now show that $(\Vol(A_t)\colon t\geq0)$ is a \emph{martingale} and $(\Vol(S_t)\colon t\geq0)$ is a \emph{supermartingale} with respect to the standard filtration of the voter model on $\calG$, which is defined as follows.
 \begin{definition}[Filtration]\label{def:filtration}\leanok%
    We define \emph{the filtration of the standard voter model} as $(\sigma(\calH_t) \colon t \geq0)$ where $\calH_t=(A_0,A_1,\dots,A_t)$ is the random variable that holds the entire \emph{history} of the process so far and $\sigma(\calH_t)$ is the $\sigma$-algebra generated by~$\calH_t$.
\end{definition}
A sequence $(X_t\colon t\in\N)$ of random variables is a
\emph{martingale under a filtration $(\genericfiltration{t}\colon t\in\N)$} if for all~$t\in\N$, we have $\E[X_t]< \infty$ and $\E[X_{t+1}\conditioned
\genericfiltration{t}]=X_t$.
The sequence is a \emph{sub-martingale} if instead
$\E[X_{t+1}\mid\genericfiltration{t} ]\geq X_t$ holds,
and it is a \emph{super-martingale} if
instead $\E[X_{t+1}\mid\genericfiltration{t}]\leq X_t$ holds.

In the standard voter model with two opinions competing on a temporal graph whose vertex-degrees are fixed, it is equally likely that a vertex with
opinion 1 switches to opinion 0 as it is that a vertex with opinion 0
switches to opinion 1.
Thus, the evolution of the volume of each opinion is a martingale.
\begin{lemma}\label{lem:volume_A_martingale}\leanok%
  Let $\calG$ be a temporal graph whose vertex-degrees are fixed.
  Let $(A_t\colon t\geq 0)$ be the evolution of the opinion-$0$ set in
  the standard voter model on~$\calG$. Then, $(\Vol(A_t) \colon t \geq0)$ and $(\Vol(\overline{A_t}) \colon t \geq0)$ are martingales under the filtration $\masterfiltration{t}$.
\end{lemma}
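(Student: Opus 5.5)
The plan is to compute the conditional expectation of $\Vol(A_{t+1})$ directly, vertex by vertex, using the fact that degrees do not depend on time; the statement for $\Vol(\overline{A_t})$ then follows at once. Fix $t$ and condition on $\calH_t$, so that $A_t$ is a fixed set $a_t$. By linearity,
\[
\E[\Vol(A_{t+1})\mid\calH_t]=\sum_{v} d(v)\,\pr[v\in A_{t+1}\mid\calH_t].
\]
The update in \cref{def:voter-model} depends only on $\xi_t$ and $G_t$, so these probabilities are easy to write down.

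For $v\in a_t$, the vertex $v$ leaves $A$ with probability $\tfrac12\cdot e_t(v,\overline{a_t})/d(v)$. For $v\notin a_t$, it joins $A$ with probability $\tfrac12\cdot e_t(v,a_t)/d(v)$. Here we use that $d(v)$ is the degree of $v$ in $G_t$, which is exactly where the fixed-degree assumption enters. Weighting by $d(v)$, the expected change becomes
\[
\tfrac12\sum_{v\notin a_t} e_t(v,a_t)\;-\;\tfrac12\sum_{v\in a_t} e_t(v,\overline{a_t}).
\]
Both sums equal $e_t(a_t,\overline{a_t})$, so the two terms cancel and $\E[\Vol(A_{t+1})\mid\calH_t]=\Vol(A_t)$.

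Integrability is immediate, since $0\le\Vol(A_t)\le 2m$. The process $\Vol(A_t)$ is also adapted to $\sigma(\calH_t)$. Finally, $\Vol(\overline{A_t})=2m-\Vol(A_t)$, and $2m$ is constant in time by \cref{def:temporal-graph}, so it is a martingale too.

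No step is a real obstacle. The only point needing care is the per-vertex degree cancellation: $d(v)$ must be the same as the degree of $v$ in $G_t$, and without fixed degrees the weights $d(v)$ would not cancel. Laziness is irrelevant here; it only contributes the common factor $\tfrac12$.
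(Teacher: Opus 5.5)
Your proposal is correct and follows the paper's proof essentially step for step. Both condition on the history, write the expected change as the degree-weighted joining probabilities minus the leaving probabilities, and cancel the two copies of $\tfrac12 e_t(a_t,\overline{a_t})$ using undirectedness and the fixed-degree assumption. Your explicit checks of integrability, adaptedness, and $\Vol(\overline{A_t})=2m-\Vol(A_t)$ only spell out what the paper calls immediate.
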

\begin{proof}
  We only show that $(\Vol(A_t)\colon t\geq0)$ is a martingale; the proof for $(\Vol(\overline{A_t})\colon t\geq0)$ is then immediate.
  We have to show ${\E[\Vol(A_{t+1})-\Vol(A_t)\mid \filtration{t}]=0}$.
  Note that $A_t$ and thus $\Vol(A_t)$ is fully determined by the history~$\filtration{t}$, but the randomness used to generate $A_{t+1}$ from~$A_t$ (that is, one synchronous update step of the voter process) is independent from~$\filtration{t}$.
  From the definition of the standard voter model, we thus have 
  \begin{align*}
        &\E\bigl[\,\Vol(A_{t+1})-\Vol(A_t)\bigm|\filtration{t}\,\bigr]\\
        &\qquad\qquad= \E\Big[\sum_{v \in A_{t+1}\setminus A_t}d(v) - \sum_{v \in A_t\setminus A_{t+1}} d(v) \,\Big|\,\calH_t\Big]\\
        &\qquad\qquad=\sum_{v\in\ov{A_t}}\pr(v\in A_{t+1}\mid \calH_t)d(v)
        -\sum_{v\in A_t} \pr(v\in \ov{A_{t+1}}\mid \calH_t)d(v)
        \\
        &\qquad\qquad=\sum_{v\in\ov{A_t}}\frac{e_t(v,A_t)}{2d(v)}d(v)-\sum_{v\in A_t} \frac{e_t(v,\ov{A_t})}{2d(v)}d(v)
        \\
        &\qquad\qquad=
        \tfrac{1}{2}\cdot \big(e_t(\ov{A_t},A_t)-e_t(A_t,\ov{A_t})\big)=0.
  \end{align*}
   The ultimate equality holds because the graph is undirected, which concludes the proof.
 \end{proof}

We now state the standard fact that the minimum of two martingales is a supermartingale.
\begin{lemma}\label{lem:from-martingale-to-supermartingale}\leanok%
  Let $(M_t\colon t\geq0)$ and $(N_t\colon t\geq0)$ be two discrete-time
  martingales with respect to the same filtration
  $(\genericfiltration{t}\colon t\geq0)$. Then $(X_t\colon t\geq0)$ defined
  via $X_t\coloneqq\min\{M_t,N_t\}$ is a supermartingale with respect to $(\genericfiltration{t} \colon t \geq0)$.
\end{lemma}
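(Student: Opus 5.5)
We prove the statement directly from the definitions, using only the conditional monotonicity of expectation and the fact that the minimum is bounded by each of its arguments. Integrability is immediate. Since $\abs{X_t}\le\abs{M_t}+\abs{N_t}$ and both martingales are integrable, $X_t$ is integrable. Measurability is also immediate: $X_t=\min\{M_t,N_t\}$ is a measurable function of two $\genericfiltration{t}$-measurable random variables, so $X_t$ is $\genericfiltration{t}$-measurable. Hence $(X_t)$ is adapted.

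The core step is the supermartingale inequality. We have $X_{t+1}\le M_{t+1}$ and $X_{t+1}\le N_{t+1}$ pointwise. By monotonicity of conditional expectation and the martingale property,
\[
  \E[X_{t+1}\mid\genericfiltration{t}]\le \E[M_{t+1}\mid\genericfiltration{t}]=M_t
\]
almost surely, and similarly $\E[X_{t+1}\mid\genericfiltration{t}]\le N_t$ almost surely. Intersecting these two almost-sure events gives
\[
  \E[X_{t+1}\mid\genericfiltration{t}]\le\min\{M_t,N_t\}=X_t
\]
almost surely, which is the required supermartingale property.

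There is no real obstacle here. The only point needing care is that the two inequalities hold almost surely, so we must take their intersection, which is again almost sure, before taking the minimum. The paper's integrability condition is phrased as $\E[X_t]<\infty$, and the absolute-value bound above covers it.

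We then apply this with $M_t=\Vol(A_t)$ and $N_t=\Vol(\overline{A_t})$, which are martingales by \cref{lem:volume_A_martingale}. Since $\Vol(S_t)=\min\{\Vol(A_t),\Vol(\overline{A_t})\}$, this shows that $(\Vol(S_t)\colon t\ge0)$ is a supermartingale.
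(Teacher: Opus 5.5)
Your proof is correct and is essentially the paper's argument. You bound $\E[X_{t+1}\mid\calF_t]$ by $\E[M_{t+1}\mid\calF_t]=M_t$ and by $\E[N_{t+1}\mid\calF_t]=N_t$ using monotonicity, then take the minimum, with adaptedness and integrability handled as in the paper. Your last paragraph is the paper's \cref{cor:vol-minority-supermartingale} and is not needed for the lemma itself.
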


\begin{proof}
  Since $M_t$ and $N_t$ are $\genericfiltration{t}$-measurable, $X_t$ is
  $\genericfiltration{t}$-measurable. Moreover, for any $t\geq0$,
  $\E[\abs{M_t}]<\infty$ and $\E[\abs{N_t}]<\infty$  then $\E[X_t]<\infty$.
Since $\E[M_{t+1}\wedge N_{t+1} \mid \calF_t] \le \E[M_{t+1}\mid \calF_t]$ and likewise for $N_{t+1}$, we have
  \begin{equation*}
    \E\bigl[M_{t+1}\wedge N_{t+1}\bigm| \genericfiltration{t}\bigr]\leq
    \E[M_{t+1}\conditioned\genericfiltration{t}]\wedge\E[N_{t+1}\conditioned\genericfiltration{t}].
  \end{equation*}
   Since $M_t$ and $N_t$ are martingales, it follows that
   \begin{equation*}
       \E[M_{t+1}\wedge N_{t+1}\conditioned \genericfiltration{t}]\leq M_t\wedge N_t.
   \end{equation*}
   The result follows.
\end{proof}

 We recall the definition of the minority set from \cref{sec:upper-sketch}.

\begin{definition}\label{def:minority-set}
    Let $\calG$ be a temporal graph whose vertex-degrees are fixed.
  Let $(A_t\colon t\geq 0)$ be the evolution of the opinion-$0$ set in
  the standard voter model on~$\calG$. The \emph{minority set $S_t$ at time $t$} is given by
\begin{align*}
  S_t & \coloneqq
  \begin{cases}
    A_t            & \text{if $\Vol(A_t)\leq\Vol(\overline{A_t})$}; \\
    \overline{A_t} & \text{otherwise}.
  \end{cases}
\end{align*}
\end{definition}

Using the previous two lemmas, we show that the evolution of the volume of the minority set is a supermartingale.

\begin{corollary}\label{cor:vol-minority-supermartingale}\leanok%
  Let $\calG$ be a temporal graph whose vertex-degrees are fixed. Let
  $(S_t\colon t\geq0)$ be the evolution of the minority set at time
  $t$ in the standard voter model on $\calG$ with $\kappa=2$. Then,
  $(\Vol(S_t)\colon t\geq0)$ is a supermartingale with respect to $\masterfiltration{t}$.
\end{corollary}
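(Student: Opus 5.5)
The plan is to write $\Vol(S_t)$ as the minimum of two martingales and then apply \cref{lem:from-martingale-to-supermartingale}. The key observation is that, by \cref{def:minority-set}, the minority set always has the smaller volume of the two opinion classes. Hence for every $t\ge0$,
\[
  \Vol(S_t) = \min\bigl\{\Vol(A_t),\,\Vol(\overline{A_t})\bigr\}.
\]
This holds in both branches of the definition. If $\Vol(A_t)\le\Vol(\overline{A_t})$, then $S_t=A_t$ and its volume is the minimum. Otherwise $S_t=\overline{A_t}$, and $\Vol(\overline{A_t})<\Vol(A_t)$. So the identity is purely a restatement of the definition, and it holds pointwise.

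Next, \cref{lem:volume_A_martingale} tells us that $M_t\coloneqq\Vol(A_t)$ and $N_t\coloneqq\Vol(\overline{A_t})$ are both martingales with respect to the filtration $(\sigma(\calH_t)\colon t\ge0)$. We also need integrability and adaptedness. Both processes are bounded in $[0,2m]$, so integrability is automatic. Both are functions of $A_t$, which is part of $\calH_t$, so both are adapted.

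Finally, \cref{lem:from-martingale-to-supermartingale} applied to $M_t$ and $N_t$ shows that $\min\{M_t,N_t\}=\Vol(S_t)$ is a supermartingale with respect to the same filtration, which is the claim. There is no real obstacle here. The only point needing care is that the minority set is defined via volume rather than cardinality, which is exactly what makes the identity with the minimum of the two volumes exact, including in the tie case $\Vol(A_t)=\Vol(\overline{A_t})$.
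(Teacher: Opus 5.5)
Your proposal is correct and matches the paper's proof: you write $\Vol(S_t)=\min\{\Vol(A_t),\Vol(\overline{A_t})\}$, which follows from the volume-based definition of the minority set, and then apply \cref{lem:volume_A_martingale} and \cref{lem:from-martingale-to-supermartingale}. Your extra remarks on boundedness in $[0,2m]$ and adaptedness are fine, but they add no new idea.
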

\begin{proof}
  Let $A_t\subseteq V(\calG)$ be the set of vertices with opinion $0$
  at time $t$ and $\ov{A_t}$ the set of vertices with opinion $1$ at time $t$.
  By \cref{lem:volume_A_martingale}, we have that both
  $\Vol(A_t)$ and $\Vol(\ov{A_t})$ are martingales with respect to
  $\masterfiltration{t}$. Observe that
  $\Vol(S_t)=\min\{\Vol(A_t),\Vol(\ov{A_t})\}$, then by
  \cref{lem:from-martingale-to-supermartingale},  $\Vol(S_t)$ is a
  supermartingale with respect to $\masterfiltration{t}$.
\end{proof}

\subsection{Edges Crossing the Cut}

The following lemma focus on the change in the number of
edges between a given set and its complement, in particular, it
establishes a bound for the stepwise variation.

\begin{lemma}\label{lem:stepwise-edges-bound}\leanok%
  Suppose that $\calG$ is a temporal graph whose vertex-degrees are
  fixed.
  Let $(S_t \colon t \geq 0)$ be the
  evolution of the minority set in the standard voter model on $\calG$ with $\kappa=2$. For any $t,j\geq 0$, we have
  \begin{equation*}
    \E\Big[\abs*{e_t(S_{j+1},\ov{S_{j+1}}) - e_t(S_{j},\ov{S_{j}}) }
    \;\Big\vert\;  \filtration{j}\Big] \le e_{j}(S_{j},\ov{S_{j}}).
  \end{equation*}
\end{lemma}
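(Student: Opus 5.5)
The plan is to work with the opinion-$0$ set $A_j$ rather than the minority set. By \cref{def:minority-set} we have $S_j\in\{A_j,\overline{A_j}\}$, and the cut is symmetric under complementation in every $G_t$. So $e_t(S_j,\overline{S_j})=e_t(A_j,\overline{A_j})$ for every $t$ and $j$, and the whole statement becomes a statement about the cuts of $A_j$ and $A_{j+1}$ in the fixed graph $G_t$. This matters because $S_{j+1}$ may be the complement of the ``same'' opinion class as $S_j$. After this reduction, the switch of minority label plays no role.

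Next I will bound the deterministic change of a cut when a set of vertices changes sides. Let $D\coloneqq A_j\triangle A_{j+1}$ be the set of vertices that changed opinion in step $j$. Passing from $A_j$ to $A_{j+1}$ amounts to moving the vertices of $D$ across the cut one at a time. Moving a single vertex $v$ across a cut in $G_t$ changes $e_t(\cdot,\overline{\cdot})$ by at most the number of edges of $G_t$ incident to $v$, which is $d(v)$ by the fixed-degree assumption of \cref{def:temporal-graph}. By the triangle inequality,
\[
\bigl|e_t(A_{j+1},\overline{A_{j+1}})-e_t(A_j,\overline{A_j})\bigr|\le\sum_{v\in D}d(v).
\]
Note that this holds for an arbitrary time $t$, unrelated to $j$; only the degrees, which are time-independent, are used.

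Finally I take conditional expectations given $\calH_j$, as in the proof of \cref{lem:volume_A_martingale}. A vertex $v\in A_j$ lands in $D$ with probability $e_j(v,\overline{A_j})/(2d(v))$. A vertex $v\in\overline{A_j}$ lands in $D$ with probability $e_j(v,A_j)/(2d(v))$. Hence
\[
\E\Bigl[\sum_{v\in D}d(v)\Bigm|\calH_j\Bigr]=\tfrac12\bigl(e_j(A_j,\overline{A_j})+e_j(\overline{A_j},A_j)\bigr)=e_j(A_j,\overline{A_j})=e_j(S_j,\overline{S_j}),
\]
which is the claimed bound. I expect no real obstacle. The only points needing care are the complement-invariance of the cut, which handles the change of minority label between steps $j$ and $j+1$, and the observation that the per-vertex bound $d(v)$ is valid in $G_t$ for any $t$ because degrees are fixed.
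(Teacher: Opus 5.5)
Your proposal is correct and follows essentially the same route as the paper: reduce from $S_j$ to $A_j$ via complement-invariance of the cut, bound the change in $e_t$ by $\sum_{v\in A_j\ominus A_{j+1}} d(v)$ using the time-independent degrees, and compute the conditional expectation as $\tfrac12 e_j(A_j,\overline{A_j})+\tfrac12 e_j(A_j,\overline{A_j})$. The paper obtains the cut bound through separate upper and lower one-sided inequalities rather than moving vertices one at a time, but the two arguments are equivalent.
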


\begin{proof}
  First observe that for all $x,y \ge 0$, $e_x(S_y,\ov{S_y}) = e_x(A_y,\ov{A_y})$; thus it suffices to prove
  \[
    \E\Big[\abs*{e_t(A_{j+1},\ov{A_{j+1}}) - e_t(A_{j},\ov{A_{j}}) }
    \;\Big\vert\;  \filtration{j}\Big] \le e_{j}(A_{j},\ov{A_{j}}).
  \]
  Let $ A_{j}\ominus A_{j+1}$ be the symmetric difference of $A_{j}$
  and $A_{j+1}$. Every single vertex $v$ that swaps from
  $A_{j}$ to $\ov{A_{j}}$ or vice versa removes at most $d(v)$ edges from
  the cut. This gives a bound of
  \[
    e_t(A_{j+1},\ov{A_{j+1}}) \ge e_t(A_{j},\ov{A_{j}}) - \sum_{v \in
    A_{j}\ominus A_{j+1}} d(v).
  \]
Likewise, every single vertex $v$ that swaps
  from $A_{j}$ to $\ov{A_{j}}$ or vice versa adds at most $d(v)$ edges
  to the cut, giving
  \[
    e_t(A_{j+1},\ov{A_{j+1}}) \le e_t(A_{j},\ov{A_{j}}) + \sum_{v \in
    A_{j}\ominus A_{j+1}} d(v).
  \]
  Therefore, the absolute change is bounded as follows,
  \begin{equation*}
    \abs{e_t(A_{j+1},\ov{A_{j+1}}) - e_t(A_{j},\ov{A_{j}})} \leq \sum_{v
    \in A_{j}\ominus A_{j+1}} d(v).
  \end{equation*}
  Hence
  \[
    \E\Big[\abs[\big]{e_t(A_{j+1},\ov{A_{j+1}}) - e_t(A_{j},\ov{A_{j}})}
    \Bigm|  \filtration{j}\Big] \le \E\Big[\sum_{v \in A_{j}\ominus
    A_{j+1}} d(v)\Bigm|  \filtration{j}\Big].
  \]
  Moreover,
  \begin{align*}
    \E\Big[\sum_{v \in A_{j}\ominus A_{j+1}} &d(v) \Bigm|
    \filtration{j}\Big] = \sum_{v \in V}\pr(v \in A_{j}\ominus A_{j+1} \mid \filtration{j})\cdot d(v) \\
    &= \sum_{v \in A_{j}}\pr(v \in \ov{A_{j+1}}\mid
    \filtration{j})\cdot d(v) + \sum_{v \in \ov{A_{j}}}\pr(v \in A_{j+1} \mid \filtration{j})\cdot d(v)\\
    &= \sum_{v \in A_{j}}
    \frac{e_{j}(v,\overline{A_{j}})}{2d(v)}\cdot d(v) + \sum_{v
    \in \ov{A_{j}}}\frac{e_{j}(v,A_{j})}{2d(v)}\cdot d(v)\\
    &= \frac{1}{2}e_{j}(A_{j},\ov{A_{j}}) +
    \frac{1}{2}e_{j}(A_{j},\ov{A_{j}})\\
    &= e_{j}(A_{j},\ov{A_{j}})\,.
  \end{align*}
  This finishes the proof.
\end{proof}

%% file: sections/3_main_result.tex
\section{Proof of the Main Theorem}\label{sec:main-proof}%
In this section, our goal is to prove \cref{thm:upper-bound}. The heart of the proof is the two-opinion case; given this, we will be able to follow the argument of Berenbrink et al.~\cite{DBLP:conf/icalp/BerenbrinkGKM16} to extend to the general case. Our result is the following, which is essentially the two-opinion case.

\begin{restatable}{theorem}{UpperBoundTwoOpinion}\label{thm:voter-absorb-two-opinion}\leanok[thm:voter-absorb-two-opinion]%
    There exists a constant $b>0$ such that the following holds. Consider the standard two-opinion voter model on a temporal graph $\calG$ with fixed vertex degrees and minimum degree $d_{\min}$, with arbitrary initial 
    minority set $s_0$. Let $\Delta_0,\Delta_1,\dots \ge 1$ be integers, and let $\phi_0,\phi_1,\dots$ be real numbers in $[0,1]$. For all $j \ge 0$, let $I_j^- = \Delta_0 + \dots + \Delta_{j-1}$, let $I_j^+ = I_j^- + \Delta_j - 1$, and let $I_j = [I_j^-, I_j^+]$. Suppose that for all $j \ge 0$, $\phi^{I_j}(\calG) \ge \phi_j$. Let
    \[
    J = \min\Big\{j \colon \sum_{\ell=0}^j \phi_\ell \ge b\Big(\frac{\Vol(s_0)}{d_{\min}} + \log(1+\Vol(s_0))\Big)\Big\},
    \]
    where we require $J < \infty$. Then with probability at least $1/2$, the consensus time is at most $\Delta_0+\dots+\Delta_J$.
\end{restatable}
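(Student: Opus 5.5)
We follow the outline of \cref{sec:upper-sketch}. First we define stopping times $0=T_0<T_1<\dots$. Set $T_{i+1}$ to be the first time $t>T_i$ such that either $t=I_{i+1}^-$, or $\Vol(S_t)\notin[\tfrac12\Vol(S_{T_i}),\tfrac32\Vol(S_{T_i})]$, or $S_t=\emptyset$. We capped at $I_{i+1}^-$ rather than $I_{i+1}^+ +1$ so that $T_i\le I_i^-$ always. This way the window $[T_i,T_{i+1})$ is contained in $[T_i,I_{i}^+]$ and never starts later than the interval assigned to it. (If the early stops cause the indexing to lag behind, the index bookkeeping needs care: either we use "the interval containing $T_i$'s next full window", or we accept wasting at most one interval per step. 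We aim for the former.) We call $[T_i,T_{i+1})$ \emph{stable} if $\E[\abs{\Vol(S_{T_{i+1}})-\Vol(S_{T_i})}\mid\calH_{T_i}]<\Vol(S_{T_i})/8$, and \emph{unstable} otherwise.

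\textbf{Stable case.} Since $\Vol(S_t)$ stays within a factor $3/2$ on the window, the per-step concavity drift bound quoted from~\cite{DBLP:conf/icalp/BerenbrinkGKM16} gives
\[
\E[\psi(S_{T_i})-\psi(S_{T_{i+1}})\mid\calH_{T_i}]\ \ge\ c\,d_{\min}\,\mu/\Vol(S_{T_i})^{3/2},
\qquad \mu=\E\Big[\sum_{t=T_i}^{T_{i+1}-1}e_t(S_t,\ov{S_t})\Bigm|\calH_{T_i}\Big].
\]
We then bound $\mu$ below. Let $t'\in I_i$ be a good step for the fixed set $S_{T_i}$, which satisfies $\phi^{t'}(S_{T_i})\ge\phi_i$. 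Let $\mu'$ be the truncated sum up to $\min\{T_{i+1},t'\}$.

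If $\mu'\ge \phi_i\Vol(S_{T_i})/4$, then $\mu\ge\mu'$ directly.

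Otherwise, we telescope \cref{lem:stepwise-edges-bound} with the tower property. This gives $\E\abs{e_{t'}(S_{t'},\ov{S_{t'}})-e_{t'}(S_{T_i},\ov{S_{T_i}})}\le\mu'$ on the event $T_{i+1}>t'$. Stability together with Markov's inequality shows this event has probability at least $1/2$, roughly. Hence the $t'$ term contributes at least about $\phi_i\Vol(S_{T_i})/4$.

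In both cases, using $\Vol(S_{T_i})\le\Vol(s_0)$ in expectation (by \cref{cor:vol-minority-supermartingale}) and the bound $\le m$, we obtain a drift of $\psi$ of at least $c\,d_{\min}\phi_i/\sqrt{\Vol(S_{T_i})}$. Because $\psi\le\sqrt{\Vol}$, this is most naturally handled through the potential $\Vol(S)^{1/2}\cdot\sqrt{\cdot}$ normalisation below.

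\textbf{Unstable case.} We use $\chi(S)=\log(1+\Vol(S))$. Since $\Vol(S_t)$ is a supermartingale and $\log$ is concave, $\chi(S_{T_i})$ is a supermartingale. A generic lemma then applies: if a nonnegative supermartingale $X$ satisfies $\E\abs{X'-X}\ge X/8$, with $X'$ stopped at its exit from $[X/2,3X/2]$, then $\E[\log(1+X')]\le\log(1+X)-c$. This follows from strict concavity, since a mean-zero-or-negative displacement of relative size $\Omega(1)$ costs $\Omega(1)$ in the log. The overshoot past the exit boundary only helps, because larger jumps increase the concavity loss. Care is needed when $X$ is small (e.g.\ $X=d_{\min}$), which is why we use $1+X$.

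\textbf{Combining.} We set $\Psi_i=\frac{\sqrt{\Vol(s_0)}}{d_{\min}}\psi(S_{T_i})+\chi(S_{T_i})$. A stable window then yields drift at least $c\phi_i$ through $\sqrt{\Vol(s_0)}/\sqrt{\Vol(S_{T_i})}\ge$ const. Strictly, we need $\Vol(S_{T_i})\lesssim\Vol(s_0)$. Since this holds only in expectation, we would instead use $\sqrt{m}$ normalisation if needed, or note that $\Vol(S_{T_i})\le 2\Vol(s_0)$ fails with probability at most $1/2$ by the maximal inequality, and stop there. An unstable window yields drift at least $c\ge c\phi_i$. Hence $\Psi_i+c\sum_{\ell<i}\phi_\ell$, stopped at absorption, is a supermartingale. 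Optional stopping at $\min\{i,\text{absorption}\}$ with $i=J+1$, together with $\Psi_0\le \Vol(s_0)/d_{\min}+\log(1+\Vol(s_0))$ and $\Psi\ge0$, bounds the probability of non-absorption by $\Psi_0/(c\sum_{\ell\le J}\phi_\ell)\le1/2$ for large $b$. Since $T_{J+1}\le I_{J+1}^-=\Delta_0+\dots+\Delta_J$, this gives the claim.

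The main obstacle is the stable-case lower bound on $\mu$. It requires controlling the good step $t'$ for the stale set $S_{T_i}$ via \cref{lem:stepwise-edges-bound} while conditioning on $T_{i+1}>t'$. It also requires the index alignment between windows and the intervals $I_j$.
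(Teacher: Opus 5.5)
Your plan is essentially the paper's proof: your cap $I_{i+1}^-$ equals the paper's $I_i^{+}+1$, so $T_i\le I_i^-$ holds automatically and there is no index lag to manage. The stable-case bound on $\mu$ via $\mu'$ and \cref{lem:stepwise-edges-bound}, the $\chi$-drift lemma for unstable windows, and the combined potential with weight $\sqrt{\Vol(s_0)}/d_{\min}$ all match the paper, and the one fix needed is in the combining step. There the $\sqrt{m}$ normalisation would only put $\sqrt{m\Vol(s_0)}/d_{\min}$ in place of $\Vol(s_0)/d_{\min}$ in $J$, while a threshold of $2\Vol(s_0)$ already uses the whole $1/2$ failure budget via the maximal inequality; so stop at a larger multiple such as $8\Vol(s_0)$ (failure probability at most $1/8$), as the paper does with $\tau'$, and stop the $\Psi$-supermartingale at that time too.
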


Note that when $\Vol(s_0) = \Theta(m)$ in \cref{thm:voter-absorb-two-opinion}, the $\log(1+\Vol(s_0))$ term is dominated by the $m/d_{\min} = \Theta(n)$ term.

We will prove \cref{thm:voter-absorb-two-opinion} in \cref{sec:combining-cases}, then prove \cref{thm:upper-bound} from \cref{thm:voter-absorb-two-opinion} in \cref{sec:multi-opinion}. To this end, until the end of \cref{sec:combining-cases}, let the following be defined as in the statement of \cref{thm:voter-absorb-two-opinion}: $\calG$, $d_{\min}$, $J$, and (for all $j$) $\Delta_j$, $\phi_j$, $I_j^+$, $I_j^-$ and $I_j$.
 Let $S_0\subseteq V$ be a fixed initial minority
set, and let $(S_t\colon t\ge 0)$ be the minority set of a standard
voter model running on $\calG$ from $S_0$ with $\kappa=2$ opinions.

We will bound the likely absorption time of $(S_t)$ by considering an embedded voter process. One might expect this voter process to consider times $I_1^-,I_2^-,\dots$, but it will be convenient for us to ``stop an interval early'' if the volume of $(S_t)$ deviates significantly from its initial value. As such we define the following sequence of stopping times.

\begin{definition}[Stopping times]\label{def:stopping-times}\leanok%
  Let $T_0 = 0$. Given $T_0,\dots,T_j$ for some $j \ge 0$, we define $T_{j+1}$ as follows. Let 
  \[T_{j_{\min}}\coloneqq\min\big\{t \geq T_j \colon
          \Vol(S_t)\notin \Big[\tfrac{1}{2}\Vol(S_{T_j}),\;
      \tfrac{3}{2}\Vol(S_{T_j})\Big] \textnormal{ or }\; S_t = \emptyset\big\}\,.\]
  Then
  \[T_{j+1}\coloneqq\min\set[\Big]{I_{j}^+ + 1,\; T_{j_{\min}}}\,.\]      
\end{definition}

Observe that $T_{j_{\min}}$ and $T_j$ are indeed stopping
times for all $j$.\label{stmt:are-stopping-times}\leanok%

\begin{definition}[Embedded voter process]\label{def:embedded-voter-process}\leanok%
  The \emph{embedded voter process} is the process $((T_j, S_{T_j})\colon
  j\geq0)$ where the minority set is sampled only at the stopping times
  defined in \cref{def:stopping-times}. Moreover, the \emph{embedded
  volume process} is defined by $(\Vol(S_{T_j})\colon j\geq0)$.
\end{definition}
Observe that the embedded voter process is a Markov chain and that the embedded volume process is a supermartingale, from the corresponding properties of the original voter process (together with the optional stopping theorem).
 (If $T_j$ were not included in the $j$'th state, however, then the embedded voter process would not be a Markov chain --- indeed, the value of each graph $G_t$ on which the voter process runs in $[T_j,T_{j+1}-1]$ depends on $t$ itself, not only on $t-T_j$).

Recall from \cref{thm:upper-bound} that our goal is to prove likely absorption in $(S_t)$ within time $\Delta_0 + \dots + \Delta_J = I_{J}^+ + 1$; since $T_J \le I_J^+ + 1$, it will suffice to prove likely absorption within time $T_J$, and we will focus our efforts on the embedded chain from this point forward. We will analyse the embedded chain by splitting the intervals $[T_j, T_{j+1}-1]$ into two kinds depending on whether or not $\Vol(S_t)$ is likely to fluctuate significantly over $t \in [T_j, T_{j+1}-1]$.

\begin{definition}[Stable and unstable
  intervals]\label{def:stable-unstable-interval}\leanok%
  Let $j\in\N$, and let $\filtrationValue{T_j}$ be a possible value of
  $\filtration{T_j}$ such that $S_{T_j} \ne \emptyset$ whenever $\filtrationValue{T_j} =
  \filtration{T_j}$. If $\E\bigl[\abs{\Vol(S_{T_{j+1}}) - \Vol(S_{T_j})} \bigm|
  \filtration{T_j} = H_{T_j}\bigr] < \Vol(S_{T_j})/8$, we say that both $\filtrationValue{T_j}$ and
  the interval $[T_j, T_{j+1})$ are \emph{stable}. Otherwise, we say
  they are both \emph{unstable}.
\end{definition}

We will analyse stable intervals in \cref{sec:case-1}, then analyse unstable intervals in \cref{sec:case-2}, then combine the two cases in \cref{sec:combining-cases}.

\input{sections/3.1_case_1}
\input{sections/3.2_case2}
\input{sections/3.3_interplay}
\input{sections/3.4_multi_opinion}

%% file: sections/3.1_case_1.tex
\subsection{Stable intervals}\label{sec:case-1}%

In this section we prove backwards bias in $\psi$ conditioned on
$\filtration{T_j} = \filtrationValue{T_j}$ where $\filtrationValue{T_j}$ is stable, so that
$\E\bigl[\,\abs{\Vol(S_{T_{j+1}}) - \Vol(S_{T_j})} \bigm|  \filtration{T_j} = H_{T_j}\,\bigr] <
\frac{1}{8}\Vol(S_{T_j})$. We do this as
\cref{lem:psi-down-drift}. We first import the following
result from~\cite{DBLP:conf/icalp/BerenbrinkGKM16} which guarantees
backward bias at each step of the standard (non-embedded) voter model.

\begin{lemma}\label{lem:potdec-regular}\leanok%
  Let $t \ge 0$, let $\filtrationValue{t}$ be a possible value of $\filtration{t}$, and let
  $s_t$ be the value of $S_t$ determined by $\filtration{t}$. Suppose that
  $s_t \ne \emptyset$. Then 
  \[
    \E[\psi(S_{t+1}) \, \lvert \,  \filtration{t}=\filtrationValue{t}] \le \psi(s_t) -
    \frac{\dmin}{32}\cdot\frac{e_t(s_t,\overline{s_t})}{\psi(s_t)^3}.
  \]
\end{lemma}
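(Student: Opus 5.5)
The plan is a one-step second-order (concavity) analysis of $\psi(S)=\sqrt{\Vol(S)}$, in the spirit of~\cite{DBLP:conf/icalp/BerenbrinkGKM16}. Condition on $\filtration{t}=\filtrationValue{t}$. By symmetry between the two opinions we may assume $s_t=A_t$. Write $v\coloneqq\Vol(s_t)\le m$ and $X\coloneqq\Vol(A_{t+1})-v$. Since $\Vol(S_{t+1})=\min\{\Vol(A_{t+1}),\Vol(\ov{A_{t+1}})\}\le v+X$, we get $\psi(S_{t+1})\le\sqrt{v+X}$. By \cref{lem:volume_A_martingale}, $\E[X\mid\filtration{t}=\filtrationValue{t}]=0$. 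The exact identity
\[
\sqrt{v+x}=\sqrt v+\frac{x}{2\sqrt v}-\frac{(\sqrt{v+x}-\sqrt v)^2}{2\sqrt v}
\]
then gives
\[
\E[\psi(S_{t+1})\mid\filtration{t}=\filtrationValue{t}]\le\psi(s_t)-\frac{1}{2\sqrt v}\,\E\big[(\sqrt{v+X}-\sqrt v)^2\big].
\]
So it suffices to show $\E[(\sqrt{v+X}-\sqrt v)^2]\ge \dmin e_t(s_t,\ov{s_t})/(16v)$.

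Next, decompose $X=X^+-X^-$. Here $X^-=\sum_{u\in s_t}d(u)\mathbf 1[u\text{ leaves}]$, with each $u$ leaving independently with probability $p_u=e_t(u,\ov{s_t})/(2d(u))$. Similarly $X^+=\sum_{w\in\ov{s_t}}d(w)\mathbf 1[w\text{ joins}]$, with probabilities $q_w=e_t(w,s_t)/(2d(w))$. Note that $\E X^+=\E X^-=\tfrac12 e_t(s_t,\ov{s_t})\eqqcolon e/2$ and that $0\le X^-\le v$. The two parts are independent, and each is a sum of independent terms, since all vertices update independently.

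I would split into two cases. \emph{Case (a): $\E[\min\{X^-,X^+\}]$ is not too small.} Here the useful contribution comes from the downward move. On $\{X\le 0\}$ we have $(\sqrt{v+X}-\sqrt v)^2\ge X^2/(4v)$, since $\sqrt{v+X}+\sqrt v\le 2\sqrt v$. On $\{X>0\}$ we use $(\sqrt{v+X}-\sqrt v)^2\ge \min\{X^2,vX\}/(9v)$. Either way
\[
\E[(\sqrt{v+X}-\sqrt v)^2]\ge \frac{1}{9v}\,\E\big[\min\{X^2,v|X|\}\big].
\]
It then remains to lower bound $\E[\min\{X^2,v|X|\}]$ by $\Omega(\dmin e)$.

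The key idea is that every nonzero jump of a single vertex has size at least $\dmin$. Moving vertices on the $s_t$ side have degree at most $v$, so a single such mover already contributes $\min\{X^2,v|X|\}\gtrsim \dmin|X|$ unless it is cancelled by the other side. Concretely, I would condition on $X^+$ and use the variance of $X^-$:
\[
\mathrm{Var}(X^-)=\sum_u d(u)^2p_u(1-p_u)\ge \tfrac12\dmin\sum_u d(u)p_u=\tfrac14\dmin e_t(s_t,\ov{s_t}),
\]
using $p_u\le 1/2$. This should give $\E[X^2\wedge v|X|]\ge c\,\dmin e$ through the inequality $\E[\min\{Y^2,vY\}]\ge \E[\min\{(Y-\E Y)^2, \ldots\}]$ for $|Y|$-bounded pieces. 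The symmetric argument for $X^+$ needs the truncation at $v|X|$, because a single joining vertex may have degree far larger than $v$; such a vertex still contributes $v\cdot d(w)\ge v\dmin\ge \dmin X^-$-type mass.

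I expect this last step to be the main obstacle: converting the variance bound $\mathrm{Var}(X)\ge \dmin e/2$ into a lower bound on the truncated quantity $\E[\min\{X^2,v|X|\}]$ when individual jumps on the complement side can be huge compared with $v$. I would handle it by a direct case split on which side carries at least half of the expected moving mass. If it is $s_t$, bounded jumps ($\le v$) make the truncation harmless and the variance bound applies directly. If it is $\ov{s_t}$, each joining event alone contributes at least $v\,d(w)\cdot\tfrac{1}{4}$ via the $v|X|$ branch, since with probability at least $1/2$ nothing cancels it. Summing gives $\Omega(v\sum_w d(w)q_w)=\Omega(ve)\ge\Omega(\dmin e)$. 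Tracking the constants through these cases should yield the stated factor $\dmin/32$, with $v^{3/2}=\psi(s_t)^3$ in the denominator.
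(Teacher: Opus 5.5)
Your reduction is correct. With $v=\Vol(s_t)$, $X=\Vol(A_{t+1})-v$ and $e=e_t(s_t,\overline{s_t})$, the identity for $\sqrt{v+x}$, $\E X=0$ and $\Vol(S_{t+1})\le v+X$ reduce the lemma to $\E[(\sqrt{v+X}-\sqrt v)^2]\ge d_{\min}e/(16v)$. Your pointwise bound by $\min\{X^2,v|X|\}/(9v)$ is also fine. The paper does nothing at this level: it imports \cite[Lemma~2.1]{DBLP:conf/icalp/BerenbrinkGKM16} and only adds the bookkeeping ($d(u)\ge d_{\min}$, $\sum_{u\in s_t}\lambda_{u,t}=e$, and the Markov property). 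A self-contained route is legitimate, but then the inequality $\E[\min\{X^2,v|X|\}]\ge\tfrac{9}{16}d_{\min}e$ is the entire content of the lemma, and you do not prove it. The bound $\mathrm{Var}(X^-)\ge d_{\min}e/4$ does not transfer to this truncated moment, for three reasons you leave open:
\begin{itemize}
\item $x\mapsto\min\{x^2,v|x|\}$ is not convex, so you cannot integrate out $X^+$ by Jensen.
\item $X^+$ may contain summands $d(w)\gg v$ that cancel $X^-$.
\item Each summand of $X^-$ is at most $v$, but their sum need not be, so the truncation is not harmless.
\end{itemize}
Your displayed ``inequality'' $\E[\min\{Y^2,vY\}]\ge\E[\min\{(Y-\E Y)^2,\ldots\}]$ is not a checkable statement.

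The case split you propose instead does not work.
\begin{itemize}
\item It is vacuous: $\E X^+=\E X^-=e/2$ always, so neither side carries more of the expected moving mass.
\item In the $\overline{s_t}$ branch, ``with probability at least $1/2$ nothing cancels it'' is false. $\pr(X^-=0)=\prod_{u\in s_t}(1-p_u)$, which equals $2^{-|s_t|}$ when every $p_u=1/2$.
\item A joiner $w$ only reaches the $v|X|$ branch if $|X|\ge v$, which needs roughly $d(w)\ge v$.
\item The conclusion $\Omega(ve)$ is wrong outright: $\E[\min\{X^2,v|X|\}]\le\E[X^2]=\mathrm{Var}(X)\le d_{\max}e$, which is far below $ve$ when all degrees are small compared with $v$.
\end{itemize}
The constant $1/32$ is asserted, not derived.

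The gap can be closed, with a much worse constant. Replace $\min\{x^2,v|x|\}$ by the convex Huber function $h_v$, which lies between half of it and all of it. Jensen conditioned on $X^-$ gives $\E h_v(X)\ge\E h_v(X^- -e/2)$. Jensen again lets you keep only a subfamily of $s_t$ whose variance lies in $[\min\{\mathrm{Var}(X^-),v^2/8\},\,3v^2/8]$. A fourth-moment tail bound at level $2v$ (using $h_v\ge\tfrac14h_{2v}$), together with $d_{\min}e\le v^2$, then gives $\E[\min\{X^2,v|X|\}]\ge\tfrac{15}{2048}d_{\min}e$. That yields the lemma with a denominator in the thousands instead of $32$. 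This suffices downstream, but it is not the statement as written.
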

\begin{proof}
  We first import some
  notation from~\cite{DBLP:conf/icalp/BerenbrinkGKM16}. For a vertex $u \in s_t$, let $\lambda_{u,t} \coloneqq
  e_t(u,\overline{s_t})$, and conversely for a vertex $u \in
  \overline{s_t}$ let $\lambda_{u,t} \coloneqq e_t(u,s_t)$. Then
  \cite[Lemma~2.1]{DBLP:conf/icalp/BerenbrinkGKM16} says\footnote{Actually the sum in the statement of \cite[Lemma~2.1]{DBLP:conf/icalp/BerenbrinkGKM16} is over $v \in V$ rather than $v \in S_t$, a stronger bound, but this appears to be a typo as can be seen by comparing it to the last line of their proof, so we cite the weaker version for safety. There are no wider implications for the correctness of~\cite{DBLP:conf/icalp/BerenbrinkGKM16}, as they bound by the sum over $S_t$ everywhere they use Lemma~2.1.} that
  \[
    \E[\psi(S_{t+1}) \, \lvert \,  S_t=s_t] \le \psi(s_t) - \sum_{u
    \in S_t}\frac{\lambda_{u,t}d(u)}{32\psi(s_t)^3} \le \psi(s_t) -
    \frac{\dmin}{32\psi(s_t)^3}\sum_{u \in S_t}\lambda_{u,t}.
  \]
  Observe that each edge between $s_t$ and $\overline{s_t}$ is
  counted exactly once in this sum, so $\sum_{u \in V}\lambda_{u,t}
  = e_t(s_t,\overline{s_t})$. 
  Moreover, by symmetry of opinions, the conditioning on the value of $S_t$ in this statement is equivalent to conditioning on the value of $(A_t,\overline{A_t})$; since the voter model is
  Markov, this in turn is equivalent to conditioning on the value of 
  $\filtration{t}$, so the result follows.
\end{proof}

It is not hard to turn \cref{lem:potdec-regular} into a bound on the backward bias of $\psi(S_t)$ over a whole interval $t \in [T_j,T_{j+1})$,
using the fact that $\Vol(S_t)$ is approximately constant over $t \in
[T_j,T_{j+1})$ by the definition of $T_{j+1}$.

\begin{corollary}\label{cor:potdec-regular}\leanok%
  Let $j,t_j \ge 0$, and suppose $\filtrationValue{t_j}$ is a possible value
  of $\filtration{T_j}$. Let $s_{t_j}$ be the value of
  $S_{T_j}$ determined by $\filtrationValue{t_j}$. Let
  \[
    \mu \coloneqq \E\Big[\sum_{t=t_j}^{T_{j+1}-1}e_t(S_t,\overline{S_t})
    \,\Big|\, \filtration{T_j}=\filtrationValue{t_j}\Big].
  \]
  Then
  \[
    \E[\psi(S_{T_{j+1}}) - \psi(s_{t_j})\mid \filtration{T_j}=\filtrationValue{t_j}]
    \le
    -\frac{\dmin}{24\sqrt{6}\psi(s_{t_j})^3}\mu.
  \]
\end{corollary}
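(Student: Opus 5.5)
The plan is to telescope \cref{lem:potdec-regular} over the random interval $[T_j,T_{j+1})$ and control the denominator $\psi(S_t)^3$ using the volume window that \cref{def:stopping-times} enforces.

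Define the process $Y_t \coloneqq \psi(S_{t\wedge T_{j+1}}) + c\sum_{r=T_j}^{(t\wedge T_{j+1})-1} e_r(S_r,\overline{S_r})$ for $t\ge T_j$, conditioned on $\filtration{T_j}=\filtrationValue{t_j}$, with the constant
\[
c \coloneqq \frac{\dmin}{24\sqrt6\,\psi(s_{t_j})^3}.
\]
We will show that $Y_t$ is a supermartingale. Then optional stopping (\cref{lem:optional-stopping-time}) at $T_{j+1}$ gives $\E[Y_{T_{j+1}}]\le Y_{T_j}=\psi(s_{t_j})$, which rearranges to the claimed inequality with $\mu$ on the right.

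The stopping time $T_{j+1}$ is bounded, since $T_{j+1}\le I_j^++1$. So optional stopping applies with no integrability issues; everything is finite because volumes are bounded by $2m$.

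For the one-step drift, fix $t\in[T_j,T_{j+1})$. By the definition of $T_{j+1}$, every such $t$ satisfies $t<T_{j_{\min}}$. Hence $S_t\neq\emptyset$ and $\Vol(S_t)\le \tfrac32\Vol(s_{t_j})$, so $\psi(S_t)^3\le (3/2)^{3/2}\psi(s_{t_j})^3$.

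Since $\{t<T_{j+1}\}$ is $\filtration{t}$-measurable, we may apply \cref{lem:potdec-regular} conditionally on $\filtration{t}$ on that event. This gives an expected decrease of $\psi$ of at least
\[
\frac{\dmin\, e_t(S_t,\overline{S_t})}{32\,\psi(S_t)^3}\ \ge\ \frac{\dmin\, e_t(S_t,\overline{S_t})}{32\,(3/2)^{3/2}\,\psi(s_{t_j})^3}.
\]
Now check the constant: $32\cdot(3/2)^{3/2}=32\cdot\frac{3\sqrt3}{2\sqrt2}=\frac{48\sqrt3}{\sqrt2}=24\sqrt6$. This decrease exactly offsets the increment $c\,e_t(S_t,\overline{S_t})$ of the compensator, so $\E[Y_{t+1}-Y_t\mid\filtration{t}]\le0$. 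On $\{t\ge T_{j+1}\}$ the increment is zero. Hence $Y$ is a supermartingale, and summing the drifts completes the proof.

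The main subtlety is conditioning. \cref{lem:potdec-regular} is stated for fixed histories at deterministic times, while here $T_j$ is a stopping time and $t$ ranges over a random window. I would handle this by working with the process from time $t_j$ on the event $\{T_j=t_j\}$, where $t_j$ is deterministic because $\filtrationValue{t_j}$ fixes $T_j=t_j$. Then, for each deterministic $t\ge t_j$, I condition on every history $\filtrationValue{t}$ that extends $\filtrationValue{t_j}$ and has $t<T_{j+1}$, apply the lemma pointwise, and average using the tower property.

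The only other care needed is that the bound $\Vol(S_t)\le\tfrac32\Vol(s_{t_j})$ is applied at times strictly before $T_{j+1}$, never at $T_{j+1}$ itself, where the volume may have left the window.
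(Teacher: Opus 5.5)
Your proof is correct and is essentially the paper's argument. The paper writes the expected change as the telescoping sum $\sum_{t=t_j}^{I_j^+}\E[1_{t<T_{j+1}}(\psi(S_{t+1})-\psi(S_t))\mid\cdot\,]$, conditions on the history up to each deterministic time $t$ via the tower property, and applies \cref{lem:potdec-regular} with the bound $\psi(S_t)^3\le(3/2)^{3/2}\psi(s_{t_j})^3$ for $t<T_{j+1}$; this is exactly what your compensated supermartingale with optional stopping at the bounded time $T_{j+1}$ unpacks to.

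The one thing to add is the degenerate case $s_{t_j}=\emptyset$, where your constant $c$ is undefined. The paper handles it separately by noting that then $T_{j+1}=t_j$ and $\mu=0$.
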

\begin{proof}
  Observe that if $s_{t_j} = \emptyset$ then $\mu=0$, $T_{j+1} = t_j$, and the result is immediate. Therefore suppose for the rest of the proof that $s_{t_j} \ne \emptyset$.

  For brevity, let $M \coloneqq \E[\psi(S_{T_{j+1}}) - \psi(s_{t_j})\mid
  \filtration{T_j}=\filtrationValue{t_j}]$. Decomposing the sum into time steps, we have
  \begin{align*}
    M &=
    \E\Big[\sum_{t=t_j}^{T_{j+1}-1}(\psi(S_{t+1})-\psi(S_t))\,\Big|\,\filtration{T_j}=\filtrationValue{t_j}\Big]\\
    &= \sum_{t=t_j}^{I_j^+}
    \E\big[1_{t<T_{j+1}}(\psi(S_{t+1})-\psi(S_t))
    \,\big|\,\filtration{T_j}=\filtrationValue{t_j}\big].
  \end{align*}
  Applying the tower property to expectations over each $\filtration{t}$ (subject to
  the existing conditioning on $\filtration{T_j}$) yields
  \begin{align}\label{eq:cor-pdr}%
    M &= \sum_{t=t_j}^{I_j^+}
    \E\Big[\E\big[1_{t<T_{j+1}}(\psi(S_{t+1})-\psi(S_t))\,\big|\,\filtration{t}\big]\,\Big|\,\filtration{T_j}=\filtrationValue{t_j}\Big].
  \end{align}

  Let $\filtrationValue{t}$ be a possible value of $\filtration{t}$ extending $\filtrationValue{t_j}$. If
  $\filtrationValue{t}$ entails $t \ge T_{j+1}$ then
  \[
    \E[1_{t<T_{j+1}}(\psi(S_{t+1})-\psi(S_t)) \mid \filtration{t}] = 0.
  \]
  Otherwise, $t < T_{j+1} \le T_{j_{\min}}$, so $\filtrationValue{t}$ determines a
  value $s_t$ of $S_t$ with $\Vol(s_t) \in [\tfrac12\Vol(s_{t_j}),
  \tfrac32\Vol(s_{t_j})]$ and in particular with $s_t \ne \emptyset$. Thus
  by \cref{lem:potdec-regular},
  \[
    \E\big[1_{t<T_{j+1}}(\psi(S_{t+1})-\psi(S_t))\,\big|\,\filtration{t}=\filtrationValue{t}\big]
    \le -\frac{\dmin}{32}\cdot
    \frac{e_t(s_t,\overline{s_t})}{\psi(s_t)^3} \le
    -\frac{\dmin}{32(3/2)^{3/2}}
    \cdot\frac{e_t(s_t,\overline{s_t})}{\psi(s_{t_j})^3}.
  \]
  Observe that $32(3/2)^{3/2} = 24\sqrt{6}$, so combining the two cases yields
  \[
    \E\big[1_{t<T_{j+1}}(\psi(S_{t+1})-\psi(S_t))\,\big|\,\filtration{t}\big]
    \le
    -1_{t<T_{j+1}}\frac{\dmin}{24\sqrt{6}}\cdot\frac{e_t(S_t,\overline{S_t})}{\psi(s_{t_j})^3}.
  \]
  Substituting back into~\eqref{eq:cor-pdr} then gives
  \begin{equation*}
    M \le
    -\frac{\dmin}{24\sqrt{6}\psi(s_{t_j})^3}\sum_{t=t_j}^{I_j^+}
    \E\big[1_{t<T_{j+1}}e_t(S_t,\overline{S_t}) \,\big|\,\filtration{T_j}=\filtrationValue{t_j}\big],
  \end{equation*}
  and so the result follows.
\end{proof}

In order to apply \cref{cor:potdec-regular} we will need to bound $\E[\sum_{t=t_j}^{T_{j+1}-1} e_t(S_t,\overline{S_t}) \mid \calH_{T_j}=H_{t_j}]$ below, which we do in \cref{lem:prob-good-event}. Informally, the proof will proceed as follows. Write $s_{t_j}$ for the value of $S_{t_j}$ determined by $\calH_{T_j}=H_{t_j}$. Since $I_j \subseteq [T_j,I_j^+]$, there must be ``one good step'' $t' \in [T_j, I_j^+]$ with
$\phi^{t'}(s_{t_j}) \ge \phi_j$ and hence $e_{t'}(s_{t_j},\overline{s_{t_j}}) \ge \phi_j\Vol(s_{t_j})$. This is promisingly close to a lower bound on $e_{t'}(S_{t'},\overline{S_{t'}})$, which would give us the required lower bound on the mean assuming $t' < T_{j+1}$, but not close enough --- $S_{t'}$ may be very far from $s_{t_j}$. Since $[T_j,T_{j+1})$ is stable, we expect this to be unlikely, but we must formalise this intuition. We can decompose
\begin{align*}
&\E\Big[\sum_{t=t_j}^{T_{j+1}-1} e_t(S_t,\overline{S_t}) \mid \calH_{T_j}=H_{t_j}\Big] \ge\\
&\qquad\E\Big[\sum_{t=t_j}^{(T_{j+1}\wedge t')-1} e_t(S_t,\overline{S_t}) \mid \calH_{T_j}=H_{t_j}] + \E[1_{t'<T_{j+1}}e_{t'}(S_{t'},\overline{S_{t'}}) \mid \calH_{T_j}=H_{t_j}].
\end{align*}
If the first term is small, then we will show using \cref{lem:stepwise-edges-bound} that $e_{t'}(S_{t'},\overline{S_{t'}})$ \emph{is} close to $e_{t'}(s_{t_j},\overline{s_{t_j}})$, and so the second term must be large.

\begin{lemma}\label{lem:prob-good-event}\leanok%
  Let $j,t_j \ge 0$, and suppose $\filtrationValue{t_j}$ is a stable possible value
  of $\filtration{T_j}$.
  Let $s_{t_j} \ne \emptyset$ be the value of $S_{T_j}$ determined by $\filtration{T_j}=\filtrationValue{t_j}$.
  Then
  \begin{equation}\label{eq:prob-good-event-statement}
    \mu\coloneqq \E\Big[\sum_{t=t_j}^{T_{j+1}-1}e_t(S_t,\overline{S_t})\,\Big|\,\calH_{T_j}=H_{t_j}\Big] \ge \phi_j\Vol(s_{t_j})/8.
  \end{equation}
\end{lemma}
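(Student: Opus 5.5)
We fix the good step and split $\mu$ into the part before it and the contribution at it. Since $T_j=t_j\ge I_j^-$ (as $T_j\ge I_{j-1}^+ +1$ fails only when $T_j$ is early, and in any case $T_j\le I_{j-1}^+ +1 = I_j^-$), we first note that $[T_j,I_j^+]\supseteq I_j$ whenever $t_j\le I_j^-$; in fact $T_j\le I_j^-$ always, so $I_j\subseteq[t_j,I_j^+]$. Because $0<\Vol(s_{t_j})\le m$ and $\phi^{I_j}(\calG)\ge\phi_j$, there is some $t'\in I_j$ with $e_{t'}(s_{t_j},\overline{s_{t_j}})\ge\phi_j\Vol(s_{t_j})$. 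We fix this (deterministic, given $H_{t_j}$) $t'$, and write
\[
\mu\;\ge\;\mu'+\E\big[1_{t'<T_{j+1}}e_{t'}(S_{t'},\overline{S_{t'}})\bigm|\calH_{T_j}=H_{t_j}\big],\qquad
\mu'\coloneqq\E\Big[\sum_{t=t_j}^{(T_{j+1}\wedge t')-1}e_t(S_t,\overline{S_t})\Bigm|\calH_{T_j}=H_{t_j}\Big].
\]
If $\mu'\ge\phi_j\Vol(s_{t_j})/8$ we are done, so we assume $\mu'<\phi_j\Vol(s_{t_j})/8$.

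Next we control the drift of the cut at the fixed time $t'$. We consider the process $X_t\coloneqq e_{t'}(S_{t\wedge T_{j+1}},\overline{S_{t\wedge T_{j+1}}})$ for $t\in[t_j,t']$. By \cref{lem:stepwise-edges-bound} and the tower property, applied step by step with the indicator $1_{t<T_{j+1}}$ (which is $\calH_t$-measurable), we get
\[
\E\big[|X_{t'}-X_{t_j}|\bigm|\calH_{T_j}=H_{t_j}\big]\le\sum_{t=t_j}^{t'-1}\E\big[1_{t<T_{j+1}}e_t(S_t,\overline{S_t})\bigm|\calH_{T_j}=H_{t_j}\big]=\mu'.
\]
Hence $\E[X_{t'}\mid\cdot]\ge e_{t'}(s_{t_j},\overline{s_{t_j}})-\mu'\ge\tfrac78\phi_j\Vol(s_{t_j})$. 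On the event $t'<T_{j+1}$ we have $X_{t'}=e_{t'}(S_{t'},\overline{S_{t'}})$, so it remains to show that the contribution of the complementary event $\{T_{j+1}\le t'\}$ to $\E[X_{t'}]$ is at most about $\tfrac34\phi_j\Vol(s_{t_j})$.

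This last step is the main obstacle, and it is where stability enters. Since $t'\le I_j^+$, the event $T_{j+1}\le t'$ implies that $T_{j+1}$ occurred early, so $|\Vol(S_{T_{j+1}})-\Vol(s_{t_j})|\ge\tfrac12\Vol(s_{t_j})$ (this includes absorption, $S=\emptyset$). Markov's inequality together with stability then gives $\pr(T_{j+1}\le t')<1/4$. However, $X_{t'}$ on this event is not bounded by $\phi_j\Vol(s_{t_j})$; a priori it is only bounded by $\Vol(S_{T_{j+1}})$. My first attempt would be a deterministic bound instead: on $\{T_{j+1}\le t'\}$ we have $X_{t'}\le X_{t_j}+|X_{t'}-X_{t_j}|$. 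Then
\[
\E[X_{t'}1_{T_{j+1}\le t'}]\le e_{t'}(s_{t_j},\overline{s_{t_j}})\pr(T_{j+1}\le t')+\mu',
\]
so
\[
\E[1_{t'<T_{j+1}}X_{t'}]\ge e_{t'}(s_{t_j},\overline{s_{t_j}})\big(1-\pr(T_{j+1}\le t')\big)-2\mu'\ge\tfrac34\phi_j\Vol(s_{t_j})-\tfrac14\phi_j\Vol(s_{t_j})\ge\tfrac12\phi_j\Vol(s_{t_j}),
\]
which exceeds $\phi_j\Vol(s_{t_j})/8$. Here we use that $e_{t'}(s_{t_j},\overline{s_{t_j}})\ge\phi_j\Vol(s_{t_j})$ and $\pr<1/4$, with a little care since the cut term multiplies the probability. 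Taking the two cases together yields \eqref{eq:prob-good-event-statement}.
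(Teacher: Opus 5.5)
Your proof is correct and follows the paper's argument closely. It uses the same good step $t'$, the same split via $\mu'$, the same telescoping with \cref{lem:stepwise-edges-bound} to bound the expected change of the cut at time $t'$ by $\mu'$, and the same Markov-plus-stability bound $\pr(T_{j+1}\le t')<1/4$. The only difference is the last step: you lower-bound $\E[1_{t'<T_{j+1}}e_{t'}(S_{t'},\overline{S_{t'}})]$ directly by linearity. The paper instead lower-bounds the probability of the event $\calE$ using a second Markov inequality on $\nu$. Your route is slightly cleaner, and writing it as $\E[1_{t'<T_{j+1}}X_{t'}]\ge\tfrac34 e_{t'}(s_{t_j},\overline{s_{t_j}})-\mu'$ would even make the case split on $\mu'$ unnecessary. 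One small fix: delete the opening clause ``$T_j=t_j\ge I_j^-$'', which is false in general; only $T_j\le I_j^-$ holds, and that is all you use.
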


\begin{proof}
  Let $\filtrationValue{t_j}\in \filtration{T_j}$ be a stable possible value of
  $\filtration{T_j}$. By the definition of $T_{j}$
  we know that $t_{j} \le I_{j-1}^++1 = I_j^-$. By the hypothesis of \cref{thm:upper-bound}, $\phi^{I_j}(\calG) \ge \phi_j$, so there exists $t'\in I_j \subseteq [t_j, I_j^+]$ with
  \[
    e_{t'}(s_{t_j},\overline{s_{t_j}}) \ge \phi_j\Vol(s_{t_j}).
  \]
  Let
  \[
    \mu' \coloneqq \E\Big[\sum_{t=t_j}^{(T_{j+1}\wedge t')-1}e_t(S_t,\overline{S_t})\,\Big|\,\calH_{T_j}=H_{t_j}\Big].
  \]
  It is immediate that $\mu \ge \mu'$, so if $\mu' \ge \phi_j\Vol(s_{t_j})/8$ then \cref{eq:prob-good-event-statement} holds and we are done. For the rest of the proof, suppose instead $\mu' < \phi_j\Vol(s_{t_j})/8$.
  
  We will bound $\mu$ by its term at $t=t'$. Observe that
  \begin{align*}
    \mu &= \sum_{t=t_j}^{I_j^+} \E\big[1_{t<T_{j+1}} e_t(S_t,\overline{S_t}) \,\big|\, \calH_{T_j} = H_{t_j}\big]
    \ge \E\big[1_{t'<T_{j+1}} e_{t'}(S_{t'},\overline{S_{t'}})\mid \calH_{T_j}=H_{t_j}\big].
  \end{align*}
  Let $\calE$ be the event that $T_{j+1} > t'$ and $e_{t'}(S_{t'},\overline{S_{t'}}) \ge \phi_j \Vol(s_{t_j})/4$. Then it follows that
  \begin{align}\label{eq:prob-good-event-E}
      \mu \ge \pr(\calE \mid \calH_{T_j} = H_{t_j}) \cdot \phi_j\Vol(s_{t_j})/4.
  \end{align}
  We break $\calE$ into two sub-events: $\calE_1$, the event that $T_{j+1} > t'$; and $\calE_2$, the event that $e_{t'}(S_{t'},\overline{S_{t'}}) \ge \phi_j\Vol(s_{t_j})/4$. Then
  \begin{align}
    \pr\left(\calE\mid
    \filtration{T_j}=\filtrationValue{t_j}\right)=1-\pr\left(\ov{\calE_1}\mid
    \filtration{T_j}=\filtrationValue{t_j}\right)-\pr\left(\calE_1\cap\ov{\calE_2}\mid
    \filtration{T_j}=\filtrationValue{t_j}\right).\label{eq:prob-good-event-decompb}
  \end{align}

  In order to apply \cref{eq:prob-good-event-E}, we bound each term of \cref{eq:prob-good-event-decompb} separately, starting with the first. By the definition
  of $T_{j+1}$,
  \begin{equation*}
    \{T_{j+1}\leq t'\}\subseteq\{T_{j+1} \le I_j^+\} \subseteq
    \big\{\abs{\Vol(S_{T_{j+1}}) - \Vol(s_{t_j})}\ge \tfrac12\Vol(s_{t_j})\big\}.
  \end{equation*}
  Therefore,
  \begin{equation*}
    \pr(\ov{\calE_1} \mid \filtration{T_j}=\filtrationValue{t_j})\leq
    \pr\big(\abs{\Vol(S_{T_{j+1}}) - \Vol(s_{t_j})}\ge
    \tfrac12\Vol(s_{t_j})\, \big| \,  \filtration{T_j} =\filtrationValue{t_j}\big).
  \end{equation*}
  By Markov's inequality, it follows that
  \begin{equation*}
    \pr(\ov{\calE_1} \mid \filtration{T_j}=\filtrationValue{t_j})\leq
    \frac{\E\big[\abs{\Vol(S_{T_{j+1}}) - \Vol(s_{t_j})} \mid
    \filtration{T_j}=\filtrationValue{t_j}\big]}{\tfrac12\Vol(s_{t_j})}.
  \end{equation*}
  Since $\filtrationValue{t_j}$ is stable, $\E[\abs{\Vol(S_{T_{j+1}}) -
  \Vol(s_{t_j})} \mid \filtration{T_j}=\filtrationValue{t_j}]\leq \Vol(s_{t_j})/8$, so
  \begin{equation*}
    \pr(\ov{\calE_1} \mid \filtration{T_j}=\filtrationValue{t_j})\leq 1/4.
  \end{equation*}
  For brevity, let $p \coloneqq \pr(\calE_1\cap\ov{\calE_2} \mid
  \filtration{T_j}=\filtrationValue{t_j})$; then by~\cref{eq:prob-good-event-decompb} it
  follows that
  \begin{equation}\label{eq:prob-good-event-decomp-2b}%
    \pr(\calE \mid \filtration{T_j}=\filtrationValue{t_j}) \ge 3/4 - p\,.
  \end{equation}

  We now bound $p$ above. Since $\ov{\calE_2}$ and
  $e_{t'}(s_{t_j},\ov{s_{t_j}}) \ge \phi_j\Vol(s_{t_j})$
  together imply
  $\abs{e_{t'}(S_{t'},\ov{S_{t'}})-e_{t'}(s_{t_j},\ov{s_{t_j}})}\geq
  \tfrac{3}{4}\phi_j\Vol(s_{t_j})$,
  \begin{align*}
    p&\leq\pr\Bigl(
      \{T_{j+1}>t'\}\cap\big\{\abs{e_{t'}(S_{t'},\ov{S_{t'}})-e_{t'}(s_{t_j},\ov{s_{t_j}})}\geq
        \tfrac{3}{4}
    \phi_j\Vol(s_{t_j})\big\} \Bigm| \filtration{T_j}=\filtrationValue{t_j}\Bigr)
    \\&=\pr\Bigl(1_{T_{j+1}>t'}\cdot\abs{e_{t'}(S_{t'},\ov{S_{t'}})-e_{t'}(s_{t_j},\ov{s_{t_j}})}\geq
      \tfrac{3}{4}\phi_j\Vol(s_{t_j}) \Bigm| \filtration{T_j}=\filtrationValue{t_j}\Bigr).
  \end{align*}
  By Markov's inequality, letting
  \[
  \nu
  \coloneqq
  \E\Bigl[1_{T_{j+1}>t'}\abs{e_{t'}(S_{t'},\ov{S_{t'}})-e_{t'}(s_{t_j},\ov{s_{t_j}})}\Bigm|
  \filtration{T_j}=\filtrationValue{t_j}\Bigr]\,,
  \]
  it follows that
  \begin{align}\label{eq:markov-event2}%
    p\leq \frac{\nu}{\tfrac{3}{4}\phi_j\Vol(s_{t_j})}.
  \end{align}

  We next bound $\nu$ above. By the triangle inequality applied to a telescoping~sum,
  \begin{equation*}
    \nu\leq
    \E\Big[1_{T_{j+1}>t'}\sum_{t=t_j}^{t'-1}\abs{e_{t'}(S_{t+1},\ov{S_{t+1}})-e_{t'}(S_{t},\ov{S_{t}})}\,\Big|\,\filtration{T_j}=\filtrationValue{t_j}\Big].
  \end{equation*}
  By the tower property, taking
  expectations over $\filtration{t}$ subject to our existing conditioning on
  $\filtration{T_j}$, it follows that
  \begin{align*}
    \nu &\le
    \sum_{t=t_j}^{t'-1}\E\Bigl[\E\bigl[1_{T_{j+1}>t}\cdot\abs{e_{t'}(S_{t+1},\ov{S_{t+1}})-e_{t'}(S_{t},\ov{S_{t}})}\bigm|
    \filtration{t}\bigr] \Bigm| \filtration{T_j}=\filtrationValue{t_j}\Bigr]\\
    &=\sum_{t=t_j}^{t'-1}\E\Bigl[1_{T_{j+1}>t}\cdot \E\bigl[\abs{e_{t'}(S_{t+1},\ov{S_{t+1}})-e_{t'}(S_{t},\ov{S_{t}})}\bigm|
    \filtration{t}\bigr] \Bigm| \filtration{T_j}=\filtrationValue{t_j}\Bigr].
  \end{align*}
  By \cref{lem:stepwise-edges-bound} applied with that lemma's $j$ and $t$ equal to our $t$ and $t'$, respectively, it follows that
  \begin{align*}
    \nu &\le \sum_{t=t_j}^{t'-1}\E\bigl[1_{T_{j+1}>t}\cdot
    e_t(S_t,\overline{S_t}) \bigm| \filtration{T_j}=\filtrationValue{t_j}\bigr]\\
    &= \E\Bigl[\,\sum_{t=t_j}^{(T_{j+1}\wedge t')-1}e_t(S_t,\overline{S_t})
    \Bigm|\filtration{T_j}=\filtrationValue{t_j}\,\Bigr] = \mu' \le
    \phi_j\Vol(s_{t_j})/8\,.
  \end{align*}
  Plugging this bound on $\nu$ into~\cref{eq:markov-event2}, we
  obtain $p \le 1/6$. By \cref{eq:prob-good-event-decomp-2b} it follows that $\pr(\calE \mid \calH_{T_j} = H_{t_j}) > 1/2$, so the result follows from \cref{eq:prob-good-event-E}.
\end{proof}

We now prove the main result of the subsection.

\begin{lemma}\label{lem:psi-down-drift}\leanok%
    Let $j,t_j \ge 0$, and suppose $H_{t_j}$ is a possible value of $\calH_{T_j}$. Let $s_{t_j}$ be the value of $S_{T_j}$ determined by $\calH_{T_j} = H_{t_j}$. Then
    \[
        \E\big[\psi(S_{T_{j+1}})-\psi(s_{t_j}) \,\big|\, \calH_{T_j}=H_{t_j}\big] \le 0.
    \]
    If in addition $s_{t_j} \ne \emptyset$ and $H_{t_j}$ is stable, then
    \[
        \E\big[\psi(S_{T_{j+1}})-\psi(s_{t_j}) \,\big|\, \calH_{T_j}=H_{t_j}\big] \le -\frac{d_{\min}\phi_j}{500\psi(s_{t_j})}.
    \]
\end{lemma}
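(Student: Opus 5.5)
Both bounds follow by combining \cref{cor:potdec-regular} with \cref{lem:prob-good-event}; the only real work is turning the factor $\Vol(s_{t_j})/\psi(s_{t_j})^3$ into $1/\psi(s_{t_j})$ and tracking constants.

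\emph{First claim.} I will argue directly from \cref{cor:potdec-regular}. Its right-hand side is $-\frac{\dmin}{24\sqrt6\,\psi(s_{t_j})^3}\mu$, and $\mu\ge 0$ because it is an expectation of a sum of nonnegative edge counts. So the drift is at most $0$ whenever $s_{t_j}\ne\emptyset$. If $s_{t_j}=\emptyset$, then $T_{j+1}=t_j$ by the definition of $T_{j_{\min}}$, so $S_{T_{j+1}}=s_{t_j}$ and the difference is identically $0$. This case was already handled at the start of the proof of \cref{cor:potdec-regular}, so I only need to cite it. Hence the first inequality holds unconditionally, regardless of stability.

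\emph{Second claim.} Now assume $s_{t_j}\ne\emptyset$ and $H_{t_j}$ is stable. By \cref{lem:prob-good-event}, $\mu\ge\phi_j\Vol(s_{t_j})/8$. Substituting this into \cref{cor:potdec-regular} gives
\[
\E\big[\psi(S_{T_{j+1}})-\psi(s_{t_j})\,\big|\,\calH_{T_j}=H_{t_j}\big]\le-\frac{\dmin\,\phi_j\Vol(s_{t_j})}{8\cdot 24\sqrt6\,\psi(s_{t_j})^3}.
\]
Since $\psi(s)^2=\Vol(s)$, the right-hand side equals $-\dmin\phi_j/(192\sqrt6\,\psi(s_{t_j}))$. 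Finally $192\sqrt6\approx 470.3\le 500$, which gives the stated bound.

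There is no genuine obstacle here; the only points needing care are the bookkeeping ones. First, the empty-set case must be dispatched separately, since \cref{lem:prob-good-event} requires $s_{t_j}\ne\emptyset$. Second, the conditioning used in \cref{cor:potdec-regular} and in \cref{lem:prob-good-event} must be the same event $\calH_{T_j}=H_{t_j}$, so that the two bounds can be chained; it is, since both are stated with that conditioning.
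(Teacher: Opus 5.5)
Your proposal is correct and follows the paper's proof: the first claim comes from \cref{cor:potdec-regular} together with $\mu\ge 0$, and the second from substituting $\mu\ge\phi_j\Vol(s_{t_j})/8$ from \cref{lem:prob-good-event}, using $\psi(s)^2=\Vol(s)$ and $192\sqrt6<500$. You treat $s_{t_j}=\emptyset$ separately, where the corollary's right-hand side is formally $0/0$ since $\psi(\emptyset)=0$; this makes explicit a case the paper leaves implicit.
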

\begin{proof}
  By \cref{cor:potdec-regular}, taking
  \[
    \mu\coloneqq \E\Big[\sum_{t=t_j}^{T_{j+1}-1}e_t(S_t,\overline{S_t})
    \,\Big|\, \filtration{T_j}=\filtrationValue{t_j}\Big],
  \]
  we have
  \begin{align*}
    \E\big[\psi(S_{T_{j+1}}) - \psi(s_{t_j})\,\big|\, \filtration{T_j}=\filtrationValue{t_j}\big]
    &\le
    -\frac{\dmin}{24\sqrt{6}\psi(s_{t_j})^3}\mu.
  \end{align*}
  Since $\mu \ge 0$, the first part of the result follows. 
  If $s_{t_j} \ne \emptyset$ and $H_{t_j}$ is stable, then by \cref{lem:prob-good-event} it further follows that
  \begin{align*}
    \E\big[\psi(S_{T_{j+1}}) - \psi(s_{t_j})\,\big|\, \filtration{T_j}=\filtrationValue{t_j}\big] \le -\frac{\dmin\phi_j\Vol(s_{t_j})}{8\cdot 24\sqrt{6}\psi(s_{t_j})^3} = -\frac{\dmin\phi_j}{192\sqrt{6}\psi(s_{t_j})},
  \end{align*}
  as required.
\end{proof}

%% file: sections/3.2_case2.tex
\subsection{Case 2: Unstable Intervals}\label{sec:case-2}

\begin{lemma}\label{lem:chi-down-drift-generic}\leanok%
  Suppose that $(X_t)$ is a non-negative
  integer-valued supermartingale with filtration $(\sigma(\genericfiltration{t}) \colon t \geq0)$ and initial state $x_0\coloneqq X_0$. Fix $t \ge 0$, let $F_t$ be a possible value of $\calF_t$, let $x_t$ be the value of $X_t$ determined by $F_t$, and suppose $\xi \ge 0$ satisfies
  \[
  \E\bigl[\abs{X_{t+1} - x_t}\bigm|\genericfiltration{t}=\genericValue{t}\bigr] \ge \xi x_t.
  \]
  Then 
  \[
  \E[\log(1+X_{t+1}) - \log(1+X_t)\mid \genericfiltration{t}=\genericValue{t}] \le -1_{x_t>0}\xi^2/96.
  \]
\end{lemma}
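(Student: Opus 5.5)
The case $x_t=0$ is immediate. Since $X$ is a non-negative supermartingale, $\E[X_{t+1}\mid\genericfiltration{t}=\genericValue{t}]\le 0$ forces $X_{t+1}=0$ almost surely, so the increment of $\log(1+X)$ is $0$. So assume $x_t\ge1$ and write $D\coloneqq X_{t+1}-x_t$. Conditional on $\genericfiltration{t}=\genericValue{t}$, we have $D\ge -x_t$, $\E[D]\le 0$ and $\E|D|\ge \xi x_t$.

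The plan is to compare $\log(1+x_t+D)-\log(1+x_t)=\log(1+u)$, where $u\coloneqq D/(1+x_t)\ge -x_t/(1+x_t)>-1$, with a linear function minus a quadratic penalty. Concretely, I would use an elementary inequality of the form
\[
\log(1+u)\le u-c\min\{u^2,|u|\}
\]
for some absolute constant $c$ (for instance $c=1/4$), valid for all $u>-1$. For $|u|\le1$ this follows from Taylor's theorem, since $\log(1+u)\le u-u^2/(2(1+|u|)^2)$-type bounds hold. For $u>1$ it follows from $\log(1+u)\le u-(u-\log(1+u))$ with $u-\log(1+u)\ge c u$. Taking expectations and using $\E[u]\le0$ gives
\[
\E[\log(1+X_{t+1})-\log(1+x_t)]\le -c\,\E[\min\{u^2,|u|\}].
\]

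It remains to show $\E[\min\{u^2,|u|\}]\gtrsim \xi^2$, using $\E|u|\ge \xi x_t/(1+x_t)\ge \xi/2$. This is the main obstacle: the mass of $|u|$ could sit on large positive jumps, where $\min\{u^2,|u|\}=|u|$, which is fine. It could also sit on moderate values, where Jensen's inequality gives $\E[u^2\,1_{|u|\le1}]\ge(\E[|u|1_{|u|\le1}])^2$.

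I would split $\E|u|=\E[|u|1_{|u|\le1}]+\E[|u|1_{|u|>1}]$. One of the two terms is at least $\xi/4$. In the first case, Cauchy--Schwarz gives $\E[u^2 1_{|u|\le1}]\ge \xi^2/16$. In the second case, $\E[|u|1_{|u|>1}]\ge\xi/4\ge\xi^2/4$, using $\xi\le 2$. Note that $\E|D|\le 2x_t$ because $\E[D^+]\le\E[D^-]\le x_t$, so one may assume $\xi\le2$. Either way $\E\min\{u^2,|u|\}\ge\xi^2/16$.

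Combining, the drift is at most $-c\xi^2/16$, and the constant $1/96$ should follow after tuning $c$ (for example $c=1/6$ gives exactly $\xi^2/96$). The integer-valued hypothesis is only used to ensure $x_t\ge1$, so that $x_t/(1+x_t)\ge1/2$.
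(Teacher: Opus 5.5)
Your proof is correct (up to two small slips noted at the end), but it takes a different route from the paper.

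The paper never penalises positive increments. It uses $\log(1+z)\le z$ for $z\ge0$, and $\log(1+z)\le z-z^2/6$ only for $z\in(-1,0)$, where the increment is automatically bounded because $X_{t+1}\ge0$. It then uses the supermartingale property a second time, in the form $\E[\Delta_t^-]\ge\E[\Delta_t^+]$. This moves the whole absolute deviation onto the negative side, giving $\E[\Delta_t^-]\ge\frac12\E|\Delta_t|\ge\xi x_t/2$. One application of Jensen to $\Delta_t^-$ then finishes. So the paper has no large-jump regime, no case split on where the mass of $|u|$ sits, and no need for $\xi\le2$.

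Your two-regime inequality $\log(1+u)\le u-c\min\{u^2,|u|\}$ instead penalises jumps in both directions and handles large positive jumps directly. It uses the supermartingale property only through $\E[u]\le0$ and the crude bound $\xi\le2$. That makes it somewhat more robust, for instance in settings where the deviation sits mostly on large upward jumps. The cost is the case analysis and slightly more delicate constant-tracking. The paper's version is shorter because the negative part is bounded for free.

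Two small points to tidy:
\begin{enumerate}
\item The step $\xi/4\ge\xi^2/4$ needs $\xi\le1$. With only $\xi\le2$ you get $\xi/4\ge\xi^2/8$, which still suffices for your claimed $\xi^2/16$.
\item The Lagrange-form bound $u-u^2/(2(1+|u|)^2)$ you cite only yields $c=1/8$ on $[0,1]$, hence a drift of $-\xi^2/128$. To reach the stated $1/96$ you need $c\ge1/6$. This does hold: $\log(1+u)\le u-u^2/2+u^3/3\le u-u^2/6$ on $[0,1]$, and $\log(1+u)\le\frac56u$ for $u\ge1$. In fact $c=1/4$ works everywhere, giving $\xi^2/64$.
\end{enumerate}
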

\begin{proof}
    For brevity, write $\zeta(x) \coloneqq \log(1+x)$ for all $x\ge 0$. If $x_t = 0$ then $X_{t+1}$ must equal zero (since $X_{t+1} \ge 0$ and $\E[X_{t+1} \mid \calF_t=F_t] \le x_t = 0$); thus the result holds. We may therefore assume $x_t>0$, and hence $x_t \ge 1$, for the rest of the proof.
    
    Let $\Delta_t \coloneqq X_{t+1}-x_t$, and observe that conditioned on $\calF_t=F_t$,
    \[
        \zeta(X_{t+1})-\zeta(x_t) = \log\Big(\frac{1+X_{t+1}}{1+x_t}\Big) = \log\Big(1+\frac{\Delta_t}{1+x_t}\Big).
    \]
    In this expression, $\Delta_t/(1+x_t) \ge -x_t/(1+x_t) > -1$. By the Taylor expansion, for all $z\in (-1,0)$ we have $\log(1+z) \le z-z^2/2+z^3/3 \le z-z^2/6$; moreover, for all $z \ge 0$ we have $\log(1+z) \le z$. 
    Thus taking $z\coloneqq\Delta_t/(1+x_t)$, we obtain
    \[
        \zeta(X_{t+1})-\zeta(x_t)=\log(1+z) \le \frac{\Delta_t}{1+x_t} - \frac{1_{\Delta_t<0}\Delta_t^2}{6(1+x_t)^2}.
    \]
     Since $(X_t)$ is a supermartingale, we have $\E[\Delta_t \mid \calF_t=F_t] \le 0$, so on taking expectations we obtain
    \begin{equation*}
        \E[\zeta(X_{t+1})-\zeta(x_t) \mid \calF_t=F_t] \le -\frac{\E\big[1_{\Delta_t<0}\Delta_t^2\mid \calF_t=F_t\big]}{6(1+x_t)^2}.
    \end{equation*}
    Since $x_t \ge 1$, we have $1+x_t \le 2x_t$; moreover, for all real variables $Z$ we have $\E[Z^2] \ge \E[Z]^2$, and $1_{\Delta_t<0}\Delta_t^2 = (1_{\Delta_t<0}\Delta_t)^2$. It follows that
    \begin{equation}\label{eq:new-jump-chain-foo}
        \E[\zeta(X_{t+1})-\zeta(x_t) \mid \calF_t=F_t] \le -\frac{\E\big[1_{\Delta_t<0}\Delta_t\mid \calF_t=F_t\big]^2}{24x_t^2}.
    \end{equation}

    We next bound this expectation by splitting $\Delta_t$ into positive and negative parts. Let $\Delta_t^+ \coloneqq \Delta_t \vee 0$ and $\Delta_t^- \coloneqq -(\Delta_t \wedge 0)$, so that: $\Delta_t^+, \Delta_t^- \ge 0$; $\Delta_t = \Delta_t^+ - \Delta_t^-$; $|\Delta_t| = \Delta_t^+ + \Delta_t^-$; and $1_{\Delta_t<0}\Delta_t = -\Delta_t^-$.
    Rewriting~\cref{eq:new-jump-chain-foo} in this notation, we obtain
    \begin{equation}\label{eq:new-jump-chain-bar}
        \E[\zeta(X_{t+1})-\zeta(x_t) \mid \calF_t=F_t] \le -\frac{\E\big[\Delta_t^-\mid \calF_t=F_t\big]^2}{24x_t^2}.
    \end{equation}
    Since $(X_t)$ is a supermartingale, 
    \[
    \E[\Delta_t^+ \mid \calF_t=F_t] - \E[\Delta_t^- \mid \calF_t=F_t] = \E[\Delta_t \mid \calF_t = F_t] \le 0,
    \]
    and so $\E[\Delta_t^- \mid \calF_t=F_t] \ge \E[\Delta_t^+ \mid \calF_t=F_t]$. Thus by hypothesis,
    \[
        \E[\Delta_t^- \mid \calF_t = F_t] \ge \tfrac12 \E[\Delta_t^- + \Delta_t^+ \mid \calF_t = F_t] = \tfrac12 \E[|\Delta_t| \mid \calF_t=F_t]\ge \xi x_t/2.
    \]
    Plugging this into \cref{eq:new-jump-chain-bar} then yields
    \begin{equation*}
        \E[\zeta(X_{t+1})-\zeta(x_t) \mid \calF_t=F_t] \le -\frac{\xi^2x_t^2}{96x_t^2} = -\frac{\xi^2}{96}.
    \end{equation*}
    as required.
\end{proof}

\begin{definition}\label{def:chi-potential}\leanok%
    For all sets $S \subseteq V(\calG)$, let $\chi(S) = \log(1+\Vol(S))$.
\end{definition}

\begin{corollary}\label{cor:chi-down-drift-voter}\leanok%
    Let $j,t_j \ge 0$, and suppose $H_{t_j}$ is a possible value of $\calH_{T_j}$. Let $s_{t_j}$ be the value of $S_{T_j}$ determined by $\calH_{T_j} = H_{t_j}$. Then
    \[
        \E\big[\chi(S_{T_{j+1}})-\chi(s_{t_j}) \,\big|\, \calH_{T_j}=H_{t_j}\big] \le 0.
    \]
    If in addition $s_{t_j} \ne \emptyset$ and $H_{t_j}$ is unstable, then
    \[
        \E\big[\chi(\Vol(S_{T_{j+1}}))-\chi(\Vol(s_{t_j})) \,\big|\, \calH_{T_j}=H_{t_j}\big] \le -10^{-4}.
    \]
\end{corollary}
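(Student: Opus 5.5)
We apply \cref{lem:chi-down-drift-generic} to a two-step process built from the embedded volume process. Fix $j$ and $H_{t_j}$. The filtration is indexed by $0,1$: let $\calF_0$ encode the event $\calH_{T_j}=H_{t_j}$, and let $\calF_1 = \calH_{T_{j+1}}$. Set $X_0 \coloneqq \Vol(s_{t_j})$ and $X_1 \coloneqq \Vol(S_{T_{j+1}})$. We work throughout under the conditional measure given $\calH_{T_j}=H_{t_j}$.

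First, $(X_0,X_1)$ is a non-negative integer-valued supermartingale. Integrality and non-negativity are clear because volumes are sums of integer degrees. The supermartingale property $\E[\Vol(S_{T_{j+1}})\mid \calH_{T_j}=H_{t_j}]\le \Vol(s_{t_j})$ follows from \cref{cor:vol-minority-supermartingale} and the optional stopping theorem (\cref{lem:optional-stopping-time}). The theorem applies because $T_{j+1}\le I_j^+ +1$ is a bounded stopping time with $T_{j+1}\ge T_j$. This is exactly the observation made after \cref{def:embedded-voter-process}; the only care needed is to apply optional stopping conditionally on the history up to $T_j$, which is legitimate since $T_j\le T_{j+1}$ are both bounded stopping times.

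For the first claim, apply \cref{lem:chi-down-drift-generic} with $\xi=0$. The hypothesis $\E|X_1-x_0|\ge 0$ holds trivially, and the lemma gives drift at most $0$ for $\log(1+X)$, which is $\chi$.

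For the second claim, assume $s_{t_j}\neq\emptyset$ and $H_{t_j}$ unstable. By \cref{def:stable-unstable-interval}, $\E[|\Vol(S_{T_{j+1}})-\Vol(s_{t_j})|\mid \calH_{T_j}=H_{t_j}]\ge \Vol(s_{t_j})/8$. So the lemma applies with $\xi=1/8$ and $x_0=\Vol(s_{t_j})>0$, since $s_{t_j}\ne\emptyset$ and every degree is at least $1$. It yields drift at most $-(1/8)^2/96=-1/6144$. Since $1/6144 > 10^{-4}$, the bound $-10^{-4}$ follows.

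The displayed statement writes $\chi(\Vol(\cdot))$, which we read as $\chi$ applied to the set, i.e.\ $\log(1+\Vol(\cdot))$. The only step needing real attention is the conditional optional stopping argument establishing the supermartingale property. That step is routine because $T_{j+1}$ is bounded.
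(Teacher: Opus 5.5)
Your proof is correct and takes the same approach as the paper, which likewise applies \cref{lem:chi-down-drift-generic} to the embedded volume process $(\Vol(S_{T_j}))$, using $\xi=0$ for the first claim and $\xi=1/8$ (from the definition of an unstable history) for the second. Your justification of the one-step supermartingale property via optional stopping at the bounded stopping time $T_{j+1}\le I_j^++1$, and your check that $1/6144>10^{-4}$, fill in details the paper leaves implicit.
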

\begin{proof}
    We apply \cref{lem:chi-down-drift-generic}, taking $(X_j)$ to be the embedded volume process $\Vol(S_{T_j})$. For the first part of the statement we take $\xi=0$, and for the second part we take $\xi=1/8$ (which we can do by the definition of an unstable filtration).
\end{proof}

%% file: sections/3.3_interplay.tex
\subsection{Interplay Between the Two Cases}\label{sec:combining-cases}%

\UpperBoundTwoOpinion*
\begin{proof}
    If $s_0 = \emptyset$ then the result is immediate, so suppose $s_0 \ne \emptyset$. Let 
    \begin{align*}
    \tau &\coloneqq \min\{t \ge 0\colon S_t=\emptyset\} \wedge T_J,\\
    \tau' &\coloneqq \tau \wedge \min\{t \ge 0\colon \Vol(S_t)\ge 8\Vol(s_0)\}.
    \end{align*}
    Let $\calE$ be the event that either 
     $\Vol(S_{\tau'}) \ge 8\Vol(s_0)$ or $\tau' = T_J$ and $S_{\tau'} \ne \emptyset$.
    We will first prove that $\pr(\Vol(S_{\tau'}) \ge 8\Vol(s_0)) \le 1/8$, then prove that $\pr(\tau' = T_J \text{ and }S_{\tau'} \ne \emptyset) \le 3/8$. Given these two facts, the result will follow easily by a union bound.

    \medskip\noindent\textbf{Bounding $\boldsymbol{\pr(\Vol(S_{\tau'})\geq 8\Vol(s_0))}$:}
    By Markov's inequality,
    \begin{equation}\label{eq:final-two-opinion-markov}
        \pr(\Vol(S_{\tau'})\geq 8\Vol(s_o))\le \frac{\E[\Vol(S_{\tau'})]}{8\Vol(s_0)}.
    \end{equation}
    Recall that $(\Vol(S_t))$ is a supermartingale, and $\tau'$ is a finite stopping time for it. Thus by the optional stopping theorem, $\E(\Vol(S_{\tau'})) \le \Vol(s_0)$. It therefore follows by \cref{eq:final-two-opinion-markov} that
    \begin{equation}\label{eq:final-two-opinion-event-1}
        \pr(\Vol(S_{\tau'})\geq 8\Vol(s_o)) \le 1/8,
    \end{equation}
    as claimed.

    \medskip\noindent\textbf{Bounding $\boldsymbol{\pr(\tau'=T_J \text{ and }S_{\tau'}\neq \emptyset)}$:} We will first establish downward drift of the following potential. For all $j \ge 0$, let 
    \[
        \Psi_j \coloneqq \frac{\psi(s_0)}{d_{\min}}\psi(S_{T_{j}}) + \chi(S_{T_{j}}).
    \]
    Temporarily fix $j \ge 0$, and suppose $H_{t_j}$ is a possible value of $\calH_{T_j}$. Let $s_{t_j}$ be the corresponding value of $S_{T_j}$ determined by $H_{t_j}$. Then we split into three cases, bounding the drift of $\Psi_j$ in each case. First suppose $H_{t_j}$ is stable and determines $t_j < \tau'$. This implies $s_{t_j} \ne \emptyset$, so by \cref{lem:psi-down-drift} and \cref{cor:chi-down-drift-voter},
    \begin{align*}
        \E\big[\Psi_{j+1}-\Psi_j\mid\calH_{T_j}=H_{t_j}\big] &= \frac{\psi(s_0)}{d_{\min}}\E\big[\psi(S_{T_{j+1}})-\psi(s_{t_j})\mid\calH_{T_j}=H_{t_j}\big]\ +\\
        &\qquad\qquad\qquad\qquad \E\big[\chi(S_{T_{j+1}})-\chi(s_{t_j})\mid\calH_{T_j}=H_{t_j}\big]\\
        &\le -\frac{\phi_j\psi(s_0)}{500\psi(s_{t_j})} + 0.
    \end{align*}
    Since $t_j < \tau'$, $\psi(s_{t_j}) \le \sqrt{8}\psi(s_0)$, so it follows that
    \[
        \E\big[\Psi_{j+1}-\Psi_j\mid\calH_{T_j}=H_{t_j}\big] \le -\phi_j/2000.
    \]
    Similarly, if instead $H_{t_j}$ is unstable and determines $t_j < \tau'$, then $s_{t_j} \ne \emptyset$, so by \cref{lem:psi-down-drift} and \cref{cor:chi-down-drift-voter},
    \[
        \E\big[\Psi_{j+1}-\Psi_j\mid\calH_{T_j}=H_{t_j}\big] \le 0-10^{-4} = -10^{-4}.
    \]
    Finally, if instead $H_{t_j}$ determines $t_j \ge \tau'$, then by \cref{lem:psi-down-drift} and \cref{cor:chi-down-drift-voter},
    \[
        \E\big[\Psi_{j+1}-\Psi_j\mid\calH_{T_j}=H_{t_j}\big] \le 0 + 0 = 0.
    \]
    Thus on combining all three cases, we obtain that for all $j \ge 0$,
    \begin{equation}\label{eq:final-two-opinion-drift}
        \E\big[\Psi_{j+1}-\Psi_j\mid\calH_{T_j}\big] \le -D_j,\mbox{ where } D_j\coloneqq 1_{T_j<\tau'}\Big(\frac{\phi_j}{4000} \wedge 10^{-4}\Big).
    \end{equation}

    Let $J' = \min\{j \ge 0 \colon T_j \ge \tau'\}$. Since $s_0 \ne \emptyset$ we have $\tau' > 0$, so $J' > 0$ and $T_{J'-1} < \tau' \le T_{J'}$. We will next apply the optional stopping theorem in a standard way to argue that the total accumulation of expected drift up to time $T_{J'}$, i.e.\ $\sum_{\ell=0}^{J'-1} D_\ell$, is unlikely to be large; we will exploit this to bound $J'$ above, in order to bound $T_{J'}$ above, in order to finally bound $\tau'$ above. 
    
    For all $j \ge 0$, let 
    \[
    X_j \coloneqq \Psi_j + \sum_{\ell=0}^{j-1}D_\ell.
    \]
    By \cref{eq:final-two-opinion-drift}, $(X_j)$ is a supermartingale under the filtration generated by $(\calH_{T_j})$; moreover, $J'$ is a finite stopping time for $(X_j)$. Thus by the optional stopping theorem, $\E[X_{J'}] \le X_0$.
    Expanding out both sides, we obtain
    \[
        \E[\Psi_{J'}] + \E\Big[\sum_{\ell=0}^{J'-1}D_\ell\Big] \le \Psi_0.
    \]
    Since $\Psi_{J'} \ge 0$, it follows that $\E[\sum_{\ell=0}^{J'-1}D_\ell] \le \Psi_0$. Markov's inequality then~yields
    \begin{equation}\label{eq:final-two-opinion-D-bound}
        \pr\Big(\sum_{\ell=0}^{J'-1} D_\ell \ge 4\Psi_0\Big) \le 1/4.
    \end{equation}

    We will next use \cref{eq:final-two-opinion-D-bound} to prove $\pr(J' \ge J) \le 1/4$. For this, it is enough to prove that $\sum_{\ell=0}^{J'-1} D_\ell < 4\Psi_0$ implies $J' < J$; by the definition of $J$, this is equivalent to proving
    \begin{equation}\label{eq:final-two-opinion-intermediate-goal}
        \sum_{\ell=0}^{J'} \phi_\ell < b\Big(\frac{\Vol(s_0)}{d_{\min}} + \log(1+\Vol(s_0))\Big).
    \end{equation}
    To this end, suppose that $\sum_{\ell=0}^{J'-1} D_\ell < 4\Psi_0$. By the definition of $J'$, for all $\ell < J'$ we have $T_\ell < \tau'$; thus by the definition of the $D_\ell$'s,
    \[
    \sum_{\ell=0}^{J'-1}D_\ell = \sum_{\ell=0}^{J'-1} \Big(\frac{\phi_\ell}{4000} \wedge 10^{-4}\Big) \ge 10^{-4}\sum_{\ell=0}^{J'-1}\phi_\ell.
    \]
    Since $\sum_{\ell=0}^{J'-1} D_\ell < 4\Psi_0$ by assumption, it follows that
    \[
        \sum_{\ell=0}^{J'-1}\phi_j < 4\cdot 10^4\Psi_0 = 4\cdot 10^4\Big(\frac{\Vol(s_0)}{d_{\min}} + \log(1+\Vol(s_0))\Big);
    \]
    on taking $b \ge 8\cdot 10^4$ this implies \cref{eq:final-two-opinion-intermediate-goal}, and hence $J' < J$ as required. We have therefore proved
    \begin{equation}\label{eq:final-two-opinion-J-bound}
        \pr(J' \ge J) \le 1/4.
    \end{equation}

    We now use \cref{eq:final-two-opinion-J-bound} to prove $\pr(\tau' = T_J\text{ and }S_{\tau'} \ne \emptyset) \le 1/4$, our original goal. It suffices to prove that whenever $\tau' = T_J\text{ and }S_{\tau'} \ne \emptyset$ occurs, $J'\ge J$. By the definition of $J'$, we have $T_{J'} \ge \tau'=T_J > T_{J'-1}$, where the final inequality follows since $S_{\tau'} \ne \emptyset$. This implies $J' > J-1$ and hence $J' \ge J$. It is therefore immediate from \cref{eq:final-two-opinion-J-bound} that
    \begin{equation}\label{eq:final-two-opinion-event-2}
        \pr(\tau' = T_J\text{ and }S_{\tau'} \ne \emptyset) \le 1/4<3/8.
    \end{equation}

    \medskip\noindent\textbf{Putting everything together:}
    Combining \cref{eq:final-two-opinion-event-1} and \cref{eq:final-two-opinion-event-2} with a union bound, we see that with probability at least $1/2$, $\calE$ does not occur, it follows that either $\Vol(S_{\tau'})<8\Vol(s_o)$ or both $\tau' = T_J$ and $S_{\tau'} = \emptyset$. In both cases, $\tau'=\tau.$ By the definition of $\tau$, this is equivalent to saying that $\tau$ is the consensus time. Since by the definitions of $\tau$ and $T_J$ we have $\tau \le T_J \le I_J^+ + 1 = I_{J+1}^-$, the result follows.
\end{proof}

%% file: sections/3.4_multi_opinion.tex
\subsection{Deriving the multi-opinion bound from the two-opinion bound}\label{sec:multi-opinion}

The following argument to derive a multi-opinion bound from a two-opinion bound follows essentially the same method as Berenbrink et al.~\cite{DBLP:conf/icalp/BerenbrinkGKM16}; however, since the lengths of our ``phases'' are more complex, \cref{thm:voter-absorb-two-opinion} has an additive error term not present in the two-opinion result of \cite{DBLP:conf/icalp/BerenbrinkGKM16}, and the presentation in~\cite{DBLP:conf/icalp/BerenbrinkGKM16} is quite abbreviated, we give the argument in full detail.

\UpperBound*
\begin{proof}
Let $\kappa$ be the number of opinions present in the model. The result is immediate if $\kappa=1$, so suppose $\kappa \ge 2$. Let $B$ be a large constant, to be determined later. First, we divide time into phases of deterministic length. Let $r_{\max} \coloneqq \lfloor Bm/d_{\min} \rfloor$. Let $\ell_0 = t_0 = 0$, and for all integers $r$ with $1 \le r \le r_{\max}$, let 
\[
\ell_r \coloneqq \min\Big\{\ell\colon \sum_{j=0}^{\ell-1}\phi_j \ge r\Big\},\qquad t_r \coloneqq I_{\ell_r}^- = \Delta_0 + \dots + \Delta_{\ell_r-1}.
\]
For all $r \ge 0$, we call the interval $[t_r, t_{r+1}-1]$ the \emph{$r$'th phase}. Note that since each $\phi_j$ lies in $[0,1]$, we have $t_0 < t_1 < \dots < t_{r_{\max}} \le \Delta_0 + \dots + \Delta_J < \infty$. 

Following~\cite{DBLP:conf/icalp/BerenbrinkGKM16}, we now group these phases into ``metaphases'' of random lengths. For all $t \ge 0$, let $\mathcal{O}(t)$ be the set of opinions remaining in the process at time $t$. Let $R_0 = 0$, and for all $\alpha \ge 1$, let
\[
R_\alpha \coloneqq \min\Big(\{r_{\max}\} \cup \big\{r\colon |\mathcal{O}(t_r)| \le (5/6)^\alpha \kappa \vee 1\big\}\Big).
\]
Then the \emph{$\alpha$'th metaphase} is the interval $[t_{R_\alpha},t_{R_{\alpha+1}}-1]$. Thus ignoring the cap at $r_{\max}$ (past which no more phases are defined), before consensus the end of the $\alpha$'th metaphase occurs after however many phases it takes for the number of opinions to have dropped by a cumulative factor of $(5/6)^\alpha$, and after consensus all metaphases are empty.

Let $\beta \coloneqq \lceil \log_{6/5} \kappa \rceil$, and observe that $(5/6)^\beta \kappa \le 1$. Thus by definition, if $R_\beta < r_{\max}$ then $\abs{\mathcal{O}(t_{R_\beta})} = 1$ and so the consensus time is at most $t_{R_\beta} < t_{r_{\max}} \le \Delta_0 + \dots + \Delta_J$ as required. 
It therefore suffices to prove that 
\begin{equation}\label{eq:multi-opinion-goal}
    \pr(R_\beta < r_{\max}) \ge 1/2.
\end{equation}

Our next goal is to prove that taken individually, each metaphase ends in few phases with probability bounded away from zero. We will prove the following claim, where we begin in the middle of a metaphase to allow for repeated application later.

\medskip\noindent\textbf{Claim:} Let $r,\alpha \ge 0$, let $H_{t_r}$ be a possible value of $\calH_{t_r}$, and suppose that $\calH_{t_r}=H_{t_r}$ determines $R_\alpha \le r < R_{\alpha+1}$. Let $r_\alpha$ and $O_\alpha$ be the values of $R_\alpha$ and $\mathcal{O}(t_{r_\alpha})$ determined by $H_{t_r}$, and let 
\[
\xi_\alpha \coloneqq 1 + \Big\lceil b\Big(\frac{6m}{d_{\min}\abs{O_\alpha}} + \log\big(1+6m/|O_\alpha|\big)\Big)\Big\rceil.
\] 
Then $\pr(R_{\alpha+1} > r+\xi_\alpha \mid \calH_{t_r} = H_{t_r}) \le 2/3$.

\medskip\noindent\textbf{Proof of Claim:} If $r+\xi_\alpha \ge r_{\max}$ then $R_{\alpha+1} \le r_{\max} \le r+\xi_\alpha$ with certainty and the Claim holds immediately, so suppose for the rest of the proof that $r+\xi_\alpha < r_{\max}$. Thus 
\begin{equation}\label{eq:multi-opinion-00a}
    \pr(R_{\alpha+1} > r+\xi_\alpha \mid \calH_{t_r}=H_{t_r}) = \pr\big(|\mathcal{O}({t_{r+\xi_\alpha}})| > (5/6)^{\alpha+1}\kappa \vee 1\mid\calH_{t_r}=H_{t_r}\big).
\end{equation}
Next observe that conditioned on $\calH_{t_r}=H_{t_r}$, since $r_\alpha < R_{\alpha+1}$, 
\begin{equation}\label{eq:multi-opinion-00b}
    (5/6)^{\alpha}\kappa \ge |O_\alpha| > (5/6)^{\alpha+1}\kappa \vee 1.
\end{equation}
Applying $(5/6)^\alpha\kappa \ge |O_\alpha|$ in \cref{eq:multi-opinion-00a}, it follows that
\begin{equation}\label{eq:multi-opinion-0a}
    \pr(R_{\alpha+1} > r+\xi_\alpha \mid \calH_{t_r} = H_{t_r}) \le \pr\big(|\mathcal{O}({t_{r+\xi_\alpha}})| \ge \tfrac56|O_\alpha|\mid \calH_{t_r} = H_{t_r}\big).
\end{equation}
Henceforth we will be concerned with this event: that at least $5/6$ of the original $|O_\alpha|$ opinions present at the start of the $r$'th phase remain after $\xi_\alpha$ further phases have elapsed.

We next express the right-hand side of \cref{eq:multi-opinion-0a} in terms of indicator variables. For each opinion $q \in O_\alpha$, let $X_q$ be the indicator variable of the event that $q \in \mathcal{O}(t_{r+\xi_\alpha})$ and that $\mathcal{O}(t_{r+\xi_\alpha}) \ne \{q\}$ (i.e.\ that $q$ has neither vanished nor taken over $\calG$ within $\xi_\alpha$ phases starting from $t_r$). Then by \cref{eq:multi-opinion-0a},
\begin{equation}\label{eq:multi-opinion-0b}
    \pr(R_{\alpha+1} > r+\xi_\alpha \mid \calH_{t_r} = H_{t_r}) \le \pr\Big(\sum_{q \in O_\alpha} X_q > \tfrac{5}{6}\abs{O_\alpha} \,\Big|\, \calH_{t_r} = H_{t_r}\Big).
\end{equation}
(Indeed, if this sum is at most $\tfrac56|O_\alpha|$ then by time $t_{r+\xi_\alpha}$, either some opinion in $O_\alpha$ has taken over or at least $|O_\alpha|/6$ opinions have vanished; in either case, at least $|O_\alpha|/6$ opinions have vanished,
so $|\mathcal{O}(t_{r+\xi_\alpha})| \le 5|O_\alpha|/6 \le (5/6)^{\alpha+1}\kappa$ and hence $R_{\alpha+1} \le r+\xi_\alpha$.)

We next restrict the sum in \cref{eq:multi-opinion-0b} to a subset of opinions in $O_\alpha$ with low volume. For all $q \in O_\alpha$, let $a_q(t_r)$ be the set of vertices with opinion $q$ at time $t_r$ under $\calH_{t_r}=H_{t_r}$. We say that an opinion $q \in O(t_r)$ is \emph{small} if $\Vol(a_q(t_r)) \le 6m/\abs{O_\alpha}$. Let $O_{t_r}^-$ be the set of small opinions in $O(t_r)$, and observe~that 
\begin{equation}\label{eq:multi-opinion-small-1}
    \abs{O_{t_r}^-} \ge \abs{O({t_r})} - \abs{O_\alpha}/3;
\end{equation}
indeed, if this were not the case, then the remaining (at least) $\abs{O_\alpha}/3$ opinions would have total volume greater than $ (6m/|O_\alpha|)\cdot (|O_\alpha|/3)=\Vol(V(\calG))$ which is not possible. 
Moreover, if $q \in O_\alpha \setminus O(t_r)$ then $q$ has already died out by time $t_r$ and so $X_q = 0$.~Thus
\[
    \sum_{q \in O_\alpha}X_q  = \sum_{q \in O(t_r)} X_q \le \sum_{q \in O_{t_r}^-} X_q + \abs{O(t_r) \setminus O_{t_r}^-}.
\]
Thus whenever $\sum_{q \in O_\alpha} X_q > \tfrac56\abs{O_\alpha}$, we have
\[
\sum_{q \in O_{t_r}^-} X_q > \tfrac56\abs{O_\alpha} - \abs{O(t_r) \setminus O_{t_r}^-}.
\]
Thus by \cref{eq:multi-opinion-0b}, it follows that
\begin{align}\label{eq:multi-opinion-0c}
    \pr(R_{\alpha+1} > r+\xi_\alpha& \mid \calH_{t_r} = H_{t_r}) \nonumber\\&\le \pr\Big(\sum_{q \in O_{t_r}^-} X_q \ge \tfrac56\abs{O_\alpha} - \abs{O(t_r)} + \abs{O_{t_r}^-} \,\Big|\, \calH_{t_r} = H_{t_r} \Big).
\end{align}
By Markov's inequality and \cref{eq:multi-opinion-0c},
\begin{align}\label{eq:multi-opinion-1}
    &\pr(R_{\alpha+1} > r+\xi_\alpha \mid \calH_t = H_t) \le \frac{\sum_{q \in O_{t_r}^-} \pr(X_q=1\mid\calH_t=H_t)}{\tfrac56\abs{O_\alpha} - \abs{O(t_r)} + \abs{O_{t_r}^-}}.
\end{align}

To bound the right-hand side of \cref{eq:multi-opinion-1}, fix a small opinion $q \in O_{t_r}^-$. We consider a two-opinion standard voter model on the temporal graph $\calG'_{t'} \coloneqq \calG_{t'-t_r}$, where one opinion is $q$ with initial state $a_q(t_r)$; thus the other opinion ($\ov{q}$, say) is formed by grouping all vertices with opinions other than $q$ at time $t_r$ in $\calG$.
This model couples to our present $\kappa$-opinion voter model in the natural fashion, with opinion $q$ evolving identically between the two models subject to a time offset of $t_r$. Recall that by definition, $t_r = \Delta_0 + \dots + \Delta_{\ell_r-1}$. Let $b$ be as in \cref{thm:voter-absorb-two-opinion}, and let
\[
J' \coloneqq \min\Big\{j\colon \sum_{\ell=\ell_r}^j \phi_\ell \ge b\Big(\frac{\Vol(a_q(t_r))}{d_{\min}} + \log\big(1+\Vol(a_q(t_r))\big)\Big)\Big\};
\]
then by \cref{thm:voter-absorb-two-opinion}, with probability at least $1/2$, opinions $q$ and $\ov{q}$ achieve consensus in the two-opinion model within time $\Delta_{\ell_r} + \dots + \Delta_{J'}$.

We next bound $J'$ above. Note that
\begin{equation}\label{eq:multi-opinion-small-2}
\sum_{\ell=\ell_r}^{\ell_{r+\xi_\alpha}-1} \phi_\ell = \sum_{\ell=0}^{\ell_{r+\xi_\alpha}-1} \phi_\ell - \sum_{\ell=0}^{\ell_r-1} \phi_\ell.
\end{equation}
If $r>0$ (so that $\ell_r>0$), then applying the definition of the phases and the fact that $\phi_{\ell_r-1} \in [0,1]$, $\sum_{\ell=0}^{\ell_r-1}\phi_\ell = \sum_{\ell=0}^{\ell_r-2}\phi_\ell - \phi_{\ell_r-1} \le r+1$; otherwise, $\sum_{\ell=0}^{\ell_r-1}\phi_\ell = 0 < r+1$. In either case, again applying the definition of the phases, it follows from \cref{eq:multi-opinion-small-2} that
\begin{align*}
\sum_{\ell=\ell_r}^{\ell_{r+\xi_\alpha}-1} \phi_\ell &\ge (r+\xi_\alpha) - r - 1 = \xi_\alpha - 1 =\Big\lceil b\Big(\frac{6m}{d_{\min}\abs{O_\alpha}} + \log\big(1+6m/|O_\alpha|\big)\Big)\Big\rceil.
\end{align*}
Since opinion $q$ is small, it follows that
\[
\sum_{\ell=\ell_r}^{\ell_{r+\xi_\alpha}-1} \phi_\ell \ge b\Big(\frac{\Vol(a_q(t_r))}{d_{\min}} + \log\big(1+\Vol(a_q(t_r))\big)\Big),
\]
and hence that $J' \le \ell_{r+\xi_\alpha}-1$. Thus in our original $\kappa$-opinion model, with probability at least $1/2$, opinion $q$ either vanishes or takes over by time $t_r + \Delta_{\ell_r} + \dots + \Delta_{\ell_{r+\xi_\alpha}-1} = t_{r+\xi_\alpha}$; that is, with probability at least $1/2$, $X_q = 0$.

We have just shown that for all $q \in O_{t_r}^-$, $\pr(X_q=1\mid \calH_{t_r}=H_{t_r}) \le 1/2$. Substituting this into \cref{eq:multi-opinion-1} yields
\[
    \pr(R_{\alpha+1}>r+\xi_\alpha \mid \calH_{t_r} = H_{t_r}) \le 
    \frac{\tfrac12\abs{O_{t_r}^-}}{\tfrac56\abs{O_\alpha} - \abs{O(t_r)} + \abs{O_{t_r}^-}}.
\]
By the assumption of the claim we have $r < R_{\alpha+1}$, so $|O(t_r)| > (5/6)^{\alpha+1}\kappa$; by \cref{eq:multi-opinion-00b} it follows that $|O(t_r)| > \tfrac56|O_\alpha|$.
It follows that the right-hand side of the above equation is non-increasing in $\abs{O_{t_r}^-}$. Since $\abs{O_{t_r}^-} \ge \abs{O(t_r)}-\tfrac13\abs{O_\alpha}$ by \cref{eq:multi-opinion-small-1}, it follows that
\[
\pr(R_{\alpha+1}>r+\xi_\alpha \mid \calH_{t_r} = H_{t_r}) \le 
    \frac{\tfrac12(\abs{O(t_r)}-\tfrac13\abs{O_\alpha})}{\tfrac56\abs{O_\alpha} - \tfrac13\abs{O_\alpha}} \le \frac{\tfrac12(\abs{O_\alpha}-\tfrac13\abs{O_\alpha})}{\tfrac12\abs{O_\alpha}} = \frac{2}{3}.
\]
We have therefore proved the Claim.

\medskip\noindent\textbf{Proof of Theorem from Claim:} From the start of metaphase $\alpha$ at phase $r$ (say), the Claim applied to $\alpha$ and $r,r+\xi_\alpha,r+2\xi_\alpha,\dots$ implies that the number of times we have to wait $\xi_\alpha$ phases before the next metaphase is dominated above by a geometric variable with parameter $1/3$.
Thus the expected number of phases before the next metaphase satisfies
\[
    \E\bigl[R_{\alpha+1} - R_\alpha \bigm| \calH_{t_{R_\alpha}}\bigr] \le 3\xi_\alpha \le 6b\Big(\frac{6m}{d_{\min}\abs{O_\alpha}} + \log\big(1+6m/|O_\alpha|\big)\Big).
\]
We now bound this expectation above by exposing $\calH_{t_{R_\alpha}}$. Let $H_{t_{r_\alpha}}$ be a possible value of $\calH_{t_{R_\alpha}}$ which entails $R_{\alpha+1} > r_\alpha$ (or the expectation is zero conditioned on $\calH_{t_{R_\alpha}} = H_{t_{r_\alpha}}$). Then by the definition of $R_\alpha$, conditioned on $\calH_{t_{R_\alpha}} = H_{t_{r_\alpha}}$ we have $(5/6)^{\alpha+1}\kappa < \abs{\mathcal{O}(t_{r_\alpha})} \le (5/6)^\alpha \kappa$ exactly as in \cref{eq:multi-opinion-00b}. 
Thus by the tower property,
\begin{align}\nonumber
    \E\big[R_{\alpha+1} - R_\alpha] &= \E\big[\E[R_{\alpha+1} - R_\alpha \mid \calH_{t_{R_\alpha}}]\big]\\\label{eq:multi-opinion-2}
    &\le 6b\Big(\frac{6m}{(5/6)^{\alpha+1} d_{\min}\kappa} + \log\Big(1+\frac{6m}{(5/6)^{\alpha+1}\kappa}\Big)\Big).
\end{align}

We now exploit \cref{eq:multi-opinion-2} to bound $R_\beta$. Recall our final goal of \cref{eq:multi-opinion-goal}. By Markov's inequality,
\begin{equation}\label{eq:multi-opinion-3}
    \pr(R_\beta \ge r_{\max}) \le \frac{\E(R_\beta)}{r_{\max}},
\end{equation}
where by \cref{eq:multi-opinion-2}, 
\begin{align*}
\E(R_\beta) &= \sum_{\alpha=0}^{\beta-1} \E(R_{\alpha+1}-R_\alpha) \le \frac{36bm}{d_{\min}\kappa}\sum_{\alpha=0}^{\beta-1} \frac{1}{(5/6)^{\alpha+1}} + 6b\sum_{\alpha=0}^{\beta-1}\log\Big(1+\frac{6m}{(5/6)^{\alpha+1}\kappa}\Big).
\end{align*}
Reparameterising the left sum by $\gamma\coloneqq\beta-1-\alpha$ and bounding all terms of the right sum uniformly gives
\begin{align*}
\E(R_\beta) &\le \frac{36bm}{\kappa d_{\min}(5/6)^{\beta}}\sum_{\gamma=0}^{\beta-1} {(5/6)^{\gamma}} + 6b\beta\log\Big(1+\frac{6m}{(5/6)^\beta\kappa}\Big).
\end{align*}
By the definition of $\beta$, $(5/6)^\beta \ge (5/6)/\kappa > 1/(2\kappa)$, and so
\begin{align*}
\E(R_\beta) &\le \frac{72bm}{d_{\min}}\sum_{\gamma=0}^{\beta-1} (5/6)^{\gamma} + 6b\beta\log(1+12m).
\end{align*}
Observe that $\sum_{\gamma=0}^{\beta-1}(5/6)^\gamma < \sum_{\gamma=0}^\infty (5/6)^\gamma = 6$, $\beta \le \log_{6/5} \kappa + 1 \le 10 \log n$ (where we use $n\ge\kappa\ge 2$), and $\log(1+12m) \le \log(13n^2) \le 10\log n$.
Thus
\[
    \E(R_\beta)\le\frac{432bm}{d_{\min}} + 600b\log^2 n.
\]
Observe that $m/d_{\min} \ge n/2 = \omega(\log^2 n)$ as $n\to\infty$, so it follows that for a suitable constant $B'>0$,  $\E(R_\beta) \le B'm/d_{\min}$.
Thus by \cref{eq:multi-opinion-3},
\[
    \pr(R_\beta \ge r_{\max}) \le \frac{B'm/d_{\min}}{r_{\max}} = \frac{B'm/d_{\min}}{\lfloor Bm/d_{\min}\rfloor}.
\]
Take $B = 4B'$. Since no vertex in $\calG$ is isolated, $m/d_{\min} \ge n/2 \ge 1$, so $\lfloor Bm/d_{\min}\rfloor \ge Bm/(2d_{\min})$; thus $\pr(R_\beta \ge r_{\max}) \le 2B'/B = 1/2$. We therefore obtain our goal of \cref{eq:multi-opinion-goal}, as required.
\end{proof}

\UpperBoundf*
\begin{proof}
Let $\Delta'$ be such that $\phi(\calG) = \phi^{\Delta'}(\calG)/\Delta'$ (recalling that such a $\Delta'$ exists by the definition of $\phi(\calG)$), and let $\phi' \coloneqq \phi^{\Delta'}(\calG)$.
Let $\Delta_j=\Delta'$, and $\phi_j=\phi'$ for all $j\geq0$; we will apply \cref{thm:upper-bound} to $\Delta_0,\Delta_1,\dots$ and $\phi_0,\phi_1,\dots$. By the definition of $\Phi(\calG)$, for all $j\geq0$, $\phi^{I_j}\geq\phi_j=\phi'$ holds. Therefore, by \cref{thm:upper-bound}, with probability at least $1/2$ the consensus time is at most $J\Delta'$ with $J=\lceil bm/(d_{min}\phi')\rceil-1$. Thus, with probability at least $1/2$ the consensus time is at most
\[
J\Delta'\leq\frac{bm}{d_{min}\phi'}\Delta'=\frac{bm}{d_{\min}\Phi(\calG)}
\]
The result follows by taking $b'=b$.

\end{proof}

%% file: sections/4_lower_bound.tex
\section{Lower Bound on Absorption Time}

In this section, we prove \cref{thm:lower-bound-voter-absorption}.
In order to do so, we construct a specific temporal graph where
consensus is reached only slowly. 
The idea is to construct a temporal graph $\calG$ such that with high probability, the opinion-zero set of $\calG$ sampled at appropriate stopping times is a blown-up contiguous interval on a length-$z$ cycle. The endpoints of this interval sampled at the same stopping times will then evolve independently according to an unbiased random walk with bounded step size; it is then not hard to show that this interval requires $\Omega(z^2)$ steps to reach the empty interval or the whole cycle, i.e.\ consensus.
Essentially, our temporal graph alternates between $G_0$ and $G_1$, which are disjoint unions of $z/2$ cliques on $2k$ vertices, arranged in a cycle.
See \cref{fig:lower-bound} for an illustration.

\begin{definition}\label{def:lower-bound-construction}\leanok%
	Let $T,k,z \ge 1$ be integers with $z$ even. Then the temporal graph $\calG^{T,k,z}$ is defined as follows.

	Let $V(\calG)$ be a disjoint union of sets $V_1,\dots,V_z$ of $k$ vertices each. For all $i \in [z]$, let $H_i$ be a $2k$-vertex clique spanning $V_i$ and $V_{i+1}$ (modulo $z$). Let
	\[
		G_0 = \bigcup_{i \in [z]\colon i\textrm{ is even}} H_i,\qquad G_1 = \bigcup_{i \in [z]\colon i\textrm{ is odd}} H_i.
	\]
	For all $j \ge 1$, let $I_j = \{(j-1)T,\dots,jT-1\}$. Then $\calG$ is the temporal graph which, throughout each time interval $I_j$, takes value $G_{j\textnormal{ mod } 2}$.
\end{definition}
\begin{remark}\label{rem:lower-bound-basic-properties}\leanok%
	Let $T,k,z \ge 1$ be integers with $z$ even. Then $\calG^{T,k,z}$ is $(2k-1)$-regular and has $n=kz$ vertices.
\end{remark}

We first bound the conductance of any interval of size $3T$ for any possible minority set $S\subseteq V(\calG^{T,k,z})$ below in \cref{lem:clique-lower-bound-conductance}. After that, we bound its absorption time in \cref{lem:lower-bound-absorption}, then choose specific values of $T$, $k$ and $z$ and derive \cref{thm:lower-bound-voter-absorption} as an easy consequence.

\begin{figure}[ht]
	\centering
	\includegraphics[scale=0.42]{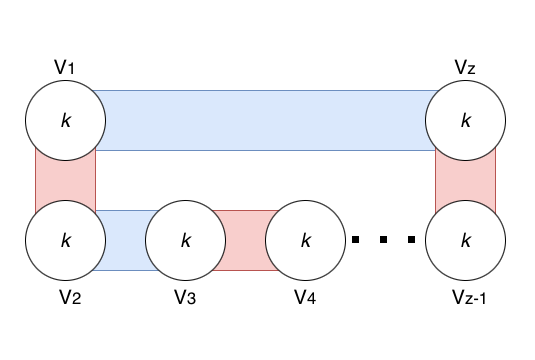}
	\caption{\label{fig:lower-bound}At even intervals the blue edges are active forming a $2k$-vertex clique. At odd intervals the red edges are active forming a $2k$-vertex clique.}
\end{figure}

In order to bound the conductance of any interval of size $3T$ for any half-volume set $S\subseteq V(\calG^{T,k,z})$ below in \cref{lem:clique-lower-bound-conductance}, we will need the following well-known (and trivial) bound.

\begin{lemma}\label{lem:static-clique-conductance}\leanok%
	For all $k \ge 1$, the complete graph on $2k$ vertices has conductance greater than $1/2$.
\end{lemma}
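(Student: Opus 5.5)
The plan is a direct computation from the definition of conductance in~\eqref{eq:volume}. Write $K$ for the complete graph on $2k$ vertices. Every vertex has degree $2k-1$, and $m=\binom{2k}{2}=k(2k-1)$. Fix any $S\subseteq V(K)$ with $0<\Vol(S)\le m$. Let $s\coloneqq\abs{S}$; then $\Vol(S)=s(2k-1)$, and the constraint $\Vol(S)\le m$ is exactly $1\le s\le k$. In a clique every vertex of $S$ is adjacent to every vertex of $\overline{S}$, so $e(S,\overline{S})=s(2k-s)$.

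The key step is simply to take the ratio:
\[
  \frac{e(S,\overline{S})}{\Vol(S)}=\frac{s(2k-s)}{s(2k-1)}=\frac{2k-s}{2k-1}\ge\frac{2k-k}{2k-1}=\frac{k}{2k-1}>\frac12 .
\]
Here the inequality uses $s\le k$, and the final strict inequality is $2k>2k-1$. Minimising over all admissible $S$ gives $\phi(K)\ge k/(2k-1)>1/2$. The minimum is attained, since the set of admissible $S$ is finite and nonempty (any single vertex qualifies because $2k-1\le m$).

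There is no real obstacle here. The only point needing care is the degenerate case $k=1$, where $K$ is a single edge: then $s=1$, $\Vol(S)=1=m$, and $e(S,\overline{S})=1$, so the conductance is $1>1/2$. This agrees with the general formula $k/(2k-1)=1$. I would note this case explicitly so that the reader does not worry about whether the volume constraint can be satisfied.
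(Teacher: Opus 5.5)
Your proposal is correct and is essentially the paper's own argument: in both, the volume constraint reduces to $1 \le \abs{S} \le k$, and the ratio $\abs{S}(2k-\abs{S})/(\abs{S}(2k-1)) = (2k-\abs{S})/(2k-1)$ is then bounded below by $k/(2k-1) > 1/2$. Your explicit treatment of the case $k=1$ and of non-emptiness of the admissible family is a harmless extra that the paper leaves implicit.
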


\begin{proof}
	Let $K_{2k}$ be the complete graph with~$2k$ vertices. For any set $S \subseteq V(K_{2k})$ with $0<\Vol(S) \le \Vol(K_{2k})/2$, we have $0<\abs{S} \le k$, and thus
	\[
		\phi(S)=\frac{e(S,\overline{S})}{\Vol(S)} = \frac{\abs{S}\cdot(2k-\abs{S})}{\abs{S}\cdot(2k-1)} = \frac{2k-\abs{S}}{2k-1}
		\geq \frac{k}{2k-1}>\frac12\,.
	\]
	By definition, $\phi(K_{2k}) = \min_{S}\phi(S) > 1/2$.
\end{proof}

Next, we show that any interval of size $3T$ has conductance at least $T/4z$ for any minority set $S\subseteq V(\calG^{T,k,z})$.
\begin{lemma}\label{lem:clique-lower-bound-conductance}\leanok%
	Let $T,k,z \ge 1$ be integers with $z$ even. Then for any interval~$I\subseteq\N$ of length $3T$ and any set $S\subseteq V(\calG^{T,k,z})$ with $0<\abs{S}\leq n/2$, we have \[\sum_{t\in I} \phi^t(S) = \sum_{t \in I} \frac{e_t(S,\overline{S})}{\Vol(S)} \ge T/4z\,.\]
\end{lemma}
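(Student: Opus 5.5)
The plan is to reduce the claim to a purely combinatorial lower bound on the cut sizes in the two static graphs $G_0$ and $G_1$, and to handle the time dependence by counting how many steps of each parity a window of length $3T$ contains.

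\textbf{Step 1 (time bookkeeping).} The graph $G_t$ is constant on each block $I_j$ of length $T$, alternating between $G_0$ and $G_1$. An interval $I$ of length $3T$ starts with $r\in[0,T)$ steps of some block of parity $p$. It then contains a full block of parity $1-p$ and a full block of parity $p$, and finally $T-r$ steps of parity $1-p$. Hence $I$ contains at least $T$ steps on which $G_t=G_0$ and at least $T$ steps on which $G_t=G_1$. Writing $e_b\coloneqq e_{G_b}(S,\overline S)$, we get
\[
\sum_{t\in I}\frac{e_t(S,\overline S)}{\Vol(S)}\ \ge\ \frac{T\,(e_0+e_1)}{\Vol(S)} .
\]
It therefore suffices to prove $e_0+e_1\ge \Vol(S)/(4z)$ for every $S$ with $0<|S|\le n/2$. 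Note $\Vol(S)=(2k-1)|S|$, since the graph is $(2k-1)$-regular by \cref{rem:lower-bound-basic-properties}.

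\textbf{Step 2 (reduction to block counts).} Let $a_i\coloneqq|S\cap V_i|\in[0,k]$ and $x_i\coloneqq a_i+a_{i+1}$ (indices mod $z$). The clique $H_i$ contributes $x_i(2k-x_i)$ to $e_0+e_1$, and every $H_i$ is active in exactly one of $G_0,G_1$, so
\[
e_0+e_1=\sum_{i\in[z]}x_i(2k-x_i).
\]
Set $y_i\coloneqq\min(x_i,2k-x_i)\in[0,k]$. Then $x_i(2k-x_i)=y_i(2k-y_i)\ge k\,y_i$. Two elementary facts will be used. First, $y_i\ge g(a_i)$, where $g(a)\coloneqq\min(a,k-a)$, because $x_i\ge a_i$ and $2k-x_i\ge k-a_i$. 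Second, if $a_i>k-\varepsilon$ and $a_{i+1}<\varepsilon$, then $x_i\in(k-\varepsilon,k+\varepsilon)$, so $y_i>k-\varepsilon$.

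\textbf{Step 3 (the main combinatorial argument).} Let $M\coloneqq\max_i a_i$. Then $M\ge|S|/z>0$. Since $\sum_i a_i=|S|\le kz/2$, some $a_j\le k/2$. We split into two cases.
\begin{itemize}
\item[(a)] Some block has $g(a_l)\ge M/4$. Then $y_l\ge M/4$ by the first fact.
\item[(b)] Every $a_l$ lies in $[0,M/4)\cup(k-M/4,k]$. The maximal block has $a_i=M\not<M/4$, so $a_i>k-M/4$, which forces $M>4k/5$. Then $a_j\le k/2<k-M/4$, so $a_j<M/4$. Walking around the cycle from $i$ to $j$, there is a consecutive pair with $a_l>k-M/4$ and $a_{l+1}<M/4$ (or the same pattern with the roles of $l$ and $l+1$ swapped). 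By the second fact, $y_l>k-M/4\ge 3k/4$.
\end{itemize}

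\textbf{Step 4 (conclusion and the expected obstacle).} In both cases some single clique gives $x_l(2k-x_l)\ge y_l(2k-y_l)$ with $y_l\ge M/4$.
\begin{itemize}
\item In case (b) this product is at least $(3k/4)(5k/4)$. This comfortably exceeds $\Vol(S)/(4z)\le(2k-1)k/8$.
\item In case (a) the crude bound $k\,y_l\ge kM/4\ge k|S|/(4z)$ falls short of $(2k-1)|S|/(4z)$ by a factor of about $2$.
\end{itemize}
Closing this factor-$2$ gap is where I expect the only real work. I would first try to use both cliques containing the witnessing block: the block $V_l$ with $g(a_l)\ge M/4$ lies in both $H_{l-1}$ and $H_l$, so $y_{l-1}$ and $y_l$ are both at least $g(a_l)$. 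Summing these two contributions gives $2k\cdot M/4=kM/2\ge(2k-1)|S|/(4z)$, using $M\ge|S|/z$. If that works, it is exactly the stated bound $T/(4z)$. The remaining care is only in adjusting the thresholds $M/4$ in Step 3 so that both cases clear $(2k-1)|S|/(4z)$.
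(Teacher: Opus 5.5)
Your proof is correct, and the step you hedged on already closes as written. In case (a), $V_l$ lies in both $H_{l-1}$ and $H_l$, and these are distinct summands of $\sum_{i\in[z]}x_i(2k-x_i)$ (even when $z=2$ and they span the same vertices). Hence
\[
e_0+e_1\ \ge\ k(y_{l-1}+y_l)\ \ge\ kM/2\ \ge\ k\abs{S}/(2z)\ \ge\ (2k-1)\abs{S}/(4z)\ =\ \Vol(S)/(4z),
\]
and no adjustment of the threshold $M/4$ is needed. Step 1 and case (b) are also fine.

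Your route differs from the paper's in two ways.

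\textbf{Reduction to a static quantity.} You first reduce to $e_0+e_1$, using that every window of length $3T$ contains at least $T$ steps of each phase. The paper instead works directly with per-step conductance in one phase at a time.

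\textbf{Choice of case split.} Your split is relative to $M$; the paper splits on whether some block is more than half full. If no block is, then every clique present at any time contains at most $k$ vertices of $S$. \cref{lem:static-clique-conductance} then gives $\phi^t(S)\ge 1/2$ at every single step, so the sum is at least $3T/2$. If some block is, a straddling pair $a_\ell\le k/2<a_{\ell+1}$ yields one clique with at least $k(2k-1)/4$ cut edges throughout a full block where it is active. That is essentially your case (b).

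The absolute threshold $k/2$ is what makes the paper's first case painless. In your notation, if all $a_i\le k/2$ then $y_i=x_i$, so
\[
\sum_i x_i(2k-x_i)\ \ge\ k\sum_i x_i\ =\ 2k\abs{S}\ \ge\ \Vol(S)
\]
directly. It was your relative threshold $M/4$, combined with a single witness block, that created the factor-$2$ gap you then had to close with the two-clique trick.

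In exchange, your reduction makes the argument purely static and symmetric in the two phases. You also avoid citing the clique-conductance lemma, since $y(2k-y)\ge ky$ is the same inequality.
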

\begin{proof}
	Throughout, let $\calG = \calG^{T,k,z}$ and $n=kz=\abs{V(\calG)}$. Let $S \subseteq V(\calG)$ with $0<\abs{S} \le n/2$. Since $\calG$ is regular, this implies $0<\Vol(S) \le \tfrac12\Vol(V(\calG))$.
	We split into two cases depending on $S$.

	\medskip\noindent\textbf{Case 1:} For all $x \in [z]$, we have $\abs{S \cap V_x} \le k/2$. In this case, let $t \in I$ be arbitrary, and let $C_1,\dots,C_r$ be the cliques present in $\calG$ at time $t$. Then for all $i\in [r]$, we have $\abs{S \cap V(C_i)} \le k$. Thus by \cref{lem:static-clique-conductance},
	\begin{align*}
		\phi^t(S)=\frac{e_t(S,\overline{S})}{\Vol(S)}
		 & = \frac{1}{\Vol(S)}\sum_{i=1}^r e(V_x \cap V(C_i), V(C_i) \setminus S) \\
		 & \ge \frac{1}{\Vol(S)}\sum_{i=1}^r \frac{\Vol(S \cap V(C_i))}{2}
		= \frac{1}{2}.
	\end{align*}
	Thus, since $I$ has length $3T$,
	\begin{equation}\label{eq:lower-bound-conductance-case-1}%
		\sum_{t \in I} \frac{e_t(S,\overline{S})}{\Vol(S)} \ge \frac{3T}{2} > \frac{T}{4z}.
	\end{equation}

	\medskip\noindent\textbf{Case 2:} There exists $x \in [z]$ with $\abs{S \cap V_x} > k/2$. Since $\sum_{y \in [z]}|S \cap V_y| = |S| \le kz/2$, there also exists $y \in [z]$ with $\abs{S \cap V_y} \le k/2$; hence there exists $\ell \in [z]$ with $\abs{S \cap V_\ell} \le k/2$ and $\abs{S \cap V_{\ell+1}} > k/2$ (modulo $z$).
Since $I$ is of length $3T$ and the intervals $I_j$ are of length $T$, $I$ must contain two successive intervals $I_j$ and $I_{j+1}$; throughout one of these intervals, say $I'$, the clique $C$ spanned by $V_\ell$ and $V_{\ell+1}$ is a component of $\calG$. Let $S' = S \cap V(C)$; then for all $t \in I'$, we have
	\begin{equation*}
		e_t(S,\overline{S}) \ge e_t(S', V(C) \setminus S').
	\end{equation*}
	Observe that since $\abs{S \cap V_\ell} \le k/2$ and $\abs{S \cap V_{\ell+1}} > k/2$, we have $k/2 \le \abs{S'} \le 3k/2$. Let $X$ be whichever of $S'$ and $V(C) \setminus S'$ has size at most $k$, so that $e_t(S', V(C) \setminus S') = e_t(X, V(C) \setminus X)$; then by \cref{lem:static-clique-conductance}, it follows that
	\[
		e_t(S,\overline{S}) \ge \Vol(X)/2.
	\]
	Again since $k/2 \le \abs{S'} \le 3k/2$, we have $\Vol(X) \ge k(2k-1)/2$, and $\Vol(S)$ is at most $\Vol(V(\calG)) = n(2k-1)$; thus
	\begin{equation}\label{eq:lower-bound-conductance-case-2}%
		\sum_{t\in I}\frac{e_t(S,\ov{S})}{\Vol(S)} \ge \sum_{t \in I'}\frac{k(2k-1)/4}{n(2k-1)} \ge \frac{Tk}{4n} = \frac{T}{4z}.
    \end{equation}

	The result now follows immediately on combining \cref{eq:lower-bound-conductance-case-1} from Case 1 and \cref{eq:lower-bound-conductance-case-2} from Case 2.
\end{proof}

Our next goal is to bound the likely absorption time of the voter model on $\calG^{T,k,z}$ below in \cref{lem:lower-bound-absorption}. To do so, we first re-prove a (well-known) upper bound on absorption time for the clique; this will allow us to show that within an interval $I_j$ with $j$ odd (say), in every clique component of $G_1$, the voter model reaches consensus before the end of $I_j$ with high probability.
\begin{lemma}\label{lem:static-clique-voter}\leanok%
	There exists a positive integer $\Gamma$ such that, for all positive integers~$k$ and~$\alpha$, the following holds: From any initial state, with probability at least $1 - 2^{-\alpha}$, the standard voter model on the (static) clique on $2k$ vertices reaches consensus within $\Gamma\alpha k$ time.
\end{lemma}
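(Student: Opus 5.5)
The plan is to reduce to a single ``block'' of length $\Gamma k$ and show that from \emph{any} initial state the clique reaches consensus within that block with probability at least $1/2$. Then we iterate using the Markov property: the process is time-homogeneous on a static graph, and consensus is absorbing. So the probability of not having absorbed after $\alpha$ consecutive blocks is at most $2^{-\alpha}$, which gives the claim with time $\Gamma\alpha k$.

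For the single block, I would apply the two-opinion machinery already proved, namely \cref{thm:voter-absorb-two-opinion}. The key data for $K_{2k}$ as a static temporal graph are as follows:
\begin{itemize}
\item it has fixed degrees $d=2k-1$;
\item it has $m=k(2k-1)$;
\item by \cref{lem:static-clique-conductance} it has conductance greater than $1/2$, so we may take $\Delta_j=1$ and $\phi_j=1/2$ for all $j$.
\end{itemize}
For any initial minority set $s_0$ we have $\Vol(s_0)\le m$, so $\Vol(s_0)/d_{\min}\le k$ and $\log(1+\Vol(s_0))\le\log(1+2k^2)=O(k)$. Hence $J\le 2b(k+\log(1+2k^2))+1\le \Gamma' k$ for a constant $\Gamma'$. \Cref{thm:voter-absorb-two-opinion} then gives consensus by time $J+1\le\Gamma k$ with probability at least $1/2$. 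This handles the two-opinion case, which is all that is used later in \cref{lem:lower-bound-absorption}.

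If the statement is meant for an arbitrary number of opinions, I would instead invoke \cref{thm:upper-bound} with the same choice $\Delta_j=1$, $\phi_j=1/2$. It yields consensus within $J+1$ steps where $\sum_{\ell\le J}\phi_\ell\ge bm/d_{\min}=bk$, i.e.\ $J+1\le 2bk+1\le\Gamma k$, again with probability at least $1/2$ from an arbitrary initial state. The iteration step is identical.

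The main obstacle is the iteration itself: it needs the per-block bound to hold \emph{uniformly} over the starting state at the beginning of each block. That is exactly why both cited theorems are stated for an arbitrary initial state, and why the graph being static matters, since the bound does not depend on the starting time. Formally, let $E_i$ be the event of no consensus by time $i\Gamma k$. By the Markov property at time $i\Gamma k$, we get $\pr(E_{i+1}\mid\calH_{i\Gamma k})\le 1/2$ on $E_i$, so $\pr(E_\alpha)\le 2^{-\alpha}$ by induction. The only care needed is taking $\Gamma$ as an integer large enough to absorb the constants (including the $+1$ and the $\log$ term, using $\log(1+2k^2)\le 2k$ for $k\ge1$).
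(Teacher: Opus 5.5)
Your proof is correct and follows essentially the same route as the paper: use \cref{lem:static-clique-conductance} to bound the clique's conductance by $1/2$, then apply a conductance-based upper bound to get consensus within $O(k)$ steps with probability at least $1/2$ from any initial state, and iterate $\alpha$ times via the Markov property. The only difference is the cited bound. The paper invokes Theorem~1.1(i) of Berenbrink et al.\ directly, whereas you use the paper's own \cref{thm:voter-absorb-two-opinion} or \cref{thm:upper-bound} with unit windows, which the paper notes is the same bound. Your explicit treatment of the iteration and of the $\log(1+\Vol(s_0))$ term adds rigour the paper's one-line ``apply this fact $\alpha$ times'' leaves implicit.
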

\begin{proof}
	By \cref{lem:static-clique-conductance}, the clique $H$ on $2k$ vertices has conductance at least $1/2$. It follows, for example, from \cite[Theorem~1.1(i)]{DBLP:conf/icalp/BerenbrinkGKM16} that there exists $\Gamma'\geq1$ such that with probability at least $1/2$, from any initial state, the standard voter model on $H$ absorbs within time at most $\Gamma'\abs{V(H)}/\phi(H) \leq 4\Gamma' k\leq \Gamma k$. We apply this fact~$\alpha$ times to obtain the result.
\end{proof}

\begin{lemma}\label{lem:lower-bound-absorption}\leanok%
	There exists $\Gamma\geq1$ such that the following holds. Let $T,k,z \ge 1$ be integers with $z$ even and $z \ge 20$, and suppose $T \ge 10\Gamma k\log z$. Then there exists $s_0 \subseteq V(\calG^{T,k,z})$ such that with probability at least $1/2$, the standard voter model on $\calG^{T,k,z}$ with initial minority set $s_0$ does not absorb within time $Tz^2/128$.
\end{lemma}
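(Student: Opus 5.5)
We take the initial opinion-$0$ set to be a contiguous blown-up arc, $A_0 = V_1\cup V_2\cup\dots\cup V_{z/2}$, so that $s_0=A_0$ has exactly half the volume. The aim is to show that, sampled at the ends of the intervals $I_j$, the opinion-$0$ set stays with high probability of the form $V_a\cup V_{a+1}\cup\dots\cup V_b$. Moreover, each of the two endpoints moves by at most one block per interval, and it does so in an unbiased way.

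\textbf{Step 1 (good event per interval).} Fix an interval $I_j$ and suppose that at its start $A$ is a union of consecutive blocks $V_a\cup\dots\cup V_b$. During $I_j$ the graph is a disjoint union of cliques $H_i$ on $V_i\cup V_{i+1}$, all of one parity. Any clique lying entirely inside $A$ or inside $\overline{A}$ is frozen. Only the (at most two) cliques straddling an endpoint can change, namely those containing $V_b,V_{b+1}$ or $V_{a-1},V_a$. If the endpoint pairing does not match the current parity, then nothing straddles and nothing changes. Each active clique evolves as an independent voter model on $K_{2k}$. By \cref{lem:static-clique-voter} with $\alpha=\lceil 3\log_2 z\rceil$, each reaches consensus within $\Gamma\alpha k\le T$ time (adjusting $\Gamma$ so that $10\Gamma k\log z\ge \Gamma\alpha k$) with probability at least $1-z^{-3}$. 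On that event the end-of-interval set is again a union of consecutive blocks, with each endpoint shifted by $-1$, $0$ or $+1$. By \cref{lem:volume_A_martingale} applied to the clique (whose volume is a martingale bounded in $[0,\Vol(H_i)]$ and absorbs), together with optional stopping, the probability that the clique ends all-$0$ equals its initial $0$-fraction, which is $1/2$. Hence each endpoint displacement has mean zero conditionally on the past. Symmetry between opinions also gives this directly.

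\textbf{Step 2 (random-walk argument).} Let $L_j$ be the number of blocks of opinion $0$ at the end of $I_j$. Let $\calB$ be the event that some clique fails to absorb in time during the first $N\coloneqq z^2/128$ intervals. A union bound gives $\pr(\calB)\le 2N z^{-3}\le 1/z\le 1/20$. Couple $(L_j)$ with a martingale $M_j$ that agrees with it off $\calB$ and has increments bounded by $2$, which we get by freezing after any failure. Consensus before time $TN$ requires $L_j\in\{0,z\}$ for some $j\le N$, that is, $|M_j-M_0|\ge z/2$. By Doob's maximal inequality, or by Azuma (\cref{lem:azuma-inequality}) applied to the stopped martingale, this has probability at most $2\exp(-(z/2)^2/(8N))=2e^{-4}<0.05$. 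Within an interval the set is not constant unless it was already at $0$ or $z$ at some sampling time, or it absorbs inside the final interval; that last case can be covered by stopping at $N+1$ intervals. Altogether, with probability at least $1/2$ there is no absorption within time $Tz^2/128$.

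\textbf{Main obstacle.} The delicate part is Step 1: verifying that an interval always begins with at most two straddling cliques of the currently active parity. This needs the block-structure invariant, together with careful handling of the partial first interval and of blocks whose endpoint lies on a clique of the wrong parity. We also need the bookkeeping that turns a failure of the good event into a harmless bounded martingale increment.
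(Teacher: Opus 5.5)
Your proposal is correct and takes essentially the same approach as the paper. It uses the same initial arc $V_1\cup\dots\cup V_{z/2}$, the same union bound via the clique lemma with $\alpha=\lceil 3\log_2 z\rceil$ to preserve the block structure, and the same Azuma argument on a frozen block-count martingale with increments at most $2$ (mean zero by opinion symmetry, as the paper argues). The only differences are cosmetic: the maximal inequality is unnecessary because consensus is absorbing, so bounding the count at the last sampling time suffices; you should take $N=\lceil z^2/128\rceil$; and the "obstacle" you flag is immediate, since the boundary between $V_b$ and $V_{b+1}$ is crossed only by $H_b$, which is either inactive or a half-half clique.
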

\begin{proof}
	Throughout, let $\calG = \calG^{T,k,z}$. Let $\Gamma$ be the constant of \cref{lem:static-clique-voter}. Let $s_0 = V_1 \cup \dots \cup V_{z/2}$. In the voter model $(A_t)$ from initial state $s_0$, let $\tau$ be the earliest time $t$ divisible by $T$ at which $A_t$ is not precisely a set of the form $V_i \cup V_{i+1} \cup \dots \cup V_j$ (modulo $z$); thus $\tau$ is the earliest start of an interval $I_j$ in which the opinions do not form two contiguous intervals in $V_1,\dots,V_z$. For all $i \ge 0$, define
	\[
		W_i = \begin{cases}
		\abs{\{j \in [z]\colon V_j \subseteq A_{iT}\}} & \mbox{ if $iT < \tau$},\\
        W_{i-1} & \mbox{ otherwise.}
        \end{cases}
        \]

	Thus $W_0 = z/2$, and if $iT < \tau$ and $W_i \notin \{0,z\}$ then $A_{iT} \notin \{\emptyset, V(\calG)\}$ (and so the voter model has not absorbed by time $iT$).

	Our proof will now proceed in two parts. Let $\ell = \lceil z^2/128 \rceil$. First, we claim that $(W_i)$ is a martingale, and so by applying Azuma's inequality it is likely that $W_\ell \notin \{0,z\}$. Second, we argue it is likely that $\tau > \ell T$. Put together with a union bound, the two claims will immediately imply the result.

	\medskip\noindent\textbf{Claim 1: With probability at least $7/10$, $W_\ell \notin \{0,z\}$.} Let $\calM_i = (A_0, A_T,\allowbreak \dots, A_{iT})$. We first establish that $(W_i)$ is a martingale with filtration $\sigma(\calM_i)$. Fix $i \ge 0$, and let $M_i$ be a possible value of $\calM_i$. If $\calM_i = M_i$ determines $iT \ge \tau$, then $\E(W_{i+1} \mid \calM_i = M_i) = W_i$ as required, so suppose not. Let $a_{iT}$ be the value of $A_{iT}$ determined by $M_i$. Throughout the interval $I_{i+1}$, $\calG$ is constant and either one or two cliques in $\calG$ contain vertices of both opinions. Moreover, in each such clique, exactly half of the vertices have each opinion in $a_{iT}$. Thus by symmetry, each opinion is equally likely to dominate in each clique within the interval $I_{i+1}$. It follows that
	\begin{align*}
		\abs{W_{i+1} - W_i}                       & \le 2,                                                \\
		\pr(W_{i+1}=W_i+1 \mid \calM_i = M_i) & = \pr(W_{i+1}=W_i-1 \mid \calM_i = M_i), \mbox{ and } \\
		\pr(W_{i+1}=W_i+2 \mid \calM_i = M_i) & = \pr(W_{i+1}=W_i-2 \mid \calM_i = M_i).
	\end{align*}
	Thus once again, $\E(W_{i+1} \mid \calM_i = M_i) = W_i$. We conclude that $(W_i)$ is indeed a martingale with filtration $\sigma(\calM_i)$.

	We now apply Azuma's inequality (\cref{lem:azuma-inequality}) to $(W_i)$, taking $c_i=2$ for all $i$ and taking the $T$ of \cref{lem:azuma-inequality} to be our $z/2$. Since $z \ge 20$, $\ell = \lceil z^2/128 \rceil \le z^2/64$; thus we obtain
\begin{equation}\label{eq:lower-bound-absorb}%
		\pr(W_\ell \in \{0,z\}) = \pr(\abs{W_\ell - W_0} \ge z/2) \le 2e^{-(z/2)^2/8\ell} \le 2/e^2 < 3/10.
	\end{equation}

	\medskip\noindent\textbf{Claim 2: With probability at least $4/5$, $\tau > \ell T$.} We will apply a union bound over all intervals $I_j$ with $j \in [\ell]$. To this end, fix $i \ge 0$, let $M_i$ be a possible value of $\calM_i$, and suppose $M_i$ determines $iT < \tau$. Then as in Claim 1, throughout the interval $I_{i+1}$, $\calG$ is constant and either one or two cliques in $\calG$ contain vertices of both opinions. By \cref{lem:static-clique-voter} applied with $\alpha = \lceil 3\log_2 z\rceil < 10\log z$, each clique absorbs within $I_{i+1}$ with probability at least $1 - 2^{-\alpha} \ge 1 - 1/z^3$. Taking a union bound over all (one or two) cliques in $I_{i+1}$ with vertices of both opinions, we see that $\pr(\tau > (i+1)T \mid \calM_i=M_i) \ge 1 - 2/z^3$. Taking a union bound over $i \in [\ell]$, it follows that
\begin{equation}\label{eq:lower-bound-absorb-2}%
		\pr(\tau \le \ell T) \le 2\ell/z^3 < 1/5.
	\end{equation}

	The result now follows from a union bound over \cref{eq:lower-bound-absorb} and \cref{eq:lower-bound-absorb-2}, together with the fact that if $\ell T < \tau$ and $W_\ell \notin \{0,z\}$ then $A_{\ell T} \notin \{\emptyset, V(\calG)\}$:
    \[
        \pr(A_{\ell T} \in \{\emptyset, V(\calG)\}) \le \pr(\ell T \ge \tau) + \pr(W_\ell \in \{0,z\}) \le 1/2.
    \]
\end{proof}

We are now ready to prove the main result of the section.

\begin{theorem}\label{thm:lower-bound-intervals}\leanok%
	Let $d(x)\colon\N\rightarrow\N$ such that, for all $x \in \N$, $1\leq d(x)\leq  \sqrt{x}/\log x$ and $d(x)$ is odd. Then there exist a constant $c$, functions $\phi\colon \N\to\R$ and $\Delta\colon\N\to\N$ such that $\phi(x) = \Theta(d(x)/x)$ and $\Delta(x) = \Theta(x/d(x))$ as $x \to \infty$ and the following holds. For infinitely many values of $x$, there exist an $n$-vertex temporal graph $\calG$ and $s_0 \subseteq V(\calG)$ with the following properties.
    \begin{enumerate}[(i)]
        \item $\calG$ is $d(x)$-regular and $x \le n \le 2x$.
        \item For all time intervals $I$ of length $\Delta(x)$ and all sets $S \subseteq V(\calG)$ with $0 < \Vol(S) \le \Vol(V(\calG))/2$, $\tfrac{1}{\abs{I}}\sum_{t \in I} \phi^t(S) \ge \phi(x)$. 
        \item There exists $s_0 \subseteq V(\calG)$ such that with probability at least $1/2$, the standard voter model on $\calG$ with initial minority set $s_0$ does not absorb within time $cx\Delta(x)/(d(x)\phi(x))$.
    \end{enumerate}
\end{theorem}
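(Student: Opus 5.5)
We instantiate the construction $\calG^{T,k,z}$ of \cref{def:lower-bound-construction} with parameters chosen from $x$ and $d=d(x)$. Set $k\coloneqq (d+1)/2$, so that $\calG^{T,k,z}$ is $(2k-1)=d$-regular by \cref{rem:lower-bound-basic-properties}. Take $z$ to be the smallest even integer with $kz\ge x$, so that $n=kz$. Then $x\le n\le x+2k\le 2x$, using $k\le d\le\sqrt{x}/\log x$, and $z=\Theta(x/d)$. Let $\Gamma$ be the constant of \cref{lem:lower-bound-absorption}, and set $T\coloneqq\lceil 10\Gamma k\log z\rceil$, so that the hypothesis $T\ge 10\Gamma k\log z$ holds. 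The hypothesis $z\ge 20$ holds for all large $x$, since $z\ge x/k\ge\sqrt{x}\log x$. This gives (i).

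For (ii) we set $\Delta(x)\coloneqq 3T$ and apply \cref{lem:clique-lower-bound-conductance}. Since $\calG$ is regular, $\Vol(S)\le\Vol(V)/2$ is the same as $|S|\le n/2$. The lemma gives $\frac{1}{3T}\sum_{t\in I}\phi^t(S)\ge 1/(12z)$, so we set $\phi(x)\coloneqq 1/(12z)$. The main obstacle is the asymptotics. Here $\phi(x)=\Theta(1/z)=\Theta(d/x)$, as required. However, $\Delta(x)=3T=\Theta(k\log z)=\Theta(d\log x)$, which is not $\Theta(x/d)$.

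To fix this, I would instead take $T$ as large as the statement requires, namely $T\coloneqq\lceil x/d\rceil$. The hypothesis $T\ge 10\Gamma k\log z$ then needs $x/d\gtrsim d\log x$, that is, $d\lesssim\sqrt{x/\log x}$. This follows from $d\le\sqrt{x}/\log x$ for large $x$, absorbing constants by restricting to large $x$; the phrase ``infinitely many $x$'' allows this. With this choice, $\Delta(x)=3T=\Theta(x/d)$ and $\phi(x)=1/(12z)=\Theta(d/x)$, and (ii) follows verbatim from \cref{lem:clique-lower-bound-conductance}. The lemma requires $I$ to have length exactly $3T$, which matches the choice of $\Delta$.

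For (iii), \cref{lem:lower-bound-absorption} provides $s_0$ such that, with probability at least $1/2$, there is no absorption within time $Tz^2/128$. We therefore need $cx\Delta(x)/(d\phi(x))\le Tz^2/128$. The left side is $c\cdot x\cdot 3T\cdot 12z/d=36c\,Tzx/d$. Since $z\ge x/k\ge x/d$, this is at most $36c\,Tz^2$. Choosing $c=1/(36\cdot128)$ gives the required inequality. Finally, $s_0=V_1\cup\dots\cup V_{z/2}$ has volume exactly half the total, so it is a valid minority set.
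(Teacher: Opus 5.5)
Your proposal is correct and follows the paper's proof. You use the same construction $\calG^{T,k,z}$ with $k=(d+1)/2$, $\phi(x)=1/(12z)$ and $\Delta(x)=3T$, and the same two lemmas, \cref{lem:clique-lower-bound-conductance} and \cref{lem:lower-bound-absorption}; your $T=\lceil x/d\rceil$ replaces the paper's $T=4z$, which changes nothing since both are $\Theta(x/d)$ and both satisfy $T\ge 10\Gamma k\log z$ for large $x$ because $d\le\sqrt{x}/\log x$. Taking $z$ to be the least even integer with $kz\ge x$ is slightly more careful than the paper's $z=\lceil x/k\rceil$, which need not be even as \cref{def:lower-bound-construction} requires.
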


\begin{proof}
	Let $x$ be suitably large (to be determined later), let $k \coloneqq (d(x)+1)/2$, let $z \coloneqq \lceil x/k\rceil$, and let $n\coloneqq kz$. Observe that $x \le n \le 2x$ as required by (i) since $d(x) \le \sqrt{x}/\log x$ and $x$ is large. Let $T\coloneqq 4z$. Let $\calG \coloneqq \calG^{T,k,z}$, and observe that $\calG$ is a $d(n)$-regular graph as required by (i).

	Let $\phi(x)\coloneqq \frac{1}{12z}$ and $\Delta(x)\coloneqq 3T$, observing that $\phi(x) = \Theta(d(x)/x)$ and $\Delta(x) = \Theta(x/d(x))$ as $x\to\infty$ as required in the statement. By \cref{lem:clique-lower-bound-conductance}, for all half-volume sets $S\subseteq V(\calG)$,
	\begin{equation*}
		\frac{1}{3T}\sum_{t\in I_i}\phi^t(S)\geq 1/(12z)=\phi(x),
	\end{equation*}
    as required by (ii).

	Let $\Gamma>0$ be as in \cref{lem:lower-bound-absorption}. Observe that since $n = \Theta(x)$, $k = \Theta(d(x))$, and $d(x) = o(\sqrt{x/\log x})$, we have $z = n/k = \omega(\sqrt{x\log x})$; thus since $x$ is sufficiently large, we have $z \ge 20$ and $T = 4z \ge 20\Gamma k\log x \ge 10\Gamma k\log z$ as required by \cref{lem:lower-bound-absorption}. Hence, \cref{lem:lower-bound-absorption} applies to $\calG$; let $s_0 \subseteq V(\calG)$ be the minority set from the lemma statement, and observe that with probability at least $1/2$, the standard voter model on $\calG$ with initial state $s_0$ does not absorb within time $Tz^2/128$.
	It remains to prove that $Tz^2/128 \ge cn\Delta(x)/(d(x)\phi(x))$ for a suitable choice of $c$. Since $\phi(x)=1/(12z)$ we have that $z=1/(12\phi(x))$, and $T = \Delta(x)/3$; thus $Tz^2/128 \geq \Delta(x)z/(10^4\phi(x))$. Moreover, $z=n/k\geq x/d(x)$, so it follows that
	   \[
		\frac{Tz^2}{128} \ge \frac{\Delta(x) x}{10^4\phi(x)d(x)}.
        \]
	On taking $c=1/10^4$, (iii) follows, and thereby so does the result.
\end{proof}

\LowerBoundf*
\begin{proof}
    We apply \cref{thm:lower-bound-intervals}, taking $d(x) = d$ for all $x$. Take $c'$, $\phi'$ and $\Delta'$ to be the $c$, $\phi$ and $\Delta$ of \cref{thm:lower-bound-intervals}, let $x$ be suitably large, let $\calG$ be a temporal graph satisfying \cref{thm:lower-bound-intervals}(i)--(iii) for $x$, and let $n\coloneqq|V(\calG)|$ so that $x \le n \le 2x$ and in particular $n$ is suitably large. We define $\phi(n) \coloneqq \phi'(x)$, $\Delta(n) \coloneqq \Delta'(x)$, and $c \coloneqq 2c'$.
    
    Observe that since $n = \Theta(x)$, $\phi'(x) = \Theta(d(x)/x)$ and $\Delta'(x) = \Theta(x/d(x))$, $\phi$ and $\Delta$ have the correct growth rates. Moreover: $\calG$ is $d$-regular by \cref{thm:lower-bound-intervals}(i); (i) of the statement is immediate from \cref{thm:lower-bound-intervals}(ii); and (ii) of the statement follows from \cref{thm:lower-bound-intervals}(iii) together with $x \ge n/2$ and $c=2c'$.
\end{proof}

\LowerBounds*

\begin{proof}
	Let $\phi(n)$ and $\Delta(n)$ be as in \cref{thm:intro-lower-bound-intervals}, then by the definition of conductance it holds that $\Phi(\calG)\geq \phi(n)/\Delta(n)$, therefore

	\[
		cn\Delta/(d(n)\phi(n))\geq cn/(d(n)\Phi(\calG))
	\]
	The result follows by choosing $d=3$, and $C=c/3$.
\end{proof}